\definecolor{darkblue}{rgb}{0,0, .55}
\providecommand{\U}[1]{\protect\rule{.1in}{.1in}}
\renewcommand{\baselinestretch}{0.95}
\newtheorem{theorem}{Theorem}
\newtheorem{theoremL}{TheoremL}
\newtheorem{lemma}[theoremL]{Lemma}
\newtheorem{proposition}[theorem]{Proposition}
\newtheorem{assumption}{Assumption}
\newenvironment{proof}[1][Proof]{\noindent\textbf{#1.} }{\ \rule{0.5em}{0.5em}}
\newcommand\fnote[1]{\captionsetup{font=footnotesize}\caption*{#1}}
\def\citetpos#1{\citeauthor{#1}'s (\citeyear{#1})}
\newcommand{\OA}{Appendix }
\begin{document}

\title{Testing the effectiveness of unconventional monetary policy in Japan
and the United States\thanks{We appreciate comments from three anonymous referees, Jes\'{u}s Fern\'{a}ndez-Villaverde, Jordi Gal\'{i}, Simon Gilchrist, Fumio Hayashi, Hibiki Ichiue, Junko Koeda, Davide Porcellacchia, Mototsugu Shintani, and Nao Sudo, seminar and conference participants at CIGS End of the Year Conference 2019, ECB-BoJ-BoE Joint Research Workshop, Econometric Society World Congress 2020, SED 2021, Bank of Japan, University of Pavia, Kobe University, University of Pompeu Fabra, Osaka University, and University of Oxford. Mavroeidis acknowledges the financial support of the European Research Council via Consolidator grant
number 647152. Views expressed in the paper are those of the authors and do not necessarily reflect the official views of the Bank of Japan.}}
\author{Daisuke Ikeda\thanks{Bank of Japan, \texttt{daisuke.ikeda@boj.or.jp.}}
\and Shangshang Li\thanks{University of Liverpool and University of Oxford:
\texttt{shangshang.li@economics.ox.ac.uk}}
\and Sophocles Mavroeidis \thanks{University of Oxford and INET:
\texttt{sophocles.mavroeidis@economics.ox.ac.uk.}}
\and Francesco Zanetti \thanks{University of Oxford:
\texttt{francesco.zanetti@economics.ox.ac.uk.}}}
\date{\today}
\maketitle

\begin{abstract}
Unconventional monetary policy (UMP) may make the effective lower bound (ELB) on the short-term interest rate irrelevant. We develop a theoretical model that underpins our empirical test of this `irrelevance hypothesis,' based on the simple idea that under the hypothesis, the short rate can be excluded in any empirical model that accounts for alternative measures of monetary policy. We test the hypothesis for Japan and the United States using a structural vector autoregressive model with the ELB. We firmly reject the hypothesis but find that UMP has had strong delayed effects.

\vspace{1.5cm}

\noindent
JEL Classification: E52, E58.

\noindent
Keywords: Effective lower bound, unconventional monetary policy, structural VAR.

\end{abstract}

\maketitle

\setcounter{page}{0} \thispagestyle{empty}
\setlength{\baselineskip}{20pt}\newpage

\section{Introduction}  

Adjustments in the overnight nominal interest rate have been the primary tool
for the implementation of monetary policy since the early 1980s. In recent
years, however, the short-term nominal interest rate reached an effective lower
bound (ELB) in several countries, making the standard policy tool \emph{de
facto} ineffective. Two prominent examples are Japan whose policy rate has been at the ELB
for most of the past quarter century, and the United States that
reached the ELB in the aftermath of the global financial crisis of 2007--2008.
The central banks in these countries countervailed the inapplicability of the
standard policy tool by embarking on unconventional monetary policy (UMP) that involves
the purchases of long-term government bonds and the use of forward guidance to
signal future policy action.\footnote{See \cite{Christensen_Rudebusch_2012},
 \cite{liu2019changing}, \cite{Campbell_etal2020}, and \cite{Carlson_etal2020} for the U.S., and \cite{Ugai2007}, and \cite{BoJ2016}
for Japan. \cite{Ueda2012} provides a comparison of monetary policy between
the U.S. and Japan.}

The effectiveness of UMP is a central issue for policymakers. One view is that the ELB restricts the effectiveness of monetary policy, thus representing an important constraint on what monetary policy can achieve, as argued by \cite{EggertssonWoodford2003}, \cite{Gust_AER17} and \cite{Eberly_etal2020}.
An alternative view is that UMP can affect long-term interest rates so significantly that UMP has been fully effective in circumventing the ELB
constraint, as argued by \cite{SwansonWilliams2014} and
\cite{DebortoliGaliGambetti2019}.
This latter view has been termed as the ELB `irrelevance hypothesis.'
The issue of the effectiveness of UMP has gained relevance since the ELB is likely to bind more often in the future with historically low levels of the longer-run natural rate of interest \citep{Fernando_etal2020}.

This paper studies the irrelevance hypothesis of the ELB both theoretically and empirically.
It develops a dynamic stochastic general equilibrium (DSGE) model with UMP. To the best of our knowledge, this is the first study that analytically characterizes the irrelevance hypothesis in a DSGE model.
The model provides the theoretical
underpinning to our novel empirical tests, and it shows that censored and kinked vector autoregressive models (VARs) are a suitable empirical framework to test the irrelevance
hypothesis.
Our empirical results show that the hypothesis is strongly rejected for both Japan and the U.S.
Despite the rejection, the estimated impulse responses to a monetary policy shock indicate strong delayed effects of UMP in each country.

\label{para:idea of IH}
The idea of the irrelevance hypothesis is that observable properties of macroeconomic variables such as their dynamics and volatilities remain unchanged when the economy moves in and out of an ELB regime, for otherwise the ELB is empirically relevant. 
This idea can be applied to any theoretical model with the ELB. This insight allows us to test the irrelevance hypothesis using reduced-form VAR models and therefore, our empirical results about this hypothesis hold for any theoretical model with the ELB as long as it has a VAR representation.
The purpose of our DSGE model is to formalize this idea by showing that such a theoretical model actually exists, and to guide our empirical approach to identifying the effectiveness of UMP when the irrelevance hypothesis does not hold.

In our model, UMP consists of (i) quantitative easing (QE) implemented by long-term government bond purchases that directly affect long-term government bond yields, and (ii) forward guidance (FG) under which the central bank commits to keeping short-term interest rates low in the future.
A key variable in the model is
the `shadow rate,' which we label $i_{t}^{\ast}$ and it is defined as the short-term interest rate that the central bank would set if there were no ELB. The short rate $i_{t}$ is given by $i_{t}=\max\{i_{t}^{\ast},\underline{i}\}$, where $\underline{i}$ is the ELB. The shadow rate is equal to the short rate in a non-ELB regime, but it is unobservable in an ELB regime where the policy rate is constrained at the ELB. The shadow rate can be negative and interpreted as the indicator of the desired stance of monetary policy in terms of the short rate.

In the model the central bank resorts to UMP in an ELB regime by using the shadow
rate for the guidance of its policy stance, as in the case of a non-ELB regime where the short rate is equal to the shadow rate. The model shows that UMP entails wide degrees of effectiveness, including the irrelevance of the ELB in which UMP retains the same effectiveness as the conventional policy that adjusts the short rate as if there were no ELB. We show that under the irrelevance of the ELB, the log-linearized DSGE model can be written in terms of inflation, the output-gap, the long-term interest rate or the shadow rate, and it retains the same VAR representation for both ELB and
non-ELB regimes, thus providing the theoretical foundation to our empirical tests.

Motivated by our theoretical results, we use the censored and kinked structural VAR model developed by \cite{Mavroeidis2019} to test the irrelevance hypothesis. Our DSGE model
shows that a direct appraisal of the irrelevance hypothesis is to test whether a short-term interest rate, which is subject to the ELB, can be excluded from the VARs that include alternative measures of monetary policy that are not subject to the ELB, such as a long-term interest rate and the shadow rate. The exclusion of the short rate in VARs that include the long rate is a novel empirical test developed in this paper. The exclusion of the short rate in VARs that include the shadow rate, which is proposed by \cite{Mavroeidis2019}, is our second test for robustness.

Our theoretical model implies that under the irrelevance hypothesis there is no attenuation in the response of the long rate to shocks when short rates are at the ELB, a focal point for the assessment of the irrelevance hypothesis in \cite{SwansonWilliams2014} and \cite{SG_LS_EZ_2015AEJM}. We study the attenuation effect by focusing on the different impact of the monetary policy shock on the long rate in the non-ELB and ELB regimes. Our monetary policy shock is a traditional monetary policy shock -- a shock to the short rate -- in a non-ELB regime while in an ELB regime it is a UMP shock -- a shock to the shadow rate. Motivated by our theory, we use the VAR model to characterize analytically a formal test for the attenuation effect in the response of the long rate to a shock to monetary policy at the ELB.

We conduct the tests of the irrelevance hypothesis of the ELB on postwar data for Japan and the U.S. We consider several different VAR specifications, varying the lag order and the estimation sample (to account for structural change), and
using alternative measures of monetary policy such as government bond yields with different maturities. In all cases, the tests overwhelmingly and consistently
reject the hypothesis that the ELB has been empirically irrelevant for both economies. Our conclusion is therefore fairly robust:\ the
ELB does represent a constraint on what monetary policy can achieve in those
economies.\footnote{This evidence corroborates \cite{Bernanke2020} who
 claims that \textquotedblleft  it also seems unlikely
that the new tools deployed during the Great Recession entirely compensated for the limits
imposed by the lower bound,\textquotedblright
   and is consistent with the findings in \cite{Gust_AER17} and
\cite{Del_Negro_AER17}, who attribute an important role to the ELB for the
decline in output during the financial crisis.} We also firmly reject the hypothesis of no attenuation in the response of the yield curve to monetary policy shocks during ELB regimes in both economies.

The rejection of the irrelevance hypothesis leaves open the question of the degree of effectiveness of UMP in an ELB regime compared to the conventional policy in a non-ELB regime. To address this question, we identify the dynamic effects of conventional and unconventional policies by
combining the identifying power of the ELB with additional sign restrictions on impulse
responses to a monetary policy shock \`a la \cite{Uhlig2005}. The ELB enables partial identification of impulse responses to a monetary policy shock, as shown in \cite{Mavroeidis2019}, because a change in the behaviour of the economy across ELB and non-ELB regimes is informative about the relative impact of conventional and unconventional policy. The
identified set based only on the ELB turns out to be fairly wide, so we use the insights from our DSGE model to impose the theoretically-congruous sign restrictions that were used in \cite{DebortoliGaliGambetti2019}. 
The sign restrictions
markedly sharpen the identified set of impulse responses.

We find that the effects of monetary policy on inflation and output on impact (i.e., within one quarter) declined when the economy entered an ELB regime: they dropped by more than $15$ percent in the U.S., and more than $50$ percent in Japan, relative to conventional policy. However, the cumulative effects of monetary policy exhibited the opposite pattern one and two years ahead: they appear to have been stronger during an ELB regime relative to a non-ELB regime, except for the response of output gap in the U.S., which remained weaker. Therefore, UMP seems to have had a delayed but stronger effect than conventional policy on inflation in the U.S., and on both inflation and output in Japan. Thus, we conclusively reject the hypothesis that the ELB has been empirically irrelevant in both countries, and find that responses of inflation and output to UMP have been different across time and across countries.

\paragraph{Related literature}
Our analysis is closely related to two strands of research. The first pertains to theoretical studies that investigate the
transmission mechanism of unconventional monetary policy. Among those,
regarding QE, our theoretical model is close in spirit to
\cite{AndrsLopezSalidoNelson2004}, \cite{ChenCurdiaFerrero2012},
\cite{Harrison2012}, \cite{GertlerKaradi2013},  \cite{liu2019changing}, and \cite{SudoTanaka}.
These studies introduce assets with different maturities
and limit arbitrage across assets to break the irrelevance of QE
that is shown by \cite{EggertssonWoodford2003}.\footnote{For other possible channels of QE, see \cite{Krishnamurthy_Vissing-Jorgensen_2011}. See also \cite{Sims_Wu_JME2020} for a recent discussion on the theoretical frameworks to study UMP.} Regarding FG, our model
follows \cite{ReifschneiderWilliams2000}, and it considers this mechanism in a
general equilibrium model that directly accounts for QE. Our main contribution to this
first strand of literature is to develop a simple model of UMP, which incorporates the shadow rate and provides the theoretical underpinnings to our empirical analyses.

The second strand of literature pertains to empirical studies that assess
the effectiveness of unconventional policy. In addition to \cite{SwansonWilliams2014} who estimate the time-varying sensitivity of longer-maturity yields to macroeconomic news using
high-frequency data, it includes
 \cite{DebortoliGaliGambetti2019} who use a SVAR to
investigate the (ir)relevance of the ELB constraint by comparing impulse
responses to shocks between normal times and ELB episodes. Differing from our SVAR, their SVAR does not include short-term interest rates. Another related study
by \cite{InoueRossi2018} uses an SVAR with shocks to the entire yield
curve and finds evidence that UMP has been
effective in the U.S. Our empirical methodology is closely related to
\citetpos{HayashiKoeda2019}, who propose an SVAR model for Japan that includes
short rates and takes into account the ELB, and our empirical model for Japan relies heavily on the insights from their empirical analysis. The main difference of our methodology from \citetpos{HayashiKoeda2019} is that we use a shadow rate to model UMP, which nests QE as long-term government bond purchases and FG as a policy rule as in \cite{ReifschneiderWilliams2000}, while \cite{HayashiKoeda2019} use excess reserves to model QE and an inflation exit condition to model FG. Our methodology provides a simpler framework to test the irrelevance hypothesis of the ELB and to compare the effectiveness of UMP relative to conventional policy.

Finally, our empirical analysis uses the estimation methodology in \cite{Mavroeidis2019}, who also reports evidence against the irrelevance hypothesis for the U.S. using a three-equation VAR model. We have several differences from that study: we develop a novel theoretical DSGE model of UMP that provides the underpinnings for a new test of the irrelevance hypothesis based on the exclusion of short rates; we characterize analytically and obtain a new formal test of no attenuation of the effect of monetary policy on long rates at the ELB; we use sign restrictions motivated from our theoretical model to sharpen the identification of impulse responses; we estimate the dynamic effects of UMP and the shadow rates in each country; we study Japanese data and conduct several robustness checks.

The structure of the paper is as follows. Section \ref{s: dsge} develops a simple New Keynesian
DSGE model with UMP that provides theoretical underpinnings to our empirical model and the tests of the irrelevance hypothesis of the ELB. Section \ref{s: cksvar} introduces
the econometric methodology and presents the tests of the irrelevance hypothesis from the reduced-form solution of the SVAR. Section \ref{s: data} describes the data and reports our empirical results. Section \ref{sec: impact of mp} studies the effectiveness of UMP and its differences relative to conventional monetary policy. Section \ref{s: conclusion}
concludes. The \OA provides supporting material on the derivation of the
DSGE model, additional empirical results, and the estimates of the shadow rates for Japan and the U.S.

\section{A theoretical model of UMP\label{s: dsge}}

In this section, we develop a simple theoretical model of UMP and provide theoretical underpinnings to our empirical specifications and testing approaches to the irrelevance hypothesis. Section \ref{s: equations} presents the model with a focus on UMP. Section \ref{s: VAR representations} studies the linear and non-linear VAR representations of the model that underpin our empirical analysis.
Section \ref{s: simulations} simulates the model and illustrates how UMP can make the ELB irrelevant. The details of the model, equation derivations, parameterization, and model simulations are reported in \OA
\ref{A: theoretical model}.

\subsection{Central equations\label{s: equations}}

\paragraph{Overview}
The model is a New Keynesian model in which QE and FG are active under the ELB. The economy consists of
households, firms, and a central bank. The firm sector is standard as in a typical New Keynesian model. The household sector comprises two
types of households. Constrained households purchase long-term government bonds only, but unconstrained households can trade both short- and long-term government bonds subject to a trading cost.  The trading cost captures bond market segmentation, as in the preferred habitat theory originally proposed by \cite{ModiglianiSutch1966}, and it introduces imperfect substitutability between long- and short-term government bonds that generates a spread between the yields of these bonds.\footnote{The preferred habitat model is the predominant modelling framework to study UMP. See among others \cite{ChenCurdiaFerrero2012}, \cite{liu2019changing}, and \cite{Sims_Wu_JME2020}.}
The trading cost depends on the amount of long-term government bonds circulated in the market. By purchasing long-term government bonds, the central bank can affect the spread and thereby the long-term yield.

\paragraph{Conventional monetary policy} The central bank sets the short-term nominal interest rate $i_{t}$ using a standard Taylor rule subject to the ELB.
Let $\hat{y}_{t}$, $\hat{\pi}_{t}$, and $\hat{i}_{t}$ denote the deviation of output, inflation, and the short-term interest rate from the steady state in period $t$. Following conventional notation, the caret on a variable denotes the deviation of the variable from steady state. The short-term interest rate is set according to
\begin{align}
&\hat{i}_{t} = \max \{ \hat{i}_{t}^{\ast}, \hat{\underline{i}} \}, \label{i} \\
      &\hat{i}_{t}^{\ast}  = -\alpha \hat{i}_{t} + (1+\alpha) \hat{i}_{t}^{\text{Taylor}}, \label{i*} \\
  &\hat{i}_{t}^{\text{Taylor}} = \rho_{i}\left((1-\lambda^{\ast})\hat{i}_{t-1}+\lambda^{\ast}\hat{i}_{t-1}^{\ast}\right) + (1-\rho_{i})\left(r_{\pi}\hat{\pi}_{t} + r_{y}\hat{y}_{t}\right) + \epsilon_{t}^{i}, \label{Taylor}
\end{align}
where $\alpha \geq 0$, $\lambda^{\ast} \geq 0$, $\rho_i \geq 0$, $r_{\pi} \geq 0$, $r_y \geq 0$, and $\epsilon_{t}^{i}$ is a monetary policy shock. Equation (\ref{i}) encapsulates the ELB constraint, where $\hat{\underline{i}}$ is the ELB and $\hat{i}_{t}^{\ast}$ is the shadow rate.\footnote{For the interest rate, the deviation from steady state is expressed in terms of the gross interest rate. That is, $\hat{i}_{t} = (i_{t}-i)/(1+i)$, where $i$ is the short-term net interest rate in steady state. Hence, since the ELB is equal to $\underline{i}$ and $i_{t}\geq \underline{i}$, the deviation of $\underline{i}$ from the steady-state interest rate can be written as: $\hat{\underline{i}} = (\underline{i}-i)/(1+i)$.}

We use the term `shadow rate' since $\hat{i}_{t}^{\ast}$ is unobserved under the ELB and therefore censored at the ELB, while it is observed and equal to $\hat{i}_{t}$ outside the ELB constraint.
Our shadow rate represents the \emph{desired} stance of monetary policy for the short-term interest rate, as opposed to the \emph{effective} policy stance, e.g., in \cite{WuXia2016}.

A monetary policy shock in our model is a shock to the shadow rate, which is identical to a shock to the short-term interest rate in the non-ELB regime.
Equations (\ref{i*}) and (\ref{Taylor}) allow for FG to influence the system in the ELB regime, as we discuss below. In the non-ELB regime when $\hat{i}_{t}^{\ast},\hat{i}_{t-1}^{\ast}\geq \hat{\underline{i}}$, equations (\ref{i})-(\ref{Taylor}) reduce to $\hat{i}_{t}=\hat{i}_{t}^{\ast} = \hat{i}_{t}^{\text{Taylor}}$ with the Taylor-rule rate $\hat{i}_{t}^{\text{Taylor}}$ being equal to
\begin{equation}
    \hat{i}_{t}^{\text{Taylor}} = \rho_{i}\hat{i}_{t-1} + (1-\rho_{i})\left(r_{\pi}\hat{\pi}_{t} + r_{y}\hat{y}_{t}\right) + \epsilon_{t}^{i}. \label{standard Taylor}
\end{equation}
This equation is a standard Taylor rule that sets the current interest rate in response to the interest rate in period $t-1$, and current inflation and output.

\paragraph{Forward guidance\label{sec_for_guid}}
Equations (\ref{i*}) and (\ref{Taylor}) allow FG to maintain the short-term interest rate at a lower level than the rate implied by the standard Taylor rule (\ref{standard Taylor}).
The intensity of FG is governed by the two parameters $\lambda^{\ast}$ and $\alpha$. To see the isolated effect of $\lambda^{\ast}$, we first consider the case of $\alpha=0$ that implies $\hat{i}_{t}^{\ast}=\hat{i}_{t}^{\text{Taylor}}$ in equation (\ref{i*}). In the non-ELB regime equation (\ref{Taylor}) collapses to the standard Taylor rule (\ref{standard Taylor}). In the ELB regime of $\hat{i}_{t-1}^{\ast}<\hat{\underline{i}}$, if $\lambda^{\ast}=0$, the lagged term of equation (\ref{Taylor}) is  $\rho_{i}\hat{i}_{t-1}=\rho_{i}\hat{\underline{i}}$; if $\lambda^{\ast}>0$, however, the lagged term becomes $\rho_{i}\hat{\underline{i}} + \rho_{i}\lambda^{\ast}(\hat{i}_{t-1}^{\ast}-\hat{\underline{i}}) < \rho_{i}\hat{\underline{i}}$, exerting additional downward pressures on $\hat{i}_{t}^{\text{Taylor}}$ in equation (\ref{Taylor}) and on $i_{t}^{\ast}$ in equation (\ref{i*}) since $\hat{i}_{t}^{\ast}=\hat{i}_{t}^{\text{Taylor}}$. The decrease in $\hat{i}_{t}^{\ast}$ today keeps the interest rate $\hat{i}_{t}$ low from equation (\ref{i}), and moreover it keeps the future interest rate low by reducing the shadow rate tomorrow, leading to the same effect of FG in \cite{DebortoliGaliGambetti2019}.

Next consider the case of $\alpha > 0$. Equation (\ref{i*}) implies that FG is additionally strengthened since the shadow rate is kept lower than the Taylor-rule rate, $\hat{i}_{t}^{\ast} = \hat{i}_{t}^{\text{Taylor}} + \alpha(\hat{i}_{t}^{\text{Taylor}}-\hat{\underline{i}}) < \hat{i}_{t}^{\text{Taylor}}$under the ELB. Our implementation of FG is similar to \cite{ReifschneiderWilliams2000}.
For a given degree of interest rate smoothing ($\rho_{i}>0$), the parameter $\lambda^{\ast}>0$ generates downward pressure on the shadow rate in the ELB regime, and $\alpha>0$ further magnifies the downward reduction of the shadow rate. Thus, $\lambda^*$ primarily influences the persistence of the shadow rate, while $\alpha$ influences its level. Note that we also use $\lambda^*$ below to measure the effectiveness of QE, i.e., it is not a free parameter in equation (\ref{Taylor}) but will be determined later by equation \eqref{lambda*}. This helps us simplify the theoretical model, and results in a parameter that jointly influences FG and QE policy, see Lemma \ref{lemma1} below.

\paragraph{Quantitative easing}
In the ELB regime, the short-term interest rate is fixed at the ELB, and the central bank starts QE by purchasing long-term government bonds (consol bonds). The long-term bond issued in period $t$ pays $\mu^{j-1}$ dollars at time $t+j$. Let $P_{L,t}$ denote the price of the long-term bond, and let $R_{L,t+1}$ denote the return of holding it from period $t$ to $t+1$. The price and the return conditional on period-$t$ information are linked as
\begin{equation}
    P_{L,t}  =E_{t}\left(  \frac{1 + \mu P_{L,t+1}}{R_{L,t+1}}\right). \label{RL_PL}
\end{equation}
where $E_{t}$ is the expectation operator conditional on period-$t$ information.
The gross yield to maturity (or the long-term interest rate) at time $t$,
$\bar{R}_{L,t}$, can be defined as%
\[
  P_{L,t} = \frac{1}{\bar{R}_{L,t}}+\frac{\mu}{\left(  \bar{R}%
_{L,t}\right)  ^{2}}+\frac{\mu^{2}}{\left(  \bar{R}_{L,t}\right)  ^{3}%
}+...
\]
or%
\begin{equation}
P_{L,t}=\frac{1}{\bar{R}_{L,t}-\mu}. \label{RLbar}%
\end{equation}
Log-linearizing equations (\ref{RL_PL}) and (\ref{RLbar}) around steady state and combining them yields
\begin{equation}
   \hat{\bar{R}}_{L,t} = \left(1-\frac{\mu}{\bar{R}_{L}}\right)E_{t}\hat{R}_{L,t+1} + \frac{\mu}{\bar{R}_{L}}E_{t}\hat{\bar{R}}_{L,t+1}. \label{RLbarhat}
\end{equation}
where $\bar{R}_{L}>\mu$ is the long-term interest rate in steady state.

Unrestricted households that trade both long- and short-term government bonds pay a unitary cost for trading long-term bonds. An arbitrage between holding short- and long-term bonds yields
\begin{equation}
    E_{t}\hat{R}_{L,t+1} = \hat{i}_{t} + \frac{\zeta}{1+\zeta}\hat{\zeta}_{t}, \label{RLhat}
\end{equation}
where $\hat{\zeta}_{t}$ and $\zeta$ are the trading cost in deviation from steady state and in steady state, respectively. The trading cost introduces a spread between the returns of holding long- and short-term bonds. The trading cost is assumed to be increasing in the real amount of long-term bonds circulated in the market, $\hat{b}_{L,t}$, and is given by
\begin{equation}\label{trading cost}
    \hat{\zeta}_{t}=\rho_{\zeta}\hat{b}_{L,t} , 
\end{equation}
where $\rho_{\zeta}>0$ represents the elasticity of the trading cost with respect to the amount of long-term bonds in the market.

The central bank conducts QE according to the rule:
\begin{equation}\label{QErule}
    \hat{b}_{L,t} = \min\left\{\gamma(\hat{i}_{t}^{\ast}-\hat{\underline{i}}), 0\right\}.
\end{equation}
Parameter $\gamma \geq 0$ governs how aggressively the central bank purchases long-term bonds under the ELB, with $\gamma=0$ corresponding to no purchase. In the ELB regime, where $\hat{i}_{t}^{\ast}\leq\hat{\underline{i}}$ holds, the central bank purchases and absorbs long-term bonds from the market, so that $\hat{b}_{L,t}\leq 0$. The central bank conducts QE by using the shadow rate $\hat{i}_{t}^{\ast}$ as policy guidance similar to the conventional monetary policy. For instance, assume $\alpha=0$ in equation (\ref{i*}), such that $\hat{i}_{t}^{\ast}=\hat{i}_{t}^{\text{Taylor}}$. In the ELB regime, the more the shadow rate drops as a result of a decrease in inflation or output, the more the central bank purchases long-term government bonds. Thus, the central bank consistently aims at stabilizing inflation and output in both the non-ELB and ELB regimes.

Since constrained households hold long-term government bonds only while unconstrained households also hold short-term government bonds, the `effective' interest rate relevant to output and inflation is the weighted sum of the returns of holding short- and long-term government bonds, $\omega_{u}\hat{i}_{t}+(1-\omega_{u})E_{t}\hat{R}_{L,t+1}$, where $\omega_{u}\in (0,1)$ is the population share of unconstrained households.\footnote{The underlying assumption in deriving the effective interest rate is that the consumption of the two types of households is identical in steady state.} The Euler equation is then given by
\begin{equation}
    \hat{y}_{t} = E_{t}\hat{y}_{t+1} - \frac{1}{\sigma}\left(\omega_{u}\hat{i}_{t} + (1-\omega_{u})E_{t}\hat{R}_{L,t+1} - E_{t}\hat{\pi}_{t+1}\right) - \chi_{b}z_{t}^{b}, \label{Euler0}
\end{equation}
where $z_{t}^{b}$ is a demand (preference) shock, and $\sigma, \chi_{b}>0$. Note that the expected return of holding the long-term bonds $E_{t}\hat{R}_{L,t+1}$ depends on the trading cost $\hat{\zeta}_{t}$ (equation \ref{RLhat}), which depends on the real value of long-term bonds in the market $\hat{b}_{L,t}$ (equation \ref{trading cost}), which in turn is controlled by QE that uses the shadow rate $\hat{i}_{t}^{\ast}$ as policy guidance (equation \ref{QErule}). Hence, $E_{t}\hat{R}_{L,t+1}$ can be written as a function of $\hat{i}_{t}^{\ast}$. Substituting equations (\ref{RLhat})--(\ref{QErule}) into the Euler equation (\ref{Euler0}) yields
\begin{equation}
    \hat{y}_{t} = E_{t}\hat{y}_{t+1} - \frac{1}{\sigma}\left((1-\lambda^{\ast})\hat{i}_{t} + \lambda^{\ast}\hat{i}_{t}^{\ast} - E_{t}\hat{\pi}_{t+1}\right) - \chi_{b}z_{t}^{b}, \label{Euler}
\end{equation}
where $\lambda^{\ast}$ is given by
\begin{align}
    &\lambda^{\ast} \equiv (1-\omega_{u})\frac{\zeta}{1+\zeta}\rho_{\zeta}\gamma \label{lambda*}
\end{align}

The parameter $\lambda^{\ast}$ determines the effectiveness of QE. If $\lambda^{\ast}=0$, for instance as a result of $\rho_{\zeta}=0$ or $\gamma=0$, equation (\ref{Euler}) reduces to the standard Euler equation that omits the shadow rate. If $\lambda^{\ast}=1$, QE is `fully effective' and the interest rate $\hat{i}_{t}$ becomes irrelevant to the Euler equation and the dynamics of the system, and consequently the ELB will be irrelevant for the dynamics of the economy. From equation (\ref{lambda*}), the effectiveness of QE increases in the share of restricted households, $1-\omega_{u}$, the trading cost in steady state, $\zeta$, the elasticity of the trading cost with respect to the amount of long-term government bonds circulated in the market, $\rho_{\zeta}$, and strength in the purchasing of long-term bonds by the central bank in response to a change in the shadow rate, $\gamma$.

Comparing equations (\ref{Euler0}) and (\ref{Euler}) shows that the effective interest rate relevant to output and inflation is equal to the weighted interest rate, $(1-\lambda^{\ast})\hat{i}_{t}+\lambda^{\ast}\hat{i}_{t}^{\ast}$. Thus the weighted interest rate can also be interpreted as the effective interest rate, and this appears as a lagged variable in the Taylor rule (\ref{Taylor}) that implements FG. The parameter $\lambda^{\ast}$ depends on structural parameters pertaining to QE, as shown in equation (\ref{lambda*}). In this sense, the parameter $\lambda^{\ast}$ encapsulates the effectiveness of UMP that reflects both QE and FG.

\paragraph{Long-term interest rates}
Combining equations (\ref{RLbarhat})--(\ref{QErule}) yields the long-term interest rate, given by
\begin{equation}
    \hat{\bar{R}}_{L,t} = \begin{dcases}
    \left(1-\frac{\mu}{\bar{R}_{L}}\right)\hat{i}_{t} + \frac{\mu}{\bar{R}_{L}}E_{t}\hat{\bar{R}}_{L,t+1}, & \hat{i}_{t}^{\ast} > \hat{\underline{i}} \\
    \left(1-\frac{\mu}{\bar{R}_{L}}\right)\left(\frac{(1-\lambda^{\ast})\hat{i}_{t}+\lambda^{\ast}\hat{i}^{\ast}_{t} - \omega_{u}\hat{i}_{t}}{1-\omega_{u}}\right) + \frac{\mu}{\bar{R}_{L}}E_{t}\hat{\bar{R}}_{L,t+1}, & \hat{i}^{\ast}_{t} \leq \hat{\underline{i}}
    \end{dcases} \label{RLbarhat2}
\end{equation}
where $\hat{i}_{t}$ is set by equation (\ref{i}).
Equation (\ref{RLbarhat2}) shows that the effects of the shadow rate and the degree of the effectiveness of UMP on the long-term interest rate are different between the two regimes. In the non-ELB regime, $\hat{i}_{t}=\hat{i}_{t}^{\ast}$ and the expectation hypothesis holds: the long rate is given by the weighted sum of the expected short rates today and in the future. In the ELB regime, the short rate is bounded at $\hat{i}_{t}=\hat{\underline{i}}$ and the long rate is affected by QE as the shadow rate appears in the lower equation in (\ref{RLbarhat2}).

In the limiting case where the number of unconstrained households becomes infinitesimally small asymptotically, $\omega_{u}\rightarrow 0$, the difference between the two regimes in equation (\ref{RLbarhat2}) vanishes and the long-term interest rate can be written in the form
\begin{equation}
    \hat{\bar{R}}_{L,t} =
    \left(1-\frac{\mu}{\bar{R}_{L}}\right)\left(\lambda^{\ast}\hat{i}^{\ast}_{t} + (1-\lambda^{\ast})\hat{i}_{t}\right) + \frac{\mu}{\bar{R}_{L}}E_{t}\hat{\bar{R}}_{L,t+1}. \label{RLbarhat3}
\end{equation}
In this case, the expectation hypothesis holds with respect to the effective interest rate, $\lambda^{\ast}\hat{i}_{t}^{\ast}+(1-\lambda^{\ast})\hat{i}_{t}$. In addition, equation (\ref{RLbarhat3}) makes clear the presence of the attenuation bias of the long rate in the ELB regime and that it is related to the UMP effectiveness parameter $\lambda^{\ast}$. In the case of full effectiveness of UMP, i.e., $\lambda^{\ast}=1$, the long rate is always given by the expected sum of the shadow rates and there will be no attenuation of the long rate in the ELB regime. However, in the case of less effective UMP, i.e., $\lambda^{\ast}<1$, the long rate responds less to the shadow rate, giving rise to its attenuated response in the ELB regime.

\paragraph{System of equations}
We close the model by reporting the standard New Keynesian Phillips curve that relates the current inflation to expected inflation and output, given by
\begin{align}
  &\hat{\pi}_{t} = \delta E_{t}\hat{\pi}_{t+1} + \kappa \hat{y}_{t} - \chi_{a}z_{t}^{a}, \label{Phillips}
\end{align}
where $\delta \geq 0$, $\kappa \geq 0$, and $z_{t}^{a}$ is a supply (productivity) shock. The supply and demand shocks follow an AR(1) process.
To summarize, our model of UMP comprises the six equations (\ref{i}), (\ref{i*}), (\ref{Taylor}), (\ref{Euler}), (\ref{RLbarhat2}), and (\ref{Phillips}), the six endogenous variables $\{\hat{y}_{t},\hat{\pi}_{t}, \hat{i}_{t}, \hat{i}_{t}^{\ast}, \hat{i}_{t}^{\text{Taylor}}, \hat{\bar{R}}_{L,t}\}$, and the three shocks $\{z_{t}^{a},z_{t}^{b},\epsilon_{t}^{i}\}$.

\subsection{DSGE model and VAR representations}
\label{s: VAR representations}
We use our DSGE model developed in Section \ref{s: equations} to derive the VAR representations that underpin our empirical models and tests to be introduced in Section \ref{s: cksvar}.

\paragraph{The joint effect of QE and FG}
Before deriving VAR representations, it is useful to note that we cannot separate out the effects of the two parameters $\lambda^{\ast}$ and $\alpha$ in the system, but the model allows us to identify their joint effect on the system, encapsulated by $\xi^{\ast}\equiv\lambda^{\ast}(1+\alpha)$. The next lemma states the result formally.
\begin{lemma}\label{lemma1}
For any $\lambda^{\ast}\neq \lambda^{\ast\prime}$ and $\alpha\neq \alpha^{\prime}$ that satisfy $\xi^{\ast}\equiv \lambda^{\ast}(1+\alpha)=\lambda^{\ast\prime}(1+\alpha^{\prime})$, the model with $\lambda^{\ast}$ and $\alpha$ is observationally equivalent to the model with $\lambda^{\ast\prime}$ and $\alpha^{\prime}$.
\end{lemma}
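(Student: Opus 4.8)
The plan is to show that, once the shadow rate $\hat{i}_{t}^{\ast}$ is substituted out using (\ref{i*}), the system (\ref{Euler})--(\ref{i}) governing the observables $\hat{y}_{t},\hat{\pi}_{t},\hat{i}_{t}$ depends on the pair $(\lambda^{\ast},\alpha)$ only through the scalar $\xi^{\ast}=\lambda^{\ast}(1+\alpha)$. Since two parameterizations with a common $\xi^{\ast}$ then solve literally the same set of equations, they generate the same law of motion for the observables; the only object that differs is the shadow rate, and it is unobservable.

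First I would isolate the combinations in which $\lambda^{\ast}$ and $\alpha$ enter. Define the \emph{effective rate} $\tilde{i}_{t}:=(1-\lambda^{\ast})\hat{i}_{t}+\lambda^{\ast}\hat{i}_{t}^{\ast}$; this is precisely the term appearing in the Euler equation (\ref{Euler}) and, at lag one, in the Taylor rule (\ref{Taylor}), while $\alpha$ enters only (\ref{i*}) and $\hat{i}_{t}^{\ast}$ enters only (\ref{i}) and through $\tilde{i}_{t}$. Substituting (\ref{i*}) for $\hat{i}_{t}^{\ast}$,
\begin{equation*}
\tilde{i}_{t}=(1-\lambda^{\ast})\hat{i}_{t}+\lambda^{\ast}\bigl(-\alpha\hat{i}_{t}+(1+\alpha)\hat{i}_{t}^{\text{Taylor}}\bigr)=\bigl(1-\xi^{\ast}\bigr)\hat{i}_{t}+\xi^{\ast}\hat{i}_{t}^{\text{Taylor}},
\end{equation*}
so the effective rate, hence (\ref{Euler}) and (\ref{Taylor}) after substitution, involve $(\lambda^{\ast},\alpha)$ only through $\xi^{\ast}$.

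Next I would rewrite the ELB equation (\ref{i}) without reference to $\hat{i}_{t}^{\ast}$. Substituting (\ref{i*}) into (\ref{i}) and using $1+\alpha>0$ (guaranteed by $\alpha\ge 0$), a two-case argument on the sign of $\hat{i}_{t}^{\ast}-b$ shows that $\hat{i}_{t}=\hat{i}_{t}^{\text{Taylor}}$ whenever $\hat{i}_{t}^{\text{Taylor}}\ge b$ and $\hat{i}_{t}=b$ otherwise; that is, (\ref{i}) is equivalent to $\hat{i}_{t}=\max\{\hat{i}_{t}^{\text{Taylor}},b\}$, which contains neither $\lambda^{\ast}$ nor $\alpha$. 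Collecting the pieces, the model in the variables $(\hat{y}_{t},\hat{\pi}_{t},\hat{i}_{t},\hat{i}_{t}^{\text{Taylor}})$ reduces to (\ref{Euler}) with $\tilde{i}_{t}=(1-\xi^{\ast})\hat{i}_{t}+\xi^{\ast}\hat{i}_{t}^{\text{Taylor}}$, the Phillips curve (\ref{Phillips}), the Taylor rule (\ref{Taylor}) with its lagged effective rate replaced by $(1-\xi^{\ast})\hat{i}_{t-1}+\xi^{\ast}\hat{i}_{t-1}^{\text{Taylor}}$, and $\hat{i}_{t}=\max\{\hat{i}_{t}^{\text{Taylor}},b\}$; every coefficient of this system, together with the processes for $z_{t}^{a},z_{t}^{b},\epsilon_{t}^{i}$, depends on $(\lambda^{\ast},\alpha)$ only through $\xi^{\ast}$ and the remaining (common) structural parameters.

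Finally, under $\xi^{\ast}=\xi^{\ast\prime}$ this reduced system is identical for the two parameterizations, so it admits the same set of (piecewise-linear) rational-expectations solutions and, for every shock realization, the same paths of the observables $(\hat{y}_{t},\hat{\pi}_{t},\hat{i}_{t})$; only the shadow rate, recovered ex post from (\ref{i*}) as $\hat{i}_{t}^{\ast}=-\alpha\hat{i}_{t}+(1+\alpha)\hat{i}_{t}^{\text{Taylor}}$, differs, and it is unobservable. Hence the two models are observationally equivalent. The step requiring the most care is the translation of the occasionally binding constraint (\ref{i}) into the form $\hat{i}_{t}=\max\{\hat{i}_{t}^{\text{Taylor}},b\}$, together with the bookkeeping that verifies no stray occurrence of $\lambda^{\ast}$ or $\alpha$ survives anywhere in the reduced system; the remainder is linear substitution, and the equivalence then follows because the reduced systems coincide verbatim.
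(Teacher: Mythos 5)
Your proposal is correct and follows essentially the same route as the paper: substitute out $\hat{i}_{t}^{\ast}$ via (\ref{i*}) so the effective rate becomes $(1-\xi^{\ast})\hat{i}_{t}+\xi^{\ast}\hat{i}_{t}^{\text{Taylor}}$, and note that the ELB condition reduces to $\hat{i}_{t}=\max\{\hat{i}_{t}^{\text{Taylor}},b\}$, leaving $(\lambda^{\ast},\alpha)$ entering only through $\xi^{\ast}$. Your explicit two-case resolution of the occasionally binding constraint merely spells out (correctly, using $1+\alpha>0$) what the paper asserts in one line, so there is no substantive difference.
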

The proof is straightforward. By using equation (\ref{i*}) to substitute out $\hat{i}_{t}^{\ast}$, the effective interest rate, $(1-\lambda^{\ast})\hat{i}_{t}+\lambda^{\ast}\hat{i}_{t}^{\ast}$, can be replaced with $(1-\xi^{\ast})\hat{i}_{t}+\xi^{\ast}\hat{i}_{t}^{\text{Taylor}}$. The parameter  $\lambda^{\ast}$ does not appear anywhere in the model except in $\xi^{\ast}=\lambda^{\ast}(1+\alpha)$. The parameter $\alpha$ appears only in equation (\ref{i}) as $\hat{i}_{t}=\max\{\hat{i}_{t}^{\text{Taylor}}-\alpha(\hat{i}_{t}-\hat{i}_{t}^{\text{Taylor}}), \hat{\underline{i}}\}$. But this equation is observationally equivalent to $\hat{i}_{t}=\max\{\hat{i}_{t}^{\text{Taylor}},\hat{\underline{i}}\}$ because only $\hat{i}_{t}$ is observable, and thus the joint effect of QE and FG is summarized by the parameter $\xi^{\ast}$ that encompasses $\lambda^{\ast}$ and $\alpha$.\footnote{\label{footnote: lambda*}QE and FG are not separately identifiable here because they both operate only in the ELB regime during which the shadow rate is unobserved and only identified up to scale, see \cite{Mavroeidis2019} for further discussion. It might be possible to disentangle the two policies if, for example, one of them operated also outside the ELB, as in \cite{Swanson2021}. This would require some additional assumptions to carefully model more than two endogenously-switching regimes.}

As an illustrative example, consider a case in which QE is half as effective as
the conventional monetary policy (i.e., $\lambda^{\ast}=0.5$) but FG is active with $\alpha=1$. In this scenario, the monetary policy shock $\epsilon_{t}^{i}$ is twice as large in the ELB regime as the same shock in the non-ELB regime (or, equivalently, at the ELB regime with $\lambda^{\ast}=1$ and $\alpha=0$).
The impact of such a monetary policy shock at the ELB regime is of the same magnitude as the
equivalent shock in the non-ELB regime.
Thus, the effectiveness of UMP in the theoretical model depends on $\xi^{\ast}$ that encapsulates the joint effect of $\lambda^{\ast}$ and $\alpha$, and encompasses the combination of QE and FG.

\paragraph{Irrelevance of the ELB and VAR representations}
Now we establish the VAR representations of the DSGE model under the irrelevance hypothesis of the ELB, where UMP is as effective as the conventional policy: $\xi^{\ast}=1$. There will be different VAR representations, depending on what assumptions we impose on the model. Specifically, we begin by considering the following assumption. \begin{assumption}\label{assumption1}
The number of unrestricted households becomes infinitesimally small asymptotically: $\omega_{u}\rightarrow 0$.
\end{assumption}
Under the irrelevance of the ELB, the solution to the model entails two VAR(1) representations, as formalized in the following proposition.
\begin{proposition}\label{prop1}
 Consider the DSGE model in equations (\ref{i}), (\ref{i*}), (\ref{Taylor}), (\ref{Euler}), (\ref{RLbarhat2}), and (\ref{Phillips}) under the irrelevance of the ELB: $\xi^{\ast}=1$. Then,
 \begin{enumerate}
     \item[i)] $[\hat{y}_{t}, \hat{\pi}_{t}, \hat{i}_{t}^{\ast}]$ has a VAR(1) representation.
     \item[ii)] Under Assumption \ref{assumption1}, $[\hat{y}_{t}, \hat{\pi}_{t}, \hat{\bar{R}}_{L,t}]$ has a VAR(1) representation.
 \end{enumerate}
\end{proposition}
\begin{proof}
\OA \ref{a: proof 1}.
\end{proof}\vspace{.5cm}

Proposition \ref{prop1}(i) shows that under the irrelevance of the ELB, the short-term interest rate $i_{t}$ is redundant for the dynamics of output and inflation once the shadow rate is included in the VAR.
Whether the short rate is constrained by the ELB or not does not influence the dynamics of output and inflation, so that the ELB is irrelevant. What distinguishes our VAR from a standard linear VAR  is that the shadow rate is censored at the ELB. In other words, for an econometrician, the shadow rate is observable, and equal to the nominal interest rate, only when it is above the ELB (equation \ref{i}).

Proposition \ref{prop1}(ii) shows that the long rate can be a sufficient indicator of monetary policy. Under Assumption \ref{assumption1}, the shadow rate has no direct effect on the economy while having an indirect effect by affecting the long rate, and under the irrelevance of the ELB, the shadow rate and the long rate become interchangeable as implied by equation (\ref{RLbarhat3}) with $\lambda^{\ast}=1$ and $\alpha=0$ without loss of generality from Lemma \ref{lemma1}. 
Intuitively, in such a special case, the expectation hypothesis holds and the long rate is given by the expected sum of the shadow rates today and in the future, giving rise to a one-to-one relation between the long rate and the shadow rate.
The corollary of Proposition \ref{prop1}(ii) is that under the irrelevance hypothesis there is no attenuation of the response of the long rate, i.e., the dynamics of the long rate are identical between the non-ELB and ELB regimes.
We will use Proposition \ref{prop1} to underpin our approach to testing the irrelevance hypothesis of the ELB and no attenuation  effect in Section \ref{s: cksvar}.

Assumption \ref{assumption1} plays a critical role in deriving the VAR representation with the long rate under the irrelevance of the ELB, i.e., the specific irrelevance hypothesis used in the literature \citep[e.g.,][]{DebortoliGaliGambetti2019}. Without the assumption, equation (\ref{RLbarhat2}) shows that even under the irrelevance of the ELB ($\xi^{\ast}=1$), the long rate is kinked at the ELB: the first-term in equation (\ref{RLbarhat2}) is smaller by $\omega_{u}/(1-\omega_{u})\times\hat{\underline{i}}$ in the ELB regime than in the non-ELB regime. Why does the long rate have to decrease more in the ELB regime to make the ELB irrelevant?
The reason is straightforward. Unrestricted households hold short-term government bonds whose relevant interest rate is the short rate. To stimulate the aggregate consumption as if the economy were not at the ELB, consumption by the restricted households must be stimulated more, which requires a stronger decrease in the long rate under the ELB. 
Put differently, the decrease in the long rate needed to satisfy the irrelevance of monetary policy at the ELB regime becomes less pronounced with the reduction of unrestricted agents. Thus, Assumption \ref{assumption1} is important to validate the VAR representation with the long rate.

\paragraph{Relevance of the ELB and VAR representations}
Now consider the case of a less effective UMP than conventional policy (i.e., $\xi^{\ast}\leq1$). In this general case, the model does not have a tractable solution under rational expectations and thus it would not have a VAR representation. However, the model admits a tractable solution if the formation of expectations slightly deviates from rational expectations by the following assumption.
\begin{assumption}\label{assumption3}
In each period $t$, agents know the true $\xi^{\ast}\leq 1$ today and form expectations under the presumption of $\xi^{\ast}=1$ from period $t+1$ onward.
\end{assumption}
This assumption implies that agents entail behavioral expectations and believe that UMP will be as effective as the conventional policy from the next period $t+1$ onwards. Under this assumption, the VAR model has a piecewise linear representation, as stated in the following proposition.

\begin{proposition}\label{prop2}
Under Assumption \ref{assumption3},  the DSGE model in equations (\ref{i}), (\ref{i*}), (\ref{Taylor}), (\ref{Euler}), and (\ref{Phillips}) has a piecewise linear VAR representation with a kink at the ELB.
\end{proposition}
\begin{proof}
\OA \ref{a: proof2}
\end{proof}\vspace{.5cm}

Proposition \ref{prop2} implies that in the case of $\xi^{\ast}<1$ the dynamics of the economy differ between the non-ELB and ELB regimes as the dynamics in each regime is represented by a distinct VAR where a change in regimes occurs when the short-term interest rate crosses the ELB. This result echos with \cite{Aruoba-etal2021}, who argue that a piecewise linear solution to a DSGE model with an occasionally binding ELB constraint can be interpreted as describing the behavior of boundedly rational agents. Our specific assumption on the expectation formation in Proposition \ref{prop2} represents a form of bounded rationality. 

In our VAR representation in Proposition \ref{prop2}, equation (\ref{i}) continues to hold, and the shadow rate is unobserved and censored in the ELB regime. By allowing for a kink in the dynamics, Proposition \ref{prop2} shows that a censored and kinked VAR model provides a suitable empirical specification for testing the irrelevance hypothesis.

\subsection{Simulations}
\label{s: simulations}

Before developing our empirical framework in the next section, we illustrate the effects of UMP and study impulse responses to a monetary policy shock in the ELB regime by solving the model under Proposition \ref{prop2}.\footnote{\label{footnote_simul_shocks}While our analysis focuses on monetary policy, \OA \ref{app:pref_supp_shocks} reports the responses of output and inflation to demand and supply shocks under the ELB to illustrate the role of the effectiveness of UMP for these shocks.}
The analysis aims at illustrating the effects of UMP and it is not designed to draw quantitative implications.
\OA \ref{A: parameterization} reports the parameterization of the model.

\begin{figure}[t]
\caption{The effects of UMP}%
\label{Fig:UMP}%
\centering
\includegraphics[width=16cm]{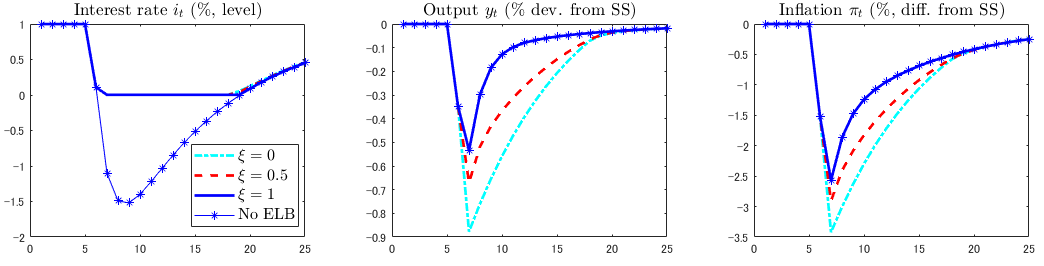}
\vspace{.1cm}
\fnote{\footnotesize {Note: The figure shows the dynamic path of the model under Proposition \ref{prop2} where a severe demand shock hits the economy in periods $t=6$ and $t=7$. The dynamic path is computed by transforming the theoretical model under Proposition \ref{prop2} into the reduced form equations (\ref{eq: RF Y1}) and (\ref{eq: RF Y2}) and calculating the response to the shocks.`No ELB' represents the model without the ELB, where the interest rate equation (\ref{i}) is replaced by $i_{t}=i_{t}^{\ast}$. `SS' denotes a steady state, `dev.' denotes a deviation, and `diff.' denotes a difference.} }
\end{figure}

\paragraph{No UMP}
The dash-dotted line in Figure \ref{Fig:UMP} shows simulated paths for the theoretical model under Proposition \ref{prop2} in the case of no UMP ($\xi^{\ast}=0$). The economy starts from the steady state and large negative demand shocks hit in periods $t=6$ and $t=7$. The consecutive negative demand shocks bring the economy to the ELB
and generate a severe recession by decreasing output and inflation sharply. At the ELB (dash-dotted line), the interest rate $i_{t}$ cannot be lowered in response to the fall in inflation. This raises the real interest rate, decreases consumption and output, and puts further downward pressure on inflation through the Phillips curve (\ref{Phillips}). This negative feedback loop magnifies the falls in output and inflation compared to the hypothetical economy without the ELB (the star-marked line).

\paragraph{UMP\label{s:theory_QE}}
UMP can offset the negative impact of the ELB. When UMP is partially effective ($\xi^\ast=0.5$; the dashed line), the magnitude of the falls in output and inflation are mitigated relative to the case without UMP ($\xi^\ast=0$; the dash-dotted line). When UMP is fully effective ($\xi^\ast=1$; the solid line), although the interest rate $i_{t}$ remains at the ELB, output and inflation follow the same paths as in the case of no ELB (the star-marked line), as shown in Figure \ref{Fig:UMP}. In response to a decrease in the shadow rate $i_{t}^{\ast}$, the central bank increases the purchase of long-term government bonds and, by doing so, it lowers the long-term government bond yield by compressing its premium, which boosts consumption and output. When $\xi^\ast=1$, UMP perfectly offsets the contractionary effect of the ELB. The interest rate $i_{t}$ becomes irrelevant to the dynamics of the economy, which evolves as if there were no ELB.

\paragraph{Impulse responses to a monetary policy shock\label{s:theory_mon_pol_shock}} Figure \ref{Fig:MPshock} plots impulse responses to a 0.25 percentage points cut in the shadow rate under the ELB starting from period $t=1$ for the theoretical model solved under Proposition \ref{prop2} (the solid line) and the model solved by the OccBin algorithm (the dashed line), developed by \cite{GuerrieriIacoviello2015}, which has been a popular approach to solving DSGE models at the ELB (see \citealp{Atkinson-etal2019}).\footnote{The impulse responses are computed by using the same method employed in reporting our empirical results. For the details, see Section \ref{sec: IRFs}.}  The OccBin solution assumes that agents in the model form expectations by treating the non-ELB regime as an absorbing state, i.e., by assuming that interest rates will remain positive once the economy exits the ELB regime. Thus, OccBin uses an alternative behavioural assumption on the formation of expectations than the one used in Proposition \ref{prop2}.

\begin{figure}[t]
\caption{Impulse responses to a monetary policy shock at the ELB}%
\label{Fig:MPshock}
\centering
\includegraphics[width=16.5cm]{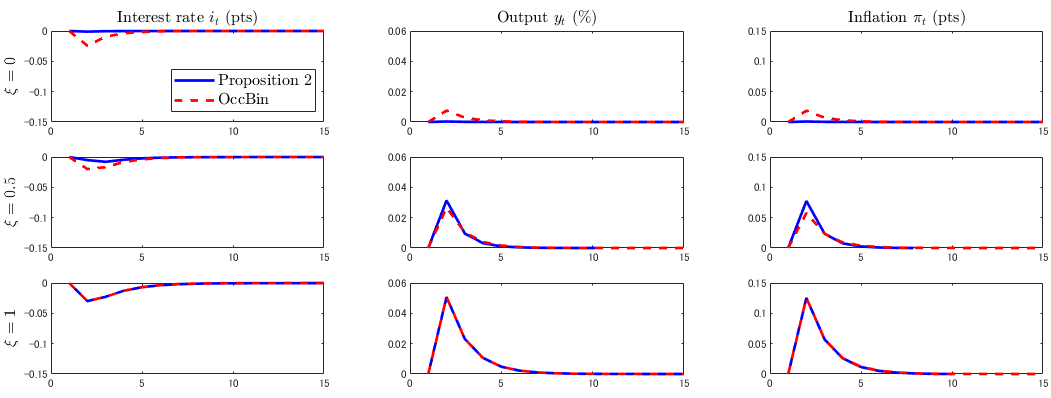}
\vspace{-.5cm}
\fnote{\footnotesize {Note: `Proposition 2' denotes the theoretical model under Proposition \ref{prop2} and `OccBin' denotes the model solved by the algorithm developed by \cite{GuerrieriIacoviello2015}. 
For analyzing the impulse responses under the ELB, for each case of $\xi^\ast$, the initial condition is set as endogenous variables which are realized using OccBin in response to a severe negative demand shock.}}
\end{figure}

Overall the responses of the interest rate, output, and inflation are similar between the model solution under Proposition \ref{prop2} and the OccBin solution, as shown in Figure \ref{Fig:MPshock}. The responses of the interest rate are muted because the economy starts from the ELB triggered by a severe demand shock in period $t=1$. Without the ELB, the interest rate (left panels) would fall by about 0.15 percentage points (pts), reported in the figure as the lowest value on the y-axis.\footnote{The responses of the interest rate are slightly negative because they are calculated relative to the expected interest rate conditional on no monetary policy shock in the initial period, which is slightly positive due to the realizations of shocks that bring the economy above the ELB.} In the case of no UMP ($\xi^\ast=0$; top panels), the responses of output (central panels) and inflation (right panels) are muted for both the model solution under Proposition \ref{prop2} and the OccBin solution.  Because the economy is at the ELB, the monetary policy shock in period $t=2$ does not have significant effects on the economy without UMP. In the case of partial UMP ($\xi^\ast=0.5$; middle panels), QE is activated in response to a decrease in the shadow rate triggered by the monetary policy shock, and output and inflation increase. In the case of fully effective UMP ($\xi^\ast=1$; bottom panels), the `irrelevance hypothesis' holds and the responses of output and inflation coincide with those under the hypothetical economy with no ELB under both solution methods.

\section{Empirical model\label{s: cksvar}}

Our theoretical model in Section \ref{s: dsge} highlights three important features to study the irrelevance hypothesis:
i) the censoring of the shadow rate at the ELB, ii) a potential kink of the dynamics of the economy at the ELB, and
iii) a parameter that encapsulates the effectiveness of UMP.
We embed these features in our empirical model designed to test the irrelevance hypothesis using the flexible VAR approach of the literature on monetary policy. In this section, we present our empirical model, derive the tests of the irrelevance of the ELB and no attenuation hypotheses, and describe our approach to the identification of the effects of conventional and unconventional monetary policy.

\subsection{Censored and kinked SVAR}
\label{s: sub_cksvar}

The econometric model that we use is the censored and kinked SVAR (CKSVAR)
developed by \cite{Mavroeidis2019}. Its structural form is given by:
\begin{subequations}\label{eq: CKSVAR}
\begin{align}
i_{t} &  =\max\left\{  i_{t}^{\ast},\underline{i}_{t}\right\}  ,\label{eq: Y2} \\
i_{t}^{\ast} &  =-\alpha i_{t}+\left(  1+\alpha\right)  \left(  \gamma
Y_{1t}+B_{2}X_{t}+B_{22}^{\ast}X_{2t}^{\ast}+A_{22}^{*-1}\varepsilon_{2t}\right)
,\label{eq: Y2*}\\
Y_{1t} &  =\beta\left(  \lambda i_{t}^{\ast}+\left(  1-\lambda\right)
i_{t}\right)  +B_{1}X_{t}+B_{12}^{\ast}X_{2t}^{\ast}+A_{11}^{-1}\varepsilon
_{1t},\label{eq: Y1}
\end{align}
\end{subequations}
where $i_{t}$ is the short-term interest rate that is subject to the observable lower bound of $\underline{i}_{t}$, $i_{t}^{\ast}$ is the shadow rate,  $Y_{1t}$ is a vector of unconstrained endogenous variables such as inflation and output,
 $X_{t}$ comprises exogenous and predetermined
variables, including lags of $Y_{1t}$ and $i_{t}$, $X_{2t}^{\ast}$ consists of lags of
$i_{t}^{\ast},$ and $\varepsilon_{t}$ are i.i.d.~structural shocks with identity covariance matrix.

Equation (\ref{eq: Y2}) represents the ELB constraint, and it corresponds to equation (\ref{i}) in the theoretical model, except for the lower bound that is allowed to vary over time in equation (\ref{eq: Y2}). Thus, the shadow rate is censored and unobservable under the ELB, as in the theoretical model.

Equation (\ref{eq: Y2*}) represents the short-term interest rate rule that nests the FG rule of \cite{ReifschneiderWilliams2000}, and it corresponds to equations (\ref{i*}) and (\ref{Taylor}) in the theoretical model. The parameter $\alpha$ has the same interpretation as in the theoretical model.

Equation (\ref{eq: Y1}) describes the dynamics of variables of interest such as inflation and output and the relationship between these variables and interest rates. The parameter $\lambda$ is the equivalent of $\lambda^{\ast}$ in the theoretical model and characterizes the effectiveness of UMP relative
to conventional policy \textit{on impact}. Specifically, from equation
(\ref{eq: Y1}) we see that above the ELB (i.e., when $i_{t}=i_{t}^{\ast
}>\underline{i}_{t}$), the \emph{contemporaneous} effect of a change in the short-term interest
rate $i_{t}$ by one unit on $Y_{1t}$ is $\beta,$ but the corresponding effect
at the ELB, driven by a change in the shadow rate $i_{t}^{\ast},$\ is $\lambda\beta.$ When
$\lambda=1,$ the two effects are equal, while $\lambda=0$ corresponds to the
case in which UMP has no contemporaneous effect on $Y_{1t}.$

Similar to the theoretical model, the parameter $\lambda$ partially characterizes the impulse
responses to a monetary policy shock at the ELB, since the response of the interest rate also depends on the degree of FG, and thus the joint effect of $\alpha$ and $\lambda$ determines the response of endogenous variables to UMP. To see this in the context of the SVAR model, consider the impulse response to the monetary policy shock of
\label{ref1_comm}$\varepsilon_{2t} = A_{22}^{*}$ ignoring nonlinearities.\footnote{We will discuss the specification of the impulse response functions in Section \ref{sec: impact of mp}, see equation (\ref{eq: IRFs}).} The effect on $Y_{1t}$ is $\beta/\left(  1-\gamma\beta\right)  $
above the ELB, and $\xi\beta/\left(  1-\xi\gamma\beta\right)$ at the ELB, where:%
\begin{equation}
\xi=\lambda\left(  1+\alpha\right)  .\label{eq: xi}%
\end{equation}
So, it is, in fact, $\xi$, not $\lambda$, that measures the effectiveness of an
UMP shock -- a shock to the shadow rate below the ELB, see \cite{Mavroeidis2019}
for further discussion. The parameter $\xi$ is the equivalent $\xi^{\ast}$ in Lemma \ref{lemma1} of the theoretical model. In the proof of Proposition \ref{prop2} (see \OA \ref{a: proof2}), we show that the UMP parameter $\xi^{\ast}$ in the theoretical model under Proposition \ref{prop2} can be mapped in the system of equations (\ref{eq: Y2})-(\ref{eq: Y1}), and the $\xi^{\ast}$ in the theoretical model coincides with the corresponding parameter $\xi$ in the empirical model.

Our discussion about the parameter $\xi$
concerned the relative effectiveness of UMP \emph{on impact}. The \emph{dynamic}
effects of UMP on $Y_{1t}$ are governed by the  coefficients on the lags of the
shadow rate $B_{12}^{\ast}$ in equation (\ref{eq: Y1}). For example, the case of completely
ineffective UMP on $Y_{1t}$ at all horizons can be represented by the joint restrictions $\xi=0$
\emph{and} $B_{12}^{\ast}=0$. A more restrictive case is that UMP is also ineffective on the short-term interest rate $i_{t}$, in addition to having no effect on $Y_{1t}$, which  can be implemented by $\xi=0$,
$B_{12}^{\ast}=0$, \emph{and} $B_{22}^{\ast}=0.$ This implies that the shadow rate has no contemporaneous and cumulative impact on the endogenous variables and completely
drops out of the right-hand side of equations (\ref{eq: Y2})-(\ref{eq: Y1}%
). We refer to this case as the \emph{kinked }SVAR (KSVAR), using the same terminology in  \cite{Mavroeidis2019}.

\subsection{Reduced-form solution of the SVAR and identification}
\label{s: identification}
To implement our empirical tests and gain intuition on the identification and estimation of the CKSVAR model, we derive the reduced-form solution for $Y_{1t}$ and $i_{t}$ using equations
(\ref{eq: Y2})-(\ref{eq: Y1}). \cite{Mavroeidis2019} develops the methodology for the identification and estimation of the CKSVAR, showing that the model is generally
under-identified, but the parameter $\xi$, defined in equation (\ref{eq: xi}), and the impulse responses to the monetary policy shock $\varepsilon_{2t}$,
are partially identified in general. The reduced-form solution for $Y_{1t}$ and $i_{t}$ is:%
\begin{subequations}\label{eq: CKSVAR-RF}
\begin{align}
\label{eq: RF Y2}
i_{t}&= \max\left\{C_{21}X_{1t}+C_{22}X_{2t}+C_{22}^{\ast}X_{2t}^{\ast}+u_{2t},\underline{i}_{t}\right\} \\
\label{eq: RF Y1}
Y_{1t}&=C_{11}X_{1t}+C_{12}X_{2t}+C_{12}^{\ast}X_{2t}^{\ast}+u_{1t}-\widetilde{\beta}%
D_{t}\left(  C_{2}X_{t}+C_{22}^{\ast}X_{2t}^{\ast}+u_{2t}-\underline{i}_{t}\right)
\end{align}
\end{subequations}
where $D_{t}\equiv \mathbb{1}_{\left\{  i_{t}=\underline{i}_{t}\right\}  }$ is the indicator of the ELB
regime, $X_t \equiv (X_{1t}',X_{2t}')'$, $X_{2t}$ consists of the lags of $i_{t}$, the matrices $C_{11},C_{12},C_{12}^{\ast}$, $C_{2}\equiv (C_{21},C_{22})$,  and $C_{22}^{\ast}$ are
reduced-form coefficients, $u_{t}\equiv (u_{1t}^{\prime},u_{2t}^{\prime})^{\prime}$ are reduced-form
errors, and $\Omega \equiv var\left(  u_{t}\right)  .$

The reduced-form equations in (\ref{eq: CKSVAR-RF}) represent a censored and kinked VAR. Equation (\ref{eq: RF Y2}) entails the censoring of the shadow rate, represented by the $\max$ operator. Equation (\ref{eq: RF Y1}) allows for a kink at the ELB and the coefficients and variance can change across regimes. Therefore, our empirical model nests the theoretical model under Proposition \ref{prop2}, while imposing minimal structure and allowing for flexible coefficients in the VAR.

The coefficient of the kink $\widetilde{\beta}$ in equation (\ref{eq: RF Y1}) is
identified, together with the remaining reduced-form
parameters.
In other words,
we can infer from the data whether the slope coefficients and the variance of
$Y_{1t}$ change across regimes by testing whether $\widetilde{\beta}=0.$
However, the parameter $\widetilde{\beta}$ does not have a structural
interpretation and relates to the underlying structural parameters through the following
equations:
\begin{align}
\widetilde{\beta} &  =\left(  1-\xi\right)  \left(  I-\xi\beta\gamma\right)
^{-1}\beta,\label{eq: betatilde}\\
\gamma &  =\left(  \Omega_{12}^{\prime}-\Omega_{22}\beta^{\prime}\right)
\left(  \Omega_{11}-\Omega_{12}\beta^{\prime}\right)  ^{-1}.\label{eq: gamma}%
\end{align}
As shown in \cite{Mavroeidis2019}, the structural parameters $\xi,\beta$ and
$\gamma$ are partially identified, in the sense that there is a
set of different combinations for values of the structural parameters that satisfy equations (\ref{eq: betatilde}) and (\ref{eq: gamma}) and generate any given value
of the reduced-form parameters $\widetilde{\beta}$ and $\Omega$.
Therefore, the impulse responses to a monetary policy shock are set-identified. In our
empirical analysis below, we will sharpen the identified set by using sign restrictions on the impulse responses.

\subsection{Hypothesis tests of the irrelevance of the ELB}
\label{s: IH}

We now develop our tests for the irrelevance hypothesis (IH) of the ELB from the reduced-form solution of the SVAR. A central implication of our theoretical framework was that the dynamics of the
economy are independent from whether policy rates are at the ELB or not. We use this fundamental implication to formulate two testable hypotheses.

\paragraph{Irrelevance hypothesis 1 (IH$_1$)}
Our first approach to test the IH is motivated by \cite{Swanson2018} and
\cite{DebortoliGaliGambetti2019}, who argue that monetary policy remains
similarly effective across ELB and non-ELB regimes and that
long-term interest rates are a plausible indicator of the stance of monetary
policy. \cite{DebortoliGaliGambetti2019} use SVARs that include long-term, rather than
short-term interest rates as indicators of monetary policy. They use such VARs
to identify the impulse responses of the macroeconomic variables to monetary
policy as well as the response of policy to economic conditions, and find that
those responses are similar across ELB and non-ELB regimes in the U.S. The
implicit and testable assumption that underlies their analysis is that
the short-term interest rate can be excluded from the dynamics of all the other
variables in the system, and the dynamics of the system do not change when the economy enters the ELB regime. In other words, under the IH, the dynamics of the economy can be represented by VARs with the long rate but without the short rate, which we formally showed in Proposition \ref{prop1}(ii).

The hypothesis can be tested as an exclusion
restriction in an SVAR that includes both the short and long rates. Since
the short rate is subject to the ELB constraint, the relevant framework
is the CKSVAR (that allows for the shadow rate to affect the economy in the ELB regime) and the special case of KSVAR (that precludes the shadow rate to affect the economy in the ELB regime) introduced in Section
\ref{s: sub_cksvar}. Specifically, looking at the reduced-form specification in equation (\ref{eq: RF Y1}), the IH can be formulated as:
\begin{equation}
\text{IH}_1: C_{12} = C^*_{12}=0 \text{ and } \widetilde{\beta}=0. \label{eq: IH}
\end{equation}
In words, $C_{12} = C^*_{12}=0$ means that lags of the short rate ($i_{t}$) and the shadow rate ($i^*_{t}$) can be excluded from the equation (\ref{eq: RF Y1}) that governs the unconstrained variables ($Y_{1t}$) in the VAR, and $\widetilde{\beta}=0$ means that the slope coefficients and the variance of the errors of those equations (for $Y_{1t}$) remain the same when the economy moves across regimes.

\paragraph{Irrelevance hypothesis 2 (IH$_{2}$)}
The second test of the IH is motivated by Proposition \ref{prop1}(i), which shows that when UMP is fully effective, the dynamics of the economy can be adequately represented by a VAR that entails the pure censoring of $i_{t}^{\ast}$ and no kink. Such a VAR is a special case of the reduced-form VAR (\ref{eq: CKSVAR-RF}) that arises when the following testable restrictions are imposed:
\begin{equation}\label{eq: IH2}
\text{IH}_2: C_{12} = 0, C_{22} = 0  \text{ and  } \widetilde{\beta}=0.
\end{equation}
In words, $C_{12} = 0$ and $C_{22} = 0$ means that the lags of the short rate ($i_{t}$) can be excluded from equation (\ref{eq: CKSVAR-RF}), and $\widetilde{\beta}=0$ means that the slope coefficients and the variance of the errors of those equations (for $Y_{1t}$) remain the same when the economy moves across regimes. We call the structural form of the VAR under IH$_{2}$ as a purely \emph{censored }SVAR (CSVAR) as the shadow rate is censored but affects the dynamics equally across regimes.

It is worth noting that both IH$_{1}$ and IH$_{2}$ do not rely on a specific model of UMP, such as the one in Section \ref{s: dsge}, as long as the long rate or the shadow rate represents the monetary policy stance, because any VAR that includes short rates must admit a CSVAR representation with constant parameters across regimes when the irrelevance hypothesis holds. Otherwise, the ELB would result in observable changes across regimes, violating the hypothesis that the ELB is empirically irrelevant.

\subsection{Attenuation effect\label{sec_att_effect}}

\cite{SwansonWilliams2014} and \cite{SG_LS_EZ_2015AEJM} argue that an
implication of a binding ELB constraint is that the effect of shocks on long-term interest rates may be attenuated when the policy rate lies at the ELB.
They investigate this phenomenon empirically using time-varying-parameter
regressions of different maturity yields on news shocks at daily frequency.

We shall use the CKSVAR model to characterize analytically and obtain formal
tests of the aforementioned attenuation effect. Specifically, we will show
analytically how the model-implied impulse response of the long yields to a
monetary policy shock on impact is attenuated during ELB-regimes relative to
non-ELB regimes, and that this attenuation effect is state-dependent and thus time-varying.

Let $IR_{j,t}$ denote the impact response of variable $j$ to an infinitesimal monetary policy
shock derived from the CKSVAR model in Equation
(\ref{eq: CKSVAR}). Also, let $IR_{j,NA}$ denote the same response under the
assumption that there is no attenuation. Then, it can be shown that (see
\OA \ref{app: attenuation}):
\begin{equation}
IR_{j,t}=\left(  1-a_{j,t}\right)  IR_{j,NA},\qquad a_{j,t}:=\frac
{\widetilde{\beta}_{j}}{\beta_{j}}\Phi\left(  \frac{\underline{i}_{t}-i_{t|t-1}^{\ast}%
}{\varpi}\right)  , \label{eq: attenuation}%
\end{equation}
where $\beta_{j}$ and $\widetilde{\beta}_{j}$ are the $j\,$th elements of the
coefficient vectors that appear in structural and
reduced-form equations, (\ref{eq: CKSVAR}) and (\ref{eq: CKSVAR-RF}), respectively, $i_{t|t-1}^{\ast
}\allowbreak:=\allowbreak C_{21}X_{1t}\allowbreak+\allowbreak C_{22}%
X_{2t}\allowbreak+\allowbreak C_{22}^{\ast}X_{2t}^{\ast}$ is the predicted
value of the shadow rate in period $t$, $\varpi$ is the standard deviation of
$i_{t}^{\ast}-i_{t|t-1}^{\ast}$ when the monetary policy shock $\varepsilon
_{2t}$ is zero, and $\Phi\left(  \cdot\right)  $ is the Standard Normal cumulative distribution function. The factor $a_{j,t}$ attenuates the impact of the monetary policy shock on $Y_{1t,j}$ as the economy approaches the ELB, provided $a_{j,t}\in\left[  0,1\right]$. This attenuation is notably time-varying and depends on the
distance of the shadow rate from the ELB $\underline{i}_{t}$. When the shadow rate is far above the ELB, $\Phi\left(  \frac{\underline{i}_{t}-i_{t|t-1}^{\ast}}{\varpi}\right)  $ is
approximately zero and there is no attenuation. The attenuation increases in $\underline{i}_{t}-i_{t|t-1}^{\ast}$, provided $\widetilde{\beta}_{j}/\beta_{j}>0$.

Equation (\ref{eq: attenuation}) allows us to derive a formal test of the
hypothesis that there is \emph{no attenuation} based only on the
\emph{reduced-form} VAR specification (\ref{eq: CKSVAR-RF}). In other words,
this test does not rely on any additional assumptions (such as sign
restrictions) that one might use to identify the structural parameters $\beta$
and the structural impulse responses. Specifically, suppose that the vector of unconstrained variables
$Y_{1t}$ in the empirical model (\ref{eq: CKSVAR-RF}) includes a long-term
yield of a particular maturity, $\bar{R}_{L,t}$, and let $\widetilde{\beta}_{L}$
denote the coefficient of the kink in the reduced-form equation associated
with $\bar{R}_{L,t}$ in equation (\ref{eq: RF Y1}). Then, the hypothesis that there
is no attenuation of the response of $\bar{R}_{L,t}$ to monetary policy shocks
arising from the ELB is given by:
\begin{equation}
    \text{H}_{NA}:\widetilde{\beta}_{L}=0. \label{eq: hna}
\end{equation}
In the theoretical model studied in Section \ref{s: dsge}, no attenuation effect is equivalent to the irrelevance of the ELB as implied by Proposition \ref{prop1}(ii). But the empirical model is less restrictive than the theoretical model, and H$_{NA}$ is clearly a weaker hypothesis than the irrelevance hypotheses IH$_{1}$
(\ref{eq: IH}) and IH$_{2}$ (\ref{eq: IH2}) defined above. Therefore, in our empirical framework, failing
to reject H$_{NA}$ does not necessarily imply that the ELB is empirically irrelevant.

We will test H$_{NA}$ using long yields at various maturities in Section
\ref{sec_attenuation}.

\section{Empirical results\label{s: data}}

This section discusses the data for the U.S. and Japan, and tests the hypothesis that the ELB has been empirically irrelevant in each country using the hypothesis testing approach that is underpinned by Proposition \ref{prop1} and established in Section \ref{s: IH}. In addition, it tests the hypothesis of no attenuation effect in the response of long-term interest rates at the ELB, as formulated in Section \ref{sec_att_effect}, and it examines whether the short-term interest rate, including the shadow rate at the ELB, is a sufficient indicator of monetary policy including UMP, as implied by Proposition \ref{prop2}.

\subsection{Data}

Our empirical analysis focuses on the U.S. and Japan. We choose data series for the baseline specification of the SVAR model to maintain the closest specification possible to related studies and thereby include representative series for inflation, output, and measures for short- and long-term yields.
\begin{figure}[ptb]%
\centering
\includegraphics[
height=2.8245in,
width=4.2289in
]%
{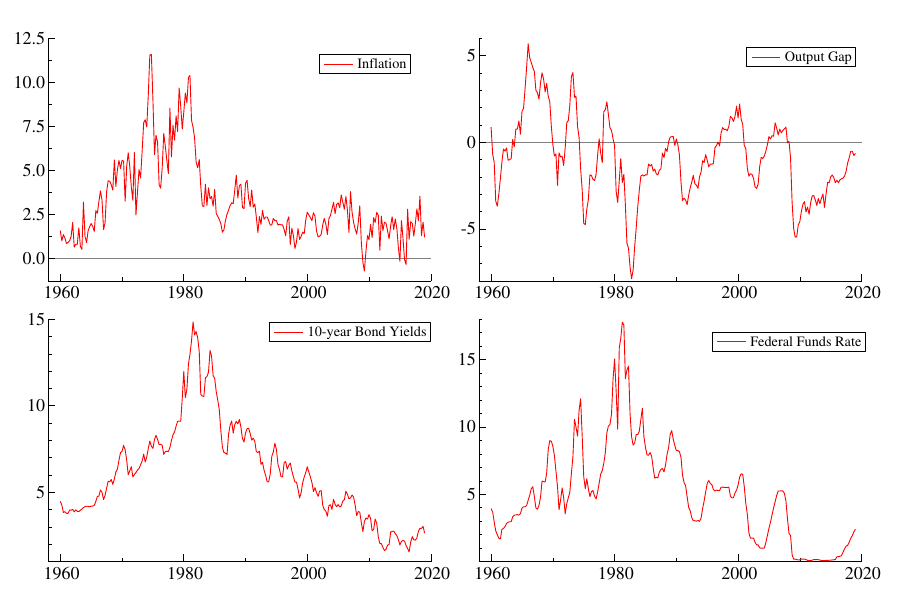}%
\caption{U.S. quarterly data}
\label{fig: US data}%
\end{figure}

For the U.S., we use quarterly data for inflation based on the GDP deflator \citep{FredGDPDEF}, a measure of the output gap \citep{FredGDP, FredGDPPOT}, the short-term interest rate from the Federal Funds Rate \citep{FredFEDFUNDS}, and the 10-year government bond yield from the 10-year Treasury constant maturity rate \citep{FredLR}. Figure \ref{fig: US data} plots these series. We also consider the different measures of monetary aggregates listed in \OA \ref{app:data}. The data are from the FRED database at the Federal Reserve Bank of St. Louis \citep{FredMA} and \cite{Divisia} databases. The estimation sample for the baseline specification is from 1960q1 to 2019q1.\footnote{See \OA \ref{app:data} for further details about the data. Alternative specifications with money are estimated over different time periods due to constraints on data availability.} We set the value of the effective lower bound on the Federal Funds Rate equal to 0.2, such that the short-term interest rate is at the ELB regime for 11 percent of the time, which is consistent with \cite{Bernanke_Reinhart_AERPP04} who suggest that the effective lower bound on nominal interest rates may be above zero for institutional reasons.%

For Japan, we use quarterly data for core CPI inflation, a measure of the output gap provided by the Bank of Japan, and the call rate. In addition, we use two alternative measures for long yields: the 9-year and the 10-year government bond yields, which are available for different sample periods. The data sources are the Bank of Japan for the output gap \citep{BOJOG} and the call rate \cite{BOJCR}, the Ministry of Finance for the 9-year and the 10-year government bond yields \cite{MOFLR}, and Statistics Bureau of Japan for core CPI inflation \citep{SBJINF}.
The available sample is from 1985q3 to 2019q1 if we include the 9-year government bond yield in the VAR, which is our baseline case, and from 1987q4 to 2019q1 if we use the 10-year yield. Following \cite{HayashiKoeda2019}, we set the ELB to track the interest on reserves (IOR) \citep{BOJIOR1,BOJIOR2}.\footnote{Specifically, ELB = IOR + 7bp, which is slightly higher than \cite{HayashiKoeda2019} who use IOR+5bp, in order to treat 2016q1 as being at the ELB.}  For the sample period 1985q3-2019q1, the call rate is at the ELB for 49 percent of the observations. Following \cite{HayashiKoeda2019}, we use a trend growth series \citep{CAOTG} to account for the declining equilibrium real interest rate in Japan during the 1990s.\footnote{Specifically, we use the annual average growth rate of potential GDP as an additional control in our model. See \citet[pp. 1081--1083]{HayashiKoeda2019} for an extended discussion of this issue and its implications.} Figure \ref{fig: JP data} plots these series.

\begin{figure}[ptb]%
\centering
\includegraphics[
height=2.8245in,
width=4.2289in
]%
{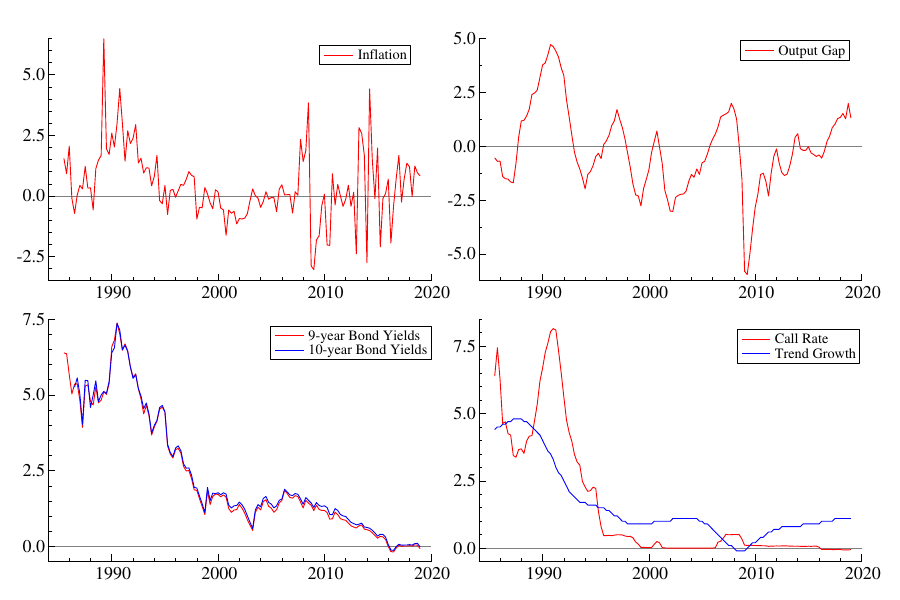}%
\caption{Japanese quarterly data}
\label{fig: JP data}%
\end{figure}

\subsection{Testing the irrelevance hypothesis of the ELB\label{s: IHtest}}

We now test the irrelevance hypothesis of the ELB for the U.S. and Japan using the two null hypotheses, IH$_{1}$ and IH$_{2}$, formalized by (\ref{eq: IH}) and (\ref{eq: IH2}), respectively.

\begin{table}[t]
\begin{centering}
\caption{Test for excluding short rates from VAR that includes long rates}%
\label{tb:excl_sr}%
\begin{tabular}
[c]{c|rrrrrr|rrrrrr}\hline\hline
\multicolumn{13}{c}{Panel A: KSVAR} \\ \hline
 & \multicolumn{6}{c|}{United States} & \multicolumn{6}{c}{Japan} \\\hline
p & loglik  & pv-p & AIC & LR & df & $p$-val & loglik &  pv-p & AIC & LR & df & $p$-val \\\hline
5 & -213.4  & - & 2.62 & 53.12 & 18 & 0.000 & 248.1 &  - & -2.18 & 27.82 & 18 & 0.065\\
4 & -221.5 & 0.446 & 2.55 & 49.57 & 15 & 0.000 & 239.9 &  0.425 & -2.30 & 28.10 & 15 & 0.021\\
3 & -234.4  & 0.112 & 2.53 & 42.13 & 12 & 0.000 & 232.2 & 0.471 & -2.42 & 28.58 & 12 & 0.004\\
2 & -266.0  & 0.000 & 2.66 & 41.93 & 9 & 0.000 & 223.8 &  0.445 & -2.53 & 25.71 & 9 & 0.002\\
1 & -296.7  & 0.000 & 2.78 & 32.87 & 6 & 0.000 & 184.8 &  0.000 & -2.19 & 32.32 & 6 & 0.000\\\hline
\multicolumn{13}{c}{} \\[-2ex]
\multicolumn{13}{c}{Panel B: CKSVAR} \\ \hline
p & loglik  & pv-p & AIC & LR & df & $p$-val & loglik &  pv-p & AIC & LR & df & $p$-val \\\hline
5 & -191.3 & - & 2.60 & 82.43 & 33 & 0.000 & 284.7 & - & -2.42 & 90.39 & 33 & 0.000\\
4 & -202.7 & 0.290 & 2.53 & 72.15 & 27 & 0.000 & 277.1 & 0.766 & -2.61 & 91.55 & 27 & 0.000\\
3 & -223.0 & 0.011 & 2.53 & 51.93 & 21 & 0.000 & 258.1 & 0.081 & -2.62 & 73.52 & 21 & 0.000\\
2 & -256.3 & 0.000 & 2.64 & 49.44 & 15 & 0.000 & 242.1 & 0.018 & -2.68 & 56.16 & 15 & 0.000\\
1 & -290.3 & 0.000 & 2.76 & 37.12 & 9 & 0.000 & 204.8 & 0.000 & -2.43 & 63.03 & 9 & 0.000 \\\hline\hline

\end{tabular}
\\
\end{centering}
\fnote{\footnotesize {Note: Panel A reports results for a KSVAR(p) with inflation, output gap, long rate and policy rate. Panel B reports corresponding results for a CKSVAR(p) that includes shadow rates. Estimation sample is 1960q1-2019q1 for the U.S. and 1985q3-2019q1 for Japan. Long rates are 10-year government bond yields for the U.S. and 9-year yields for Japan. loglik is the value of the log-likelihood. pv-p is the asymptotic $p$-value of a LR test of (C)KSVAR($p$) against (C)KSVAR($p+1$). AIC is the Akaike information criterion. LR is the test statistic for excluding short rates from equations for inflation, output gap and long rates. df is the number of restrictions. $p$-val is the asymptotic $\chi^2_{df}$ $p$-value of the test.}}\end{table}

\paragraph{Testing IH$_{1}$} The basic idea of IH$_{1}$ is that a long-term interest rate is a sufficient indicator of monetary policy stance, and thereby a short-term interest rate becomes redundant and can be excluded from the VAR under the irrelevance of the ELB. The results of the likelihood ratio test of the null hypothesis IH\textsubscript{1} in (\ref{eq: IH}) are reported in Table \ref{tb:excl_sr}. Panel A reports results based on a KSVAR model, in which lags of the shadow rate $i_{t}^{\ast}$ do not appear on the right-hand-side of equation (\ref{eq: CKSVAR-RF}). In this case, we test IH$_{1}$ against an alternative hypothesis that imposes $C^*_{12}=0$, and this test only has power against the violation of $C_{12}=0$. We do so because the KSVAR model is simpler to estimate, and a rejection of $C_{12} = 0$ would suffice to reject IH$_{1}$. Panel B reports the results using the general CKSVAR. The table reports results for specifications with different lag lengths of the VAR($p$), where $p=1,\ldots, 5$. Column pv-p reports the $p$-value of a test for selecting the number of lags in the model,which is an alternative approach to the Akaike Information Criterion (AIC), also reported in the Table. Both measures consistently select four lags for the U.S. and two lags for Japan. Column \emph{p}-val reports the asymptotic $p$-value of our test of (\ref{eq: IH}).

\begin{table}[t]
\begin{centering}
\caption{Test for excluding short rates from VAR with long rates by maturity}%
\label{tb:excl_sr_long}%
\begin{tabular}
[c]{c|rrrrr|rrrrr}\hline\hline
\multicolumn{11}{c}{Panel A: KSVAR} \\ \hline
 & \multicolumn{5}{c|}{United States} & \multicolumn{5}{c}{Japan} \\\hline
YTM & sample & p & LR & df & $p$-val & sample & p & LR & df & $p$-val \\\hline
7 & 1970q4--2019q1 & 3 & 38.69 & 12 & 0.000 & 1985q3--2019q1 & 2 & 23.20 & 9 & 0.006\\
5 & 1963q2--2019q1 & 3 & 41.93 & 12 & 0.000 & 1985q3--2019q1 & 2 & 25.15 & 9 & 0.003\\
3 & 1963q2--2019q1 & 3 & 46.32 & 12 & 0.000 & 1985q3--2019q1 & 2 & 30.97 & 9 & 0.000\\
2 & 1977q4--2019q1 & 4 & 38.82 & 15 & 0.001 & 1985q3--2019q1 & 2 & 33.05 & 9 & 0.000\\
1 & 1963q2--2019q1 & 3 & 59.65 & 12 & 0.000 & 1985q3--2019q1 & 2 & 49.90 & 9 & 0.000\\\hline
\multicolumn{9}{c}{} \\[-2ex]
\multicolumn{11}{c}{Panel B: CKSVAR} \\ \hline
YTM & sample & p & LR & df & $p$-val & sample & p & LR & df & $p$-val \\\hline
7 & 1970q4--2019q1 & 3 & 51.03 & 21 & 0.000 & 1985q3--2019q1 & 2 & 58.32 & 15 & 0.000\\
5 & 1963q2--2019q1 & 3 & 56.33 & 21 & 0.000 & 1985q3--2019q1 & 2 & 58.50 & 15 & 0.000\\
3 & 1963q2--2019q1 & 3 & 58.54 & 21 & 0.000 & 1985q3--2019q1 & 2 & 62.04 & 15 & 0.000\\
2 & 1977q4--2019q1 & 4 & 66.84 & 27 & 0.000 & 1985q3--2019q1 & 2 & 63.77 & 15 & 0.000\\
1 & 1963q2--2019q1 & 3 & 73.06 & 21 & 0.000 & 1985q3--2019q1 & 2 & 72.41 & 15 & 0.000\\\hline\hline
\end{tabular}
\\
\end{centering}
\fnote{\footnotesize {Note: Panel A reports results for a KSVAR(p) with inflation, output gap, long rate, and policy rate. Panel B reports corresponding results for a CKSVAR(p) that includes shadow rates. Estimation sample varies for each long-term rate used for the U.S. and 1985q3--2019q1 for Japan. Long rates are government bond yields with year to maturity reported in the first column (YTM). p is the preferred lag length selected by AIC criteria. LR is the test statistic for excluding short rates from equations for inflation, output gap and long rates. df is the number of restrictions. $p$-val is the asymptotic $\chi^2_{df}$ $p$-value of the test.}}\end{table}

Table \ref{tb:excl_sr} shows that the data strongly reject the exclusion restrictions implied by IH$_{1}$ for both countries and in both the KSVAR and CKSVAR specifications. Hence, the short-term interest rate cannot be excluded from the VAR, and thus it makes the dynamics of the system differ between the non-ELB and ELB regimes even if the long-term interest rate -- 10-year yields for the U.S. and 9-year yields for Japan -- is added to the VAR. The result continues to hold for Japan with 10-year yields, which are available from 1987q3, and is reported in Table \ref{tb:excl_sr_10yearJP} in \OA \ref{app: empirical}.

The result of the rejection of IH$_{1}$ continues to hold even if we use yields with shorter maturities. Table \ref{tb:excl_sr_long} shows the results for the null hypothesis IH\textsubscript{1} when we include yields with maturities in the range from 1 to 7 years in the VAR for the U.S. \citep{FredLRall} and Japan \citep{MOFLR}. The entries show results for the VAR specification with the preferred number of lags according to the AIC. The results show that the exclusion restrictions from IH$_{1}$ are strongly and consistently rejected across the whole range of maturities for both countries. Our results imply that a government bond yield with maturity within the 1- to 10-year range cannot replace the shadow rate as the indicator of monetary policy stance, and the dynamics of the economy differ between the non-ELB and ELB regimes.

\paragraph{Testing IH$_{2}$}
The idea of the hypothesis IH$_{2}$ in (\ref{eq: IH2}) is that once the shadow rate is included in the VAR, the short-term interest rate can be excluded from the VAR under the irrelevance hypothesis. We test IH$_{2}$ with the three core observables, inflation and output gap in $Y_{1t}$, and the short-term interest rate, and we also include the long-term interest rate in $Y_{1t}$ for robustness.
Under IH$_{2}$ the shadow rate is censored in the ELB regime but affects $Y_{1t}$ equally in both regimes according to the reduced form equation (\ref{eq: CKSVAR-RF}).
We test IH\textsubscript{2} using a likelihood ratio test as we tested IH$_{1}$.

Table \ref{tb:csvar_lr_long} reports the results of the likelihood ratio test of IH\textsubscript{2} for the U.S. and Japan. The results for our baseline specifications are reported in the row starting with `10 or 9' in the first column. We include $4$ lags for the U.S. and $2$ lags for Japan, according to the AIC. The results show that the IH$_{2}$ is rejected for both economies at the 5 percent level of significance. The result continues to hold for Japan with 10-year yields and is reported in Table \ref{tb:csvar-cksvar_10yearJP} in \OA \ref{app: empirical}. Table \ref{tb:csvar_lr_long} also shows that the result of the rejection of IH$_{2}$ continues to hold even if we include an alternative yield within the range of 1 to 7 years in $Y_{1t}$.

\begin{table}[t]
\begin{centering}
\caption{Testing CSVAR against CKSVAR with long rates by maturity}%
\label{tb:csvar_lr_long}%
\begin{tabular}
[c]{c|lcrcc|lcrcc}\hline\hline
 & \multicolumn{5}{c|}{United States} & \multicolumn{5}{c}{Japan} \\\hline
YTM & sample & p & LR & df & $p$-val & sample & p & LR & df & $p$-val \\\hline
10 or 9 & 1960q1--2019q1 & 4 & 34.42 & 19 & 0.016 & 1985q3-2019q1 & 2 & 51.02 & 11 & 0.000 \\
7 & 1970q4--2019q1 & 3 & 26.80 & 15 & 0.030 & 1985q3--2019q1 & 2 & 50.86 & 11 & 0.000\\
5 & 1963q2--2019q1 & 3 & 29.26 & 15 & 0.015 & 1985q3--2019q1 & 2 & 49.27 & 11 & 0.000\\
3 & 1963q2--2019q1 & 3 & 31.83 & 15 & 0.007 & 1985q3--2019q1 & 2 & 53.09 & 11 & 0.000\\
2 & 1977q4--2019q1 & 4 & 45.49 & 19 & 0.001 & 1985q3--2019q1 & 2 & 56.03 & 11 & 0.000\\
1 & 1963q2--2019q1 & 3 & 51.04 & 15 & 0.000 & 1985q3--2019q1 & 2 & 70.62 & 11 & 0.000\\\hline\hline
\end{tabular}
\\
\fnote{\footnotesize Note: The estimated model is a CKSVAR(p) for the U.S. and Japan with inflation, output gap, policy rate, and a measure of long-term rate. Long rates are government bond yields with year to maturity reported in the first column (YTM). As baseline specifications, the 10-year yield is used for the U.S. and the 9-year yields is used for Japan. Sample availability varies for each long-term rate used. LR is the value of the LR test statistic. df is the number of exclusion restrictions. $p$-val is the asymptotic $\chi^2_{df}$ $p$-value of the test.}
\end{centering}
\end{table}

\paragraph{Robustness checks}\label{robust_check_editor} It is possible that the short rate is necessary to explain long rates even if the short rate is not necessary to characterize inflation and output dynamics. This could occur if real rates trend over time, so having both in the VAR picks up low-frequency movements.
We address this concern by testing IH$_{1}: C_{12}=C_{12}^{\ast}=0$ and $\tilde{\beta}=0$ for the inflation and output equations only while allowing the short and shadow rates to affect the long rate.
We find that this weaker version of IH$_{1}$ is also robustly rejected (Table \ref{tb:excl_sr_no_inf_out} in \OA \ref{app: empirical}). 

Next we check the robustness of our results to possible omission of alternative channels of unconventional monetary policy, by adding money growth to the $Y_{1t}$ variables of the VAR that we use to test the null hypothesis IH\textsubscript{1} in (\ref{eq: IH}). Using several different monetary aggregates for the U.S., we consistently reach the same conclusion: the IH\textsubscript{1} is firmly rejected (Table \ref{tb:excl_sr_mon} in \OA \ref{app: empirical}).

\label{robust_check}We also check the robustness of the U.S. results to the well-documented fall in macroeconomic volatility in the mid-1980s, known as the Great Moderation, as well as a possible change in monetary policy regime occurring at that time by performing the same tests over the subsample 1984q1--2019q1 (see Tables \ref{tb:excl_sr_sub} and \ref{tb:csvar-cksvar_sub} in \OA \ref{app: empirical}). In addition, \cite{Dario_AEJM19} show that accounting for the endogenous reaction of financial conditions is critical to avoiding an attenuation bias in the responses of variables to monetary policy shocks. \cite{GZ_AER12} use credit spreads to internalize the central contribution of financial frictions 
during ELB episodes. Thus, we check the robustness of our results by including credit spreads \citep{GZ_AER_DATA, FredCS} in the VAR (see Tables \ref{tb:excl_sr_app}-\ref{tb:csvar-cksvar_app} in \OA \ref{app: empirical}). Our conclusion remains the same: the IH$_1$ and IH$_2$ are firmly rejected.

Finally, we check the power of our irrelevance tests IH$_1$ and IH$_2$ by generating simulated series from our theoretical model for values of $\xi^{\ast}$ in the range $[0.7, 0.99]$. Since the theoretical model is solved under Proposition \ref{prop2}, $\xi^{\ast}$ in the theoretical model coincides with $\xi$ in the empirical model and the simulated series are equivalent between the two models. We find that the rejection rate of the tests declines as $\xi^{\ast}$ approaches the value of one (i.e., the irrelevance hypothesis holds true), showing that our tests are powerful. The result is reported in Tables \ref{tb:excl_sr_tab1} and \ref{tb:excl_sr_tab2} in \OA \ref{app: empirical}.

\subsection{Testing the attenuation effect\label{sec_attenuation}}

We use our CKSVAR model to test the null hypothesis of no attenuation in the response of long-term interest rates to monetary policy shocks at the ELB, formalised as H$_{NA}$ in \eqref{eq: hna} in Section \ref{sec_att_effect}. Table \ref{tb:no_atten} reports the results of the test for the VAR model that includes long rates of different maturities ranging from 1 year to 10 years. The null is firmly and consistently rejected across different yields to maturity for both countries.
The result implies that long rates of different maturities have responded differently to monetary policy shocks between the ELB and non-ELB regimes in both the U.S. and Japan.

\label{ref_reply_SW}Our approach to testing no attenuation effect differs from the one employed by \cite{SwansonWilliams2014} who find that responses of 1- and 2-year U.S. bond yields to various macroeconomic news shocks, identified using high frequency data over the period 1990--2012, were not attenuated throughout 2008 to 2010 but became attenuated in late 2011. We instead focus on responses to a monetary policy shock -- a shock to the shadow rate using quarterly data over a longer sample. We discuss the implications of different sample periods at the end of \OA \ref{app: empirical}.

\begin{table}[t]
\begin{centering}
\caption{Test for no attenuation}%
\label{tb:no_atten}%
\begin{tabular}
[c]{c|rrr|rrr}\hline\hline
 & \multicolumn{3}{c|}{United States} & \multicolumn{3}{c}{Japan} \\\hline
YTM & p & LR & $p$-val & p & LR & $p$-val \\\hline
10 or 9 & 4 & 15.84 & 0.000 & 2 & 16.63 & 0.000\\
7 & 3 & 10.33 & 0.001 & 2 & 14.95 & 0.000\\
5 & 3 & 13.19 & 0.000 & 2 & 15.13 & 0.000\\
3 & 3 & 17.15 & 0.000 & 2 & 27.70 & 0.000\\
2 & 4 & 16.81 & 0.000 & 2 & 33.17 & 0.000\\
1 & 3 & 35.90 & 0.000 & 2 & 46.06 & 0.000\\\hline\hline
\end{tabular}
\\
\end{centering}
\fnote{\footnotesize {Note: The table reports corresponding results for a CKSVAR(p) that includes shadow rates. Estimation sample is 1960q1--2019q1 for the U.S. and 1985q3--2019q1 for Japan. Long rates are government bond yields with year to maturity reported in the first column. As baseline specifications, the 10-year yield is used for the U.S. and the 9-year yield is used for Japan. p is the preferred lag length selected by AIC criteria. LR is the likelihood ratio test statistic for the hypothesis of no attenuation. $p$-value is the asymptotic $p$-value of the text.}}\end{table}

\subsection{Testing the (ir)relevance of long rates}

Our statistical tests rejected the irrelevance hypothesis of the ELB and
 the possibility of excluding the short rate by controlling for the long rate.
We now assess whether movements in the short rate, including the shadow rate during ELB regimes, are sufficient to encapsulate the effects of both conventional and unconventional monetary policies for inflation and output by testing the exclusion restriction on the long rate. Proposition \ref{prop2} implies that the exclusion restriction on the long rate holds in the theoretical model.

\begin{table}[t]
\begin{centering}
\caption{Test for excluding long rates from VAR by maturity}%
\label{tb:cksvar_excl_long}%
\begin{tabular}
[c]{c|lcrcc|lcrcc}\hline\hline
 & \multicolumn{5}{c|}{United States} & \multicolumn{5}{c}{Japan} \\\hline
YTM & sample & p & LR & df & $p$-val & sample & p & LR & df & $p$-val \\\hline
10 or 9 & 1960q1--2019q1 & 3 & 2.43 & 6 & 0.876 & 1985q3--2019q1 & 2 & 6.32 & 4 & 0.177 \\
7 & 1970q4--2019q1 & 3 & 5.15 & 6 & 0.524 & 1985q3--2019q1 & 2 & 10.38 & 4 & 0.035\\
5 & 1963q2--2019q1 & 3 & 4.99 & 6 & 0.545 & 1985q3--2019q1 & 2 & 7.00 & 4 & 0.136\\
3 & 1963q2--2019q1 & 3 & 7.05 & 6 & 0.316 & 1985q3--2019q1 & 2 & 4.99 & 4 & 0.288\\
2 & 1977q4--2019q1 & 4 & 11.53 & 8 & 0.174 & 1985q3--2019q1 & 2 & 3.40 & 4 & 0.493\\
1 & 1963q2--2019q1 & 3 & 11.70 & 6 & 0.069 & 1985q3-2019q1 & 2 & 5.19 & 4 & 0.268\\\hline\hline
\end{tabular}
\\
\fnote{\footnotesize Note: The estimated model is a CKSVAR(p) for the U.S. and Japan with inflation, output gap, policy rate and a different measure of long-term rates. Long rates are government bond yields with year to maturity reported in
the first column (YTM). As baseline specifications, the 10-year yield is used for the U.S. and the 9-year yield is used for Japan. Sample availability varies for each long-term rate used. The null is that the lags of long rates can be excluded from the equations describing the dynamics of inflation and the output gap from the unrestricted model. $p$ is chosen by AIC. LR is the value of the LR test statistic, df is number of exclusion restrictions, and $p$-val is the asymptotic $\chi^2_{df}$ $p$-value of the test.}
\end{centering}
\end{table}

We perform this test using the CKSVAR model that includes inflation, output gap, the long rate, and the short rate. 
The null hypothesis is that lags of the long rate 
can be excluded from the equations for inflation and output gap. In performing this test, we impose the assumption that the short rate does not react to the long rate by imposing zero restrictions on the lags of long rates in the interest rate equation (\ref{eq: RF Y2}). This is consistent with the standard Taylor rule (\ref{Taylor}) in the theoretical model, and our modelling of QE in equation (\ref{QErule}). Table \ref{tb:cksvar_excl_long} reports the results using long rates of various maturities. It shows that the null hypothesis cannot be rejected and that the long rate can be excluded from the model at the 5 percent level of significance in both countries and at most maturities. The result implies that the short rate and the shadow rate may be sufficient indicators of monetary policy during the non-ELB and ELB regimes in both countries.

\section{The effectiveness of UMP}\label{sec: impact of mp}

Our testing results established that the dynamics of the economy are different across the ELB and non-ELB regimes for both the U.S. and Japan, leading us to conclude that the ELB has been empirically relevant. But the results are silent on the magnitude of the differences in the effects of monetary policy between the two regimes.  Here we address this issue by estimating the (partially identified) impulse responses to a monetary policy shock from the CKSVAR models over time to gauge the effectiveness of UMP relative to conventional monetary policy.
In doing so, we use inflation, output gap, the short rate, and the shadow rate for the CKSVAR models, as this specification is broadly in line with the test results obtained in Section \ref{s: data}.

\subsection{State-dependent impulse responses}\label{sec: IRFs}

Since the empirical model is nonlinear, the impulse response functions (IRFs) are state-dependent. We will follow the approach in \cite{KoopPesaranPotter1996}, already used in Section \ref{s: simulations}, according to which the IRF to a monetary policy shock of magnitude $\varsigma$ is given by the difference in the expected path of the endogenous variables when the policy shock takes the value $\varsigma$, versus the path when the shock is zero, conditional on the state of the economy prior to the shock. This approach is the
most commonly used in the literature, see, e.g., \cite{HayashiKoeda2019}. In our model, there is an additional complication that lagged shadow rates are unobserved, so we evaluate the IRFs at the smoothed estimates of those latent variables. In the same notation of the empirical model in Section \ref{s: cksvar}, let $Y_{t}\equiv (Y_{1t}^{\prime}, i_{t})^{\prime}$ denote the vector of endogenous variables, and $\overline{X}_{t,j}^{\ast}$ denote a state vector whose $j$-th component is given by $\min(  i_{t-j}^{\ast}-\underline{i}_{t-j},0)$ for
$j=1,...,p$, where $p$ is the order of the VAR. Then, our IRFs starting from period $t$ up to the horizon $h$ are given by:
\begin{equation}
IRF_{h,t}\left(  \varsigma,X_{t},\widehat{\overline{X}}_{t}^{\ast}\right)
=E\left(  Y_{t+h}|\varepsilon_{2t}=\varsigma,X_{t},\widehat{\overline{X}}%
_{t}^{\ast}\right)  -E\left(  Y_{t+h}|\varepsilon_{2t}=0,X_{t}%
,\widehat{\overline{X}}_{t}^{\ast}\right)  ,\label{eq: IRFs}%
\end{equation}
where $X_{t}$ consists of the lagged values of $Y_{t-j}$ for $j=1,...,p$, and $\widehat{\overline{X}}_{t}^{\ast}$ is the smoothed estimate of the
state vector $\overline{X}_{t}^{\ast}$ when it is unobserved.

\subsection{The impact effects of UMP\label{sec_foot_shintani}}

As explained in Section \ref{s: cksvar}, the IRFs are generally set-identified unless we assume there is no contemporaneous effect of UMP
on $Y_{1t}$, which corresponds to setting $\xi=0$ in the CKSVAR model.
We will not be imposing such an assumption in our analysis. We proceed by first obtaining the identified set for $\xi$, $\beta$ and $\gamma$ by solving equations (\ref{eq: betatilde}) and
(\ref{eq: gamma}) at the estimated values of $\widetilde{\beta}$ and $\Omega,$
as explained in Section \ref{s: cksvar} (see the discussion following equations (\ref{eq: betatilde}) and (\ref{eq: gamma})), and then we simulate the paths of the empirical model at each of the values of the structural parameters in the identified set.

The estimation results for the parameter $\xi$ are as follows. Recall that the parameter $\xi$ determines the impact effect of UMP, where the two limiting cases of $\xi=0$ and $\xi=1$ correspond to UMP being completely ineffective on impact and as effective as conventional policy in non-ELB regimes on impact, respectively. When we restrict the range of $\xi$ to $\left[  0,1\right]  $ and impose no further
identifying restrictions, the identified set for $\xi$ is $[0,0.75]$ for the
U.S. and $[0,0.34]$ for Japan.

We sharpen the identified sets by using sign restrictions. We follow \cite{DebortoliGaliGambetti2019}
and impose the restrictions that a negative monetary policy shock should have a nonnegative effect on inflation and output, and a nonpositive effect on interest rates at a one-year horizon.\footnote{Note that because IRFs are state-dependent, these sign restrictions need to be imposed for all values of the initial states.  In principle, this means working out the worst cases over the support of the distribution of the variables. However, a very similar conservative estimate of the identified set can be obtained if we simply impose the sign restrictions in every period.} These sign restrictions are consistent with the theoretical model studied in Section
\ref{s: dsge} (see Figure \ref{Fig:MPshock}). With these sign restrictions, the identified set for the parameter $\xi$ narrows down substantially for the U.S., from $[0,0.75]$ to $[0.71,0.73]$. For Japan the impact of the sign restrictions is more modest, from $[0,0.34]$ to $[0,0.26]$.

\label{foot_shintani}\OA \ref{app_Chole} compares our identification based on sign restrictions against the standard Choleski identification that imposes restrictions on the impact response of the variables to the monetary policy shock. Consistent with the results in \cite{GertlerKaradi2015} and \cite{kubota2022macro}, we find that the Choleski identification presents several puzzling responses.

\begin{figure}[t]
\caption{Impulse responses to a monetary policy shock in the U.S.}%
\label{fig: set_irf_us}
\centering
\includegraphics[
height=3.8245in,
width=6.2289in
]{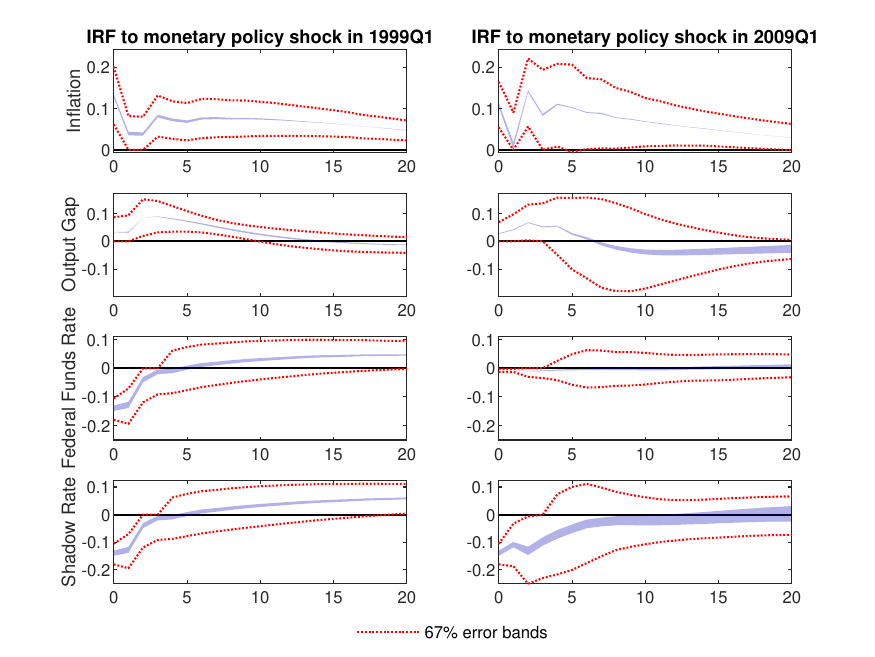}
\fnote{\footnotesize{Note: Identified sets of IRFs in 1999q1 and 2009q1 to a -25bps monetary policy shock estimated from CKSVAR(3) model in inflation, output gap, the Federal Funds Rate, and the shadow rate for the U.S. over the period 1960q1-2019q1, identified by the sign restrictions that the shock has nonnegative effects on inflation and output and nonpositive effects on the federal funds rate and the shadow rate up to four quarters. Dotted lines give 67 percent error bands.}}
\end{figure}

\begin{figure}[tbh]
\caption{Impulse responses to a monetary policy shock in Japan}%
\label{fig: set_irf_jp}
\centering
\includegraphics[
height=3.8245in,
width=6.2289in
]{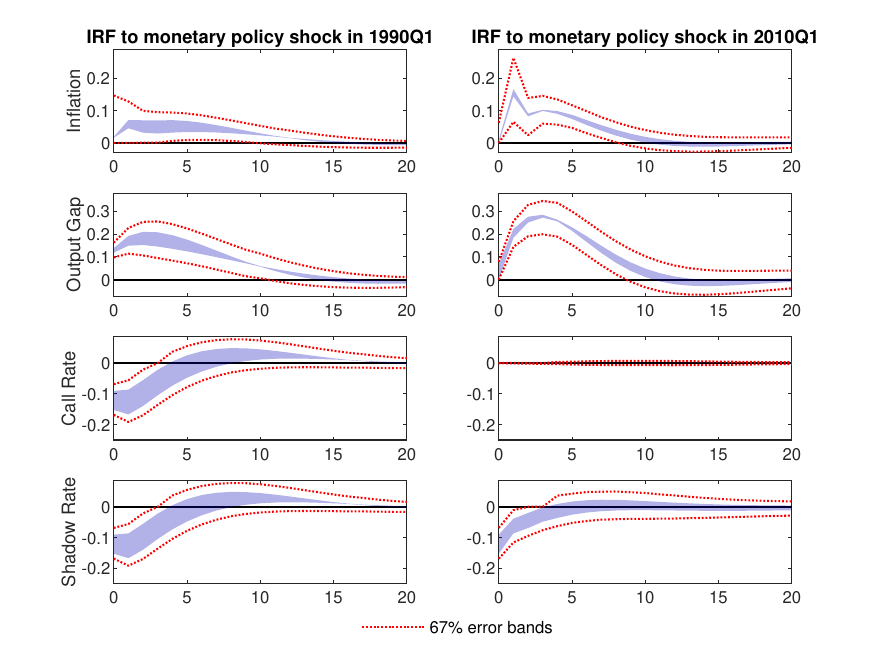}
\fnote{\footnotesize{Note: Identified sets of IRFs in 1990q1 and 2010q1 to a -25bps point monetary policy shock estimated using a CKSVAR(2) model in inflation, output gap, the call rate, and the shadow rate for Japan over the period 1985q3-2019q1, identified by the sign restrictions that the shock has nonnegative effects on inflation and output and nonpositive effects on the call rate and the shadow rate up to four quarters. Dotted lines give 67 percent error bands.}}
\end{figure}

\subsection{The dynamic effects of UMP}

Now we study the dynamic effects of UMP by examining the IRFs for the identified sets of $\xi$ for the U.S. and Japan. Figures \ref{fig: set_irf_us} and \ref{fig: set_irf_jp} report the identified IRFs of inflation, the output gap, the policy rate, and the shadow rate to a -25 basis points monetary policy shock. The figure also reports asymptotic confidence intervals obtained using the method of \cite{ImbensManski2004}, where we also impose the sign restrictions on the confidence bands, as in \cite{GranzieraMoonSchorfheide2018}. The IRFs are computed at two different dates: the left panels report IRFs at dates when interest rates are well above the ELB (1999q1 for the U.S. and 1990q1 for Japan) so that monetary policy is conventional; the right panels report IRFs at dates when interest rates are at the ELB (2009q1 for the U.S. and 2010q1 for Japan) so that monetary policy is unconventional. 

The policy effects differ across the two periods. For both countries, the conventional monetary policy shock -- the reduction in the short-term interest rate -- has a larger \emph{contemporaneous} effect on all variables in the pre-ELB dates than the corresponding unconventional policy shock -- the reduction in the shadow rate -- during the ELB dates, and the difference is larger in Japan than in the U.S. This is because $\xi$ is estimated lower in Japan than in the U.S. However, in Japan the impulse responses to the unconventional policy shock appear to be stronger a few quarters out.\footnote{The reason why the delayed effects of UMP can be stronger than conventional policy even though $\xi<1$ is because in the empirical model the coefficients on the lags of the shadow rate are completely unrestricted. This is more general than the theoretical model of Section \ref{s: dsge} with the monetary policy rule (\ref{Taylor}), where the coefficient on the lagged shadow rate was restricted to be a constant fraction $\lambda^{\ast}$ of the coefficient on the lagged policy rate above the ELB. The result of that restriction was that $\lambda^{\ast}<1$ restricted UMP to have a uniformly weaker effect than conventional policy over all horizons. We did not need to impose this overidentifying restriction in the empirical analysis.} In addition, the responses of short rates and shadow rates are identical when both rates are far above the ELB. By contrast, the shadow rates decrease whilst the responses of short rates are muted during the ELB dates. The reductions in shadow rates affect inflation and output through unconventional policy during the ELB regime.

\begin{figure}[t]
\caption{Responses to monetary policy shock in the U.S. over time}%
\label{fig: set_irf_us_cum}
\centering
\includegraphics[
height=3.5245in,
width=5.6289in
]{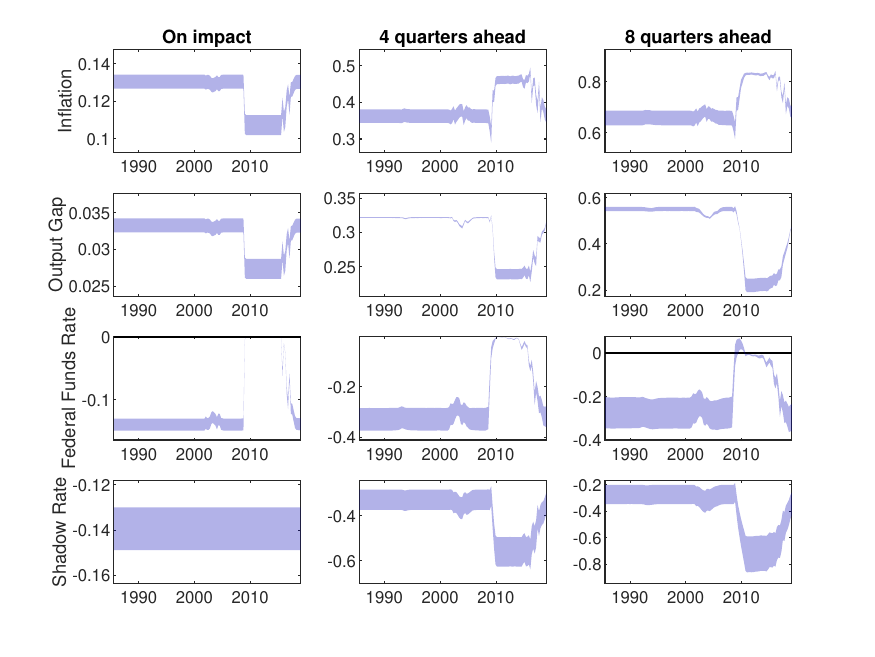}
\vspace{.1cm}
\fnote{\footnotesize {Note: Identified sets of (cumulative) impulse responses to a negative -25bps monetary policy shock at three different horizons, on impact, one year out and two years out, estimated using a CKSVAR(3) model in inflation, output gap, the Federal Funds Rate, and the shadow rate for the U.S. over the period 1960q1-2019q1, identified by the sign restrictions that the shock has nonnegative effects on inflation and output and nonpositive effects on the federal funds rate and the shadow rate up to four quarters.}}
\end{figure}

\begin{figure}[tbh]
\caption{Responses to monetary policy shock in Japan over time}%
\label{fig: set_irf_jp_cum}
\centering
\includegraphics[
height=3.5245in,
width=5.6289in
]{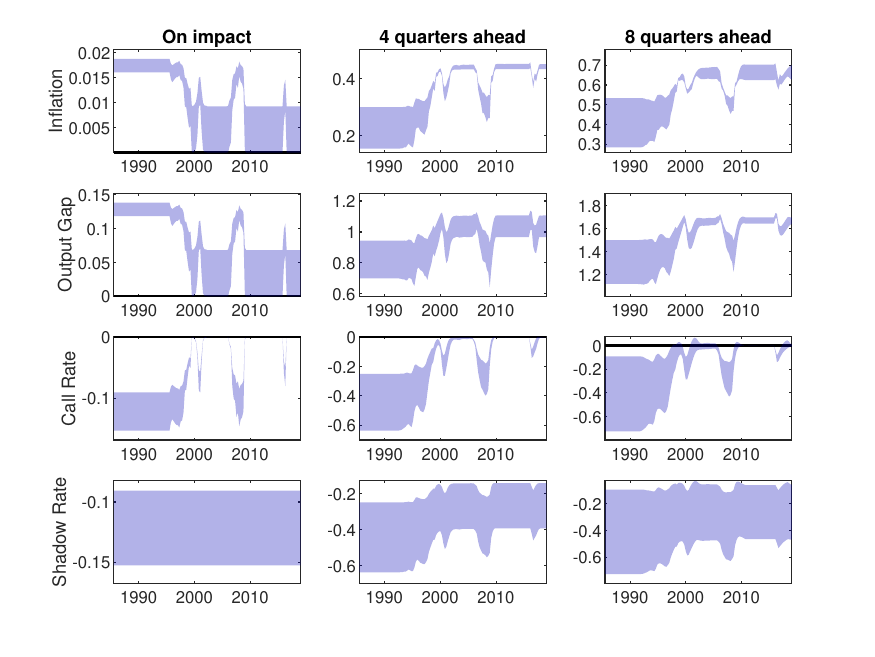}
\vspace{.1cm}
\fnote{\footnotesize {Note: Identified sets of (cumulative) impulse responses to a -25bps monetary policy shock at three different horizons, on impact, one year out and two years out, estimated using a CKSVAR(2) model in inflation, output gap, the call rate, and the shadow rate for Japan over the period 1985q3-2019q1, identified by the sign restrictions that the shock has nonnegative effects on inflation and output and nonpositive effects on the call rate and the shadow rate up to four quarters.}}
\end{figure}

To shed further light on the differences in the responses over time, and in light of the fact that the IRFs are time-varying, we look at the evolution of the impulse responses at given horizons, 0, 4 and 8 quarters, over time. The results are reported in Figures \ref{fig: set_irf_us_cum} and \ref{fig: set_irf_jp_cum} for the U.S. and Japan, respectively. In each figure, the panels on the left column report the impact effects of a -25 basis point monetary policy shock at each quarter from 1985q3 to the end of our sample. The panels in the middle column show the cumulative impulse responses after a year, while the panels on the right column give the corresponding cumulative responses after two years. 

Before discussing each country, it is worth noting that the impact effect of the shock on shadow rates is invariant and non-zero over time in both countries. This is because the shadow rates are unrestricted across both ELB and non-ELB regimes. By contrast, the responses of short rates vary according to the probability of entering the ELB regime over time. When the economy is far above the ELB, the effects on short rates are identical to those on shadow rates. As the short rates approach the ELB, the probability that the ELB constraint binds increases, and the average response of short rates diminishes.

Now we discuss each country in turn.
In the U.S. (Figure \ref{fig: set_irf_us_cum}), we see a drop in the impact effect of policy during the ELB period relative to the pre-ELB period. The relative difference in the effectiveness of policy on both inflation and output on impact is over 15 percent. For the output gap, this difference remains, and increases one and two years ahead. However, the effect on inflation is the reverse: the cumulative effect of UMP on inflation is much stronger one and two years ahead. Therefore, UMP in the U.S. seems to have had a delayed but strong effect on inflation, but has been persistently less effective on output than conventional policy.

In Japan (Figure \ref{fig: set_irf_jp_cum}), there is a clear drop in the contemporaneous effect of policy on inflation and output during the ELB periods. There are three distinguishable ELB periods, 1999q2-2000q2, 2001q2-2006q2, and 2009q1 to the end of the sample.  The contemporaneous policy effect on inflation is negligible, but the delayed effect one and two years later is stronger during the ELB periods than outside them. Like for the U.S., there is a stronger delayed effect of UMP on inflation in Japan. Turning to the policy effect on output, we see that UMP is more than 50 percent weaker on impact, but catches up within one year, and stays stronger two years out. So, unlike the U.S., where UMP has smaller effects on output at all horizons, in Japan, this is not the case.\footnote{The estimation results for Japan are based on a second-order VAR selected by the AIC and the sequential LR tests. Because these criteria are known to lead to overfitting and our sample for Japan is relatively small (34 years), we investigated the robustness of the above results to a first-order VAR, and found that our conclusions continue to hold. The results are available through our replication code.}

In sum, in the U.S. the response of inflation to UMP has been stronger than to conventional policy one or two years ahead, while it has been weaker for output over the same horizon. In Japan, UMP has had weaker effects than conventional policy on inflation and output on impact, but has had stronger delayed effects on these variables. In general, the findings show that the effects of UMP have been different across time, lending further support to our earlier finding that the ELB has been empirically relevant in both countries over that period.\footnote{\OA \ref{app:shadow_rates} reports estimates of the shadow rate for Japan and the U.S. that can be interpreted as indicators of the desired monetary policy stance in terms of the short-term interest rate during ELB regimes.}

\subsection{Discussion}\label{sec:discussion}

Our analysis of the Japanese data draws heavily on the seminal contribution of \cite{HayashiKoeda2019}. Specifically, the use of trend growth that they proposed to control for the decline in the short rate over our sample is essential to get a VAR that satisfies the sign restrictions on the IRFs. There are several apparent methodological differences between our papers. \cite{HayashiKoeda2019} use monthly data over a shorter period 1992-2012, while we use quarterly data from 1985 to 2019. They model QE via excess reserves and FG via an exit condition on inflation, while we rely on the shadow rate to capture both forms of UMP and other forms of UMP such as the purchases of long-term government bonds, motivated by the theoretical model of Section \ref{s: dsge}. They use recursive identification as well as exclusion restrictions on the dynamics of the policy reaction function, assuming that inflation and output are predetermined, while we do not, and rely instead on sign restrictions and the changes in the dynamics and variances across regimes for identification. However, these differences are not as large as they appear. For example, their results show that inflation and output are not predetermined at quarterly frequency, which is consistent with our findings, since the identified impact effect of monetary policy easing on inflation and output is positive within the quarter. Both models have regime-dependent decision rules that are fairly similar when translated to quarterly frequency. Finally, even though they provide convincing documentation that an inflation exit condition fits better the narrative of Japanese monetary policy over their sample period, the use of a \cite{ReifschneiderWilliams2000} FG rule appears nearly observationally equivalent to an inflation exit condition because of the relative scarcity of movements in and out of the ELB regime over the sample. This explains why our conclusions are broadly consistent with theirs.\footnote{For example, the rejection of the irrelevance hypothesis (\ref{eq: IH}) for Japan is due to both $\widetilde{\beta}\neq 0$ and $C^*_{12}\neq 0$. This accords with \cite{HayashiKoeda2019}, who report significant changes both in the constant term as well as the coefficient on the lag of the policy variable across regimes.}

Finally, our theoretical model abstracts from possible negative effects of UMP such as those of `the reversal interest rate' \citep{BrunnermeierKoby19}, and our empirical analysis excludes those effects through sign restrictions. If we remove the sign restrictions during ELB regimes and maintain them during non-ELB regimes, we can allow $\xi$ to be negative, which can capture policy reversals on impact. It turns out that removing the sign restrictions during the ELB periods does not affect the identified set for $\xi$ for the impulse responses for the U.S., while the effect for Japan is limited.\footnote{The identified set for $\xi$ for Japan becomes $[-0.08,0.29]$, which includes negative responses of inflation and output to a decrease in the shadow rate under the ELB (an expansionary UMP shock) on impact. However, the delayed responses are not significantly affected, and thus any possible negative effects of UMP are short-lived.}

\section{Conclusion\label{s: conclusion}}

The paper develops theoretical and empirical models to study the effectiveness
of unconventional monetary policy in the U.S. and Japan. The theoretical model
allows the degree of effectiveness of unconventional policy to range from
being as effective as conventional policy to being completely ineffective, and it provides theoretical underpinnings to the empirical model and our approach to testing the irrelevance hypothesis of the ELB. The empirical analysis is based on
an agnostic structural VAR model that accounts for the ELB on the policy rate and captures unconventional policy via the shadow rate. Our results provide strong evidence
against the hypothesis that the ELB is empirically\ irrelevant, and support the view that the ELB has been an important constraint on monetary policy in both the U.S. and Japan. However, our results also reveal strong delayed effects and country-specific differences in the effect of unconventional policy relative to conventional policy.

\bibliographystyle{aea}
\bibliography{sophAll}

{\normalsize \appendix\newpage}
\begin{flushleft}
\Large{\textbf{Appendix}}
\end{flushleft}

\footnotesize
\titleformat*{\section}{\normalsize\bfseries}
\titleformat*{\subsection}{\footnotesize\bfseries}
\titleformat*{\subsubsection}{\footnotesize\bfseries}
\titleformat*{\paragraph}{\footnotesize\bfseries}

\numberwithin{equation}{section}
\setcounter{equation}{0}

\appendix
\renewcommand{\baselinestretch}{1}

\section{Theoretical Model\label{A: theoretical model}}

Appendix \ref{A: theoretical model} presents a simple New Keynesian model with an effective lower bound (ELB) and unconventional monetary policy (UMP). The model is a
version of a preferred habitat model such as \cite{ChenCurdiaFerrero2012}, extended to incorporate two things: a policy rule for quantitative easing (QE) that is operated using a shadow rate as policy guidance and forward guidance (FG) in the spirit of \cite{ReifschneiderWilliams2000}. To keep the analysis
focused on the salient features of the transmission mechanisms of UMP, the
model abstracts from capital accumulation and consumption habit formation. There are three shocks: a demand (preference) shock, a
supply (productivity) shock, and a monetary policy shock.

\subsection{Model building blocks}\label{A: equilibrium conditions}

\subsubsection{Long-term bonds\label{app_long_bonds}}

There is a long-term government bond (consol bond). The long-term bond issued at time $t$ yields $\mu^{j-1}$ dollars at time $t+j$
over time. Let $R_{L,t+1}$ denote the gross nominal rate from time $t$ to
$t+1$. The period-$t$ price of the bond issued at time $t$, $P_{L,t}$, is
defined as
\begin{align}
P_{L,t}  &  =E_{t}\left(  \frac{1}{R_{L,t+1}}+\frac{\mu}{R_{L,t+1}%
R_{L,t+2}}+\frac{\mu^{2}}{R_{L,t+1}R_{L,t+2}R_{L,t+3}}+...\right)
\nonumber\\
&  =E_{t}\left(  \frac{1}{R_{L,t+1}}+\frac{\mu}{R_{L,t+1}}P_{L,t+1}\right)
. \label{Pl}%
\end{align}
The gross yield to maturity (or the long-term interest rate) at time $t$,
$\bar{R}_{L,t}$ is defined as%
\[
E_{t}\left(  \frac{1}{\bar{R}_{L,t}}+\frac{\mu}{\left(  \bar{R}%
_{L,t}\right)  ^{2}}+\frac{\mu^{2}}{\left(  \bar{R}_{L,t}\right)  ^{3}%
}+...\right)  =P_{L,t},
\]
or%
\begin{equation}
P_{L,t}=\frac{1}{\bar{R}_{L,t}-\mu}. \label{Rlbar}%
\end{equation}

Let $B_{L,t|t-s}$ denote period-$t$ holdings of bonds that were issued at time $t-s$.
Suppose that a household has $B_{L,t|t-s}$ for $s=1,2,...$ in the beginning
of period $t$. The total amount of dividends the household receives in period
$t$ is%
\[
\sum_{s=1}^{\infty}\mu^{s-1}B_{L,t|t-s}.
\]
Note that having one unit of $B_{L,t|t-s}$ is equivalent to having
$\mu^{s-1}$ units of $B_{L,t|t-1}$ because they both yield $\mu^{s-1}$
dollars. The total amount of dividends then can be expressed in terms of
$B_{L,t|t-1}$ as%
\[
\sum_{s=1}^{\infty}\mu^{s-1}B_{L,t|t-s}\equiv B_{L,t-1},
\]
where $B_{L,t-1}$ denotes the amount of bonds in units of the bonds issued at
time $t-1$, held by the household in the beginning of period $t$. Let
$P_{L,t|t-s}$ denote the time-$t$ price of the bond issued at time $t-s$.
Then, the value of all bonds at time $t$ is%
\[
\sum_{s=1}^{\infty}P_{L,t|t-s}B_{L,t|t-s}.%
\]
The bond price satisfies
\begin{align*}
P_{L,t|t-s}  &  =E_{t}\left(  \frac{\mu^{s}}{R_{L,t+1}}+\frac{\mu^{s+1}%
}{R_{L,t+1}R_{L,t+2}}+\frac{\mu^{s+2}}{R_{L,t+1}R_{L,t+2}R_{L,t+3}%
}+...\right) \\
&  =\mu^{s}P_{L,t}%
\end{align*}
Then the value of all bonds at time $t$ is%
\[
\sum_{s=1}^{\infty}P_{L,t|t-s}B_{L,t|t-s}=P_{L,t}\mu\sum_{s=1}^{\infty
}\mu^{s-1}B_{L,t|t-s}=\mu P_{L,t}B_{L,t-1}.
\]
So the return of holding $B_{L,t-1}$ is given by the sum of dividends and the
value of all bonds as:%
\[
B_{L,t-1}+\mu P_{L,t}B_{L,t-1}=\left(  1+\mu P_{L,t}\right)
B_{L,t-1}=P_{L,t}\bar{R}_{L,t}B_{L,t-1}=\frac{\bar{R}_{L,t}}{\bar{R}%
_{L,t}-\mu}B_{L,t-1}.
\]

\subsubsection{Households}\label{a: households}

There are two types of households: unrestricted households (U-households) and
restricted households (R-households). U-households, with population
$\omega_{u}$, can trade both short-term and long-term government bonds subject
to a transaction cost $\zeta_{t}$ per unit of long-term bonds purchased.
R-households, with population $\omega_{r}=1-\omega_{u}$, can trade only
long-term government bonds. For $j=u,r$, each household chooses consumption
$c_{t}^{j}$, hours worked $h_{t}^{j}$, the long-term government bond holdings $B_{L,t}^{j}$,
and the short-term government bond holdings $B_{S,t}^{j}$ to maximize utility,%
\[
\sum_{t=0}^{\infty}\beta_{j}^{t}d_{t}\left[  \frac{\left(  c_{t}^{j}\right)
^{1-\sigma}}{1-\sigma}-\psi\frac{\left(  h_{t}^{j}\right)  ^{1+1/\nu}}%
{1+1/\nu}\right]  ,
\]
subject to: for a U-household,%
\[
P_{t}c_{t}^{u}+B_{S,t}^{u}+\left(  1+\zeta_{t}\right)  P_{L,t}B_{L,t}%
^{u}=\left(  1+i_{t-1}\right)  B_{t-1}^{u}+P_{L,t}\bar{R}_{L,t}B_{L,t-1}%
^{u}+W_{t}h_{t}^{u}-T_{t}^{u}+\Pi_{t}^{u},
\]
and for a R-household,%
\[
P_{t}c_{t}^{r}+P_{L,t}B_{L,t}^{r}=P_{L,t}\bar{R}_{L,t}B_{L,t-1}^{r}+W_{t}%
h_{t}^{r}-T_{t}^{r}+\Pi_{t}^{r},
\]
where $P_{t}$ is the price level and $i_{t}$ is the short-term interest rate.
In addition $\bar{R}_{L,t}$ denotes the gross yield to maturity at time $t$ on
the long-term bond%
\[
\bar{R}_{L,t}=\frac{1}{P_{L,t}}+\mu,\text{ \ \ }0<\mu\leq1\text{.}%
\]
The average duration of the bond is given by $\bar{R}_{L,t}/\left(  \bar
{R}_{L,t}-\mu\right)  $. \ There is a shock $d_{t}$ to the preference, and
it is given by:%
\[
d_{t}=\left\{
\begin{array}
[c]{c}%
e^{z_{1}^{b}}e^{z_{2}^{b}}...e^{z_{t}^{b}}\\
1
\end{array}
\right.
\begin{array}
[c]{c}%
\text{for }t\geq1\\
\text{for }t=0
\end{array}
,
\]
where $z_{t}^{b}$ is a preference (demand) shock, which is assumed to follow
an AR(1) process%
\[
z_{t}^{b}=\rho_{b}z_{t-1}^{b}+\epsilon_{t}^{b},
\]
with $\epsilon_{t}^{b}$ $\sim$ i.i.d. $N\left(  0,\sigma_{b}^{2}\right)  $.

We assume that the transaction cost of trading long-term bonds for the
U-households is collected by financial firms and redistributed as a lump-sum
profits to the U-households. Under the assumption, the transaction cost does
not appear in the goods market clearing condition, which is given by:%
\begin{equation}
y_{t}=\omega_{u}c_{t}^{u}+\left(  1-\omega_{u}\right)  c_{t}^{r}.\label{goods}%
\end{equation}

Arranging the first-order conditions of the U-household's problem yields the
following optimality conditions:%
\begin{align}
w_{t}  &  =\psi\left(  c_{t}^{u}\right)  ^{\sigma}\left(  h_{t}^{u}\right)
^{1/\nu},\label{HHu1}\\
1  &  =E_{t}\beta_{u}e^{z_{t+1}^{b}}\left(  \frac{c_{t+1}^{u}}{c_{t}^{u}%
}\right)  ^{-\sigma}\frac{1+i_{t}}{\pi_{t+1}},\label{HHu2}\\
1+\zeta_{t}  &  =E_{t}\beta_{u}e^{z_{t+1}^{b}}\left(  \frac{c_{t+1}^{u}}%
{c_{t}^{u}}\right)  ^{-\sigma}\frac{R_{L,t+1}}{\pi_{t+1}}, \label{HHu3}%
\end{align}
where $w_{t}\equiv W_{t}/P_{t}$ denotes the real wage, $\pi_{t}\equiv
P_{t}/P_{t-1}$ denotes the inflation rate, and $R_{L,t+1}$ denotes the
yield of the long-term bond between periods $t$ and $t+1$, given by
\[
R_{L,t+1}\equiv\frac{P_{L,t+1}}{P_{L,t}}\bar{R}_{L,t+1}=\frac{P_{L,t+1}%
}{P_{L,t}}\left(  \frac{1}{P_{L,t+1}}+\mu\right)  =\frac{1+\mu
P_{L,t+1}}{P_{L,t}}.
\]
Similarly, arranging the first-order conditions of the R-household's problem
yields%
\begin{align}
w_{t}  &  =\psi\left(  c_{t}^{r}\right)  ^{\sigma}\left(  h_{t}^{r}\right)
^{1/\nu},\label{HHr1}\\
1  &  =E_{t}\beta_{r} e^{z_{t+1}^{b}}\left(  \frac{c_{t+1}^{r}}{c_{t}^{r}}\right)
^{-\sigma}\frac{R_{L,t+1}}{\pi_{t+1}}. \label{HHr2}%
\end{align}

\subsubsection{Firms}

The firm sector consists of two types of firms: final-goods-producing firms and
intermediate-goods-producing firms. The problems of these firms are standard except that
the average discount rate between U-households and R-households is used in
discounting the profits of these firms. The profits need to be derived
explicitly because one of the two households' budget constraints constitutes
an equilibrium condition as well as the goods market clearing condition.

Each final-goods-producing firm produces a unit of final goods $y_{t}$ in a competitive market by combining intermediate goods $\left\{  y_{t}\left(
l\right)  \right\}  _{l=0}^{1}$ according
to%
\[
y_{t}=\left[  \int_{0}^{1}y_{t}\left(  l\right)  ^{\frac{1}{\lambda_{p}}%
}dl\right]  ^{\lambda_{p}},\text{ }\lambda_{p}>1.
\]
The demand function for the $l$-th intermediate good is given by%
\[
y_{t}\left(  l\right)  =\left(  \frac{P_{t}\left(  l\right)  }{P_{t}}\right)
^{\frac{\lambda_{p}}{1-\lambda_{p}}}y_{t}.
\]

Each intermediate-goods-producing firm uses labor and produce intermediate goods according to%
\[
y_{t}\left(  l\right)  =e^{z_{t}^{a}}h_{t}\left(  l\right)  ^{\theta},\text{
}0<\theta\leq1.
\]
where \thinspace$z_{t}^{a}$ is a productivity shock, which is assumed to
follow%
\[
z_{t}^{a}=\rho_{a}z_{t-1}^{a}+\epsilon_{t}^{a},
\]
with $\epsilon_{t}^{a}\sim$ i.i.d. $N\left(  0,\sigma_{a}^{2}\right)  $.
Because there is no price dispersion in steady state, the aggregate output can be expressed up to the first-order
approximation as:%
\begin{equation}
\hat{y}_{t}=z_{t}^{a} + \theta\hat{h}_{t}, \label{Y}%
\end{equation}
where $\hat{y}_{t}$ and $\hat{h}_{t}$ denote the aggregate output and hours worked in terms of deviation from the steady state. The
total cost of producing $y_{t}\left(  l\right)  $ is equal to
\[
W_{t}h_{t}\left(  l\right)  =W_{t}\left(  \frac{y_{t}\left(  l\right)
}{e^{z_{t}^{a}}}\right)  ^{\frac{1}{\theta}}.
\]

In each period, intermediate-goods-producing firms can change their price with
probability $\xi$ identically and independently across firms and over time.
For each $l$, the $l$-th intermediate-goods producing firm chooses the price, $\tilde
{P}_{t}\left(  l\right)  $, to maximize the discounted sum of profits,%
\[
\max_{\tilde{P}_{t}\left(  l\right)  }E_{t}\sum_{s=0}^{\infty}\left(
\xi\delta\right)  ^{s}\bar{\Lambda}_{t+s|t}\left[  P_{t+s}\left(  l\right)
y_{t+s}\left(  l\right)  -W_{t+s}\left(  \frac{y_{t+s}\left(  l\right)
}{e^{z_{t+s}^{a}}}\right)  ^{\frac{1}{\theta}}\right]  ,
\]
subject to the demand curve,%
\[
y_{t+s}(l)=\left(  \frac{P_{t+s}\left(  l\right)  }{P_{t+s}}\right)
^{\frac{\lambda_{p}}{1-\lambda_{p}}}y_{t+s},
\]
where%
\begin{align*}
& \delta = \omega_{u}\beta_{u}+(1-\omega_{u})\beta_{r}, \\
&  \bar{\Lambda}_{t+s|t}\equiv d_{t+s|t}\left( \omega_{u}\Lambda_{t+s|t}^{u}+\left(
1-\omega_{u}\right)  \Lambda_{t+s|t}^{r}\right),\\
&  \Lambda_{t+s|t}^{j}=\left(\frac{c_{t+s}^{j}}{c_{t}%
^{j}}\right)  ^{-\sigma}\frac{1}{P_{t+s}},\text{ \ }d_{t+s|t}=\begin{cases} 1 & \text{if $s=0$} \\ e^{z_{t+1}^{b}%
}e^{z_{t+2}^{b}}...e^{z_{t+s}^{b}} & \text{if $s=1,2,...$}
\end{cases}\\
&  P_{t+s}\left(  l\right)  =\tilde{P}_{t}(l)\Pi_{t,t+s}^{p},\\
&  \Pi_{t+s|t}^{p}=%
\begin{cases}
1 & \text{if $s=0$}\\
\prod_{k=1}^{s}(\pi_{t+k-1})^{\iota_{p}}\left(  \pi\right)  ^{1-\iota_{p}} &
\text{if $s=1,2,...$}%
\end{cases}
\end{align*}
The presence of $\Pi_{t+s|t}^{p}$ implies price indexation for firms that do not have a chance to change prices and $0\leq \iota_{p} \leq 1$ governs the degree of indexation to the past inflation rates.
Substituting the demand curve into the objective function yields%
\[
\max_{\tilde{P}_{t}\left(  l\right)  }E_{t}\sum_{s=0}^{\infty}\left(
\xi\delta\right)  ^{s}\bar{\Lambda}_{t+s|t}\left[  \tilde{P}_{t}(l)\Pi_{t+s|t}%
^{p}\left(  \frac{\tilde{P}_{t}(l)\Pi_{t+s|t}^{p}}{P_{t+s}}\right)
^{\frac{\lambda_{p}}{1-\lambda_{p}}}Y_{t+s}-W_{t+s}\left(  \frac{\tilde{P}%
_{t}(l)\Pi_{t+s|t}^{p}}{P_{t+s}}\right)  ^{\frac{\lambda_{p}}{\left(
1-\lambda_{p}\right)  \theta}}\left(  \frac{y_{t+s}}{e^{z_{t+s}^{a}}}\right)
^{\frac{1}{\theta}}\right]  .
\]
The first-order condition is%
\[
0=E_{t}\sum_{s=0}^{\infty}\left(  \xi\delta\right)  ^{s}\bar{\Lambda}_{t+s|t}\left[
\frac{1}{1-\lambda_{p}}\Pi_{t+s|t}^{p}y_{t+s}\left(  l\right)  -W_{t+s}%
\frac{\lambda_{p}}{\left(  1-\lambda_{p}\right)  \theta}\left(  \frac
{y_{t+s}\left(  l\right)  }{e^{z_{t+s}^{a}}}\right)  ^{\frac{1}{\theta}}%
\frac{1}{\tilde{P}_{t}(l)}\right]  .
\]
Since $\tilde{P}_{t}\left(  l\right)  $ does not depend on $l$, index $l$ is
omitted hereafter. Define $\tilde{p}_{t}\equiv\tilde{P}_{t}/P_{t}$ and
\[
\tilde{\Pi}_{t+s|t}^{p}=%
\begin{cases}
1 & \text{if $s=0$}\\
\prod_{k=1}^{s}\frac{(\pi_{t+k-1})^{\iota_{p}}\left(  \pi\right)
^{1-\iota_{p}}}{\pi_{t+k}} & \text{if $s=1,2,...$}%
\end{cases}
\]
The first-order condition can be transformed as
\begin{align*}
0  &  =E_{t}\sum_{s=0}^{\infty}\left(  \xi\delta\right)  ^{s}\bar{\Lambda}%
_{t+s}P_{t+s}\left[  \frac{1}{1-\lambda_{p}}\frac{\Pi_{t+s|t}^{p}}{P_{t+s}%
}\left(  \tilde{p}_{t}\tilde{\Pi}_{t+s|t}^{p}\right)  ^{\frac{\lambda_{p}%
}{1-\lambda_{p}}}y_{t+s}\right. \\
&  \left.  -\frac{W_{t+s}}{P_{t+s}}\frac{\lambda_{p}}{\left(  1-\lambda
_{p}\right)  \theta}\left(  \tilde{p}_{t}\tilde{\Pi}_{t+s|t}^{p}\right)
^{\frac{\lambda_{p}}{\left(  1-\lambda_{p}\right)  \theta}}\left(
\frac{Y_{t+s}}{e^{z_{t+s}^{a}}}\right)  ^{\frac{1}{\theta}}\frac{1}{\tilde
{P}_{t}}\right]  ,
\end{align*}
This equation can be written as:%
\begin{equation}
\tilde{p}_{t}=\left(  \frac{\lambda_{p}}{\theta}\frac{\omega_{u}K_{p,t}%
^{u}+\left(  1-\omega_{u}\right)  K_{p,t}^{r}}{\omega_{u}F_{p,t}^{u}+\left(
1-\omega_{u}\right)  F_{p,t}^{r}}\right)  ^{\frac{\left(  1-\lambda
_{p}\right)  \theta}{\theta-\lambda_{p}}}, \label{pt}%
\end{equation}
where for $j \in \{r, u\}$
\begin{align}
&  F_{p,t}^{j}=(  c_{t}^{j})  ^{-\sigma}y_{t}+\xi\delta%
E_{t}e^{z_{t+1}^{b}}(\tilde{\Pi}_{t+1|t}^{p})^{\frac{1}{1-\lambda_{p}}%
}F_{p,t+1}^{j},\label{Fp}\\
&  K_{p,t}^{j}=(  c_{t}^{j})  ^{-\sigma}\left(  \frac{y_{t}%
}{e^{z_{t}^{a}}}\right)  ^{\frac{1}{\theta}}w_{t}+\xi\delta%
E_{t}e^{z_{t+1}^{b}}(\tilde{\Pi}_{t+1|t}^{p})^{\frac{\lambda_{p}}{\left(
1-\lambda_{p}\right)  \theta}}K_{p,t+1}^{j}. \label{Kp}%
\end{align}
The aggregate price level evolves following
\[
P_{t}=\left[  \xi[(\pi_{t-1})^{\iota_{p}}\left(  \pi\right)  ^{1-\iota
_{p}}P_{t-1}]^{\frac{1}{1-\lambda_{p}}}+(1-\xi)\tilde{P}_{t}^{\frac
{1}{1-\lambda_{p}}}\right]  ^{1-\lambda_{p}},
\]
which can be written as
\begin{equation}
\tilde{p}_{t}=\left[  \frac{1-\xi(\tilde{\Pi}_{t|t-1}^{p})^{\frac
{1}{1-\lambda_{p}}}}{1-\xi}\right]  ^{1-\lambda_{p}}. \label{Ps}%
\end{equation}
The conditions, (\ref{pt})-(\ref{Ps}), summarize the price setting behavior of
intermediate-goods-producing firms.

The aggregate nominal profits earned by intermediate-goods-producing firms are given by:%
\[
\Pi_{t}^{m}=\int_{0}^{1}\left(  P_{t}\left(  l\right)  y_{t}\left(  l\right)
-W_{t}\left(  \frac{y_{t}\left(  l\right)  }{e^{z_{t}^{a}}}\right)  ^{\frac
{1}{\theta}}\right)  dl = P_{t}y_{t}-W_{t}\left(  \frac{y_{t}}%
{e^{z_{t}^{a}}}\right)  ^{\frac{1}{\theta}},
\]
where the last equality holds up to the first-order approximation.
Then, the aggregate real profits are given by $\pi_{t}^{m}=y_{t}-w_{t}\left(
y_{t}/e^{z_{t}^{a}}\right)  ^{1/\theta}$.

\subsubsection{Government}

The government flow budget constraint is%
\[
\left(  1+i_{t-1}\right)  B_{S,t-1}+\left(  1+\mu P_{L,t}\right)
B_{L,t-1}=B_{S,t}+P_{L,t}B_{L,t}+T_{t},
\]
where $T_{t}=\omega_{u}T_{t}^{u}+\left(  1-\omega_{u}\right)  T_{t}^{r}$. We
assume that the lump-sum tax is imposed on households equally so that
$T_{t}^{u}=T_{t}^{r}=T_{t}$. To focus on the role of long-term government bonds, we assume that the
amount of short-term bonds is constant at $b_{S,t}\equiv B_{S,t}/P_{t}=\bar{b}_{S}$.

\subsubsection{Central bank}
The nominal interest rate $i_{t}$ set by the central bank is bounded below by
the ELB as
\begin{equation}
i_{t}=\max\left\{  i_{t}^{\ast},\underline{i}\right\},  \label{i_b}%
\end{equation}
where $\underline{i}$ is the ELB and $i_{t}^{\ast}$ is a shadow rate -- the short-term rate the central bank would set if there were no ELB.
The shadow rate $i_{t}^{\ast}$ is given by\footnote{\cite{ReifschneiderWilliams2000} use the following rule: $i_{t}^{\ast}=i_{t}^{\text{Taylor}}-\alpha Z_{t}$ and $Z_{t}=\rho_{Z}Z_{t-1}+(  i_{t}-i_{t}^{\text{Taylor}})$ with $\rho_{Z}=1$.}
\begin{equation}
i_{t}^{\ast}=i_{t}^{\text{Taylor}} - \alpha \left(i_{t}-i_{t}^{\text{Taylor}}\right).  \label{i*_a}
\end{equation}
The shadow rate $i_{t}^{\ast}$ consists of two
parts: $i_{t}^{\text{Taylor}}$ and $\alpha (i_{t}-i_{t}^{\text{Taylor}})$. First, $i_{t}%
^{\text{Taylor}}$ is the Taylor-rule-based rate that responds to inflation
$\pi_{t}$, output $y_{t}$, and the lagged `effective' interest rate $(1-\lambda^{\ast})i_{t-1} + \lambda^{\ast}i^{\ast}_{t-1}$:%
\begin{equation}
i_{t}^{\text{Taylor}}-i=\rho_{i}\left( (1-\lambda^{\ast})i_{t-1} + \lambda^{\ast}i^{\ast}_{t-1} - i\right)  +\left(
1-\rho_{i}\right)  \left[  r_{\pi}\log\left(  \pi_{t}/\pi\right)  +r_{y}%
\log\left(  y_{t}/y\right)  \right]  +\epsilon_{t}^{i}, \label{iTaylor}%
\end{equation}
where $\epsilon_{t}^{i}$ is a
monetary policy shock and variables without subscripts denote those in steady state. The parameter $\lambda^{\ast}$ will be derived later in this appendix.
Second, $\alpha (i_{t}-i_{t}^{\text{Taylor}})$ in equation (\ref{i*_a}) encapsulates the strength of FG.
A positive value for $\alpha$ will maintain the target rate $i_{t}^{\ast}$ below the Taylor rate $i_{t}^{\text{Taylor}}$. Under the ELB of
$i_{t}=\underline{i}$, the more the central bank has missed to set the interest rate at its Taylor rate, the lower the central bank sets its target rate
$i_{t}^{\ast}$ through equation (\ref{i*_a}) as long as $\rho_{i}\lambda^{\ast}>0$ in equation (\ref{iTaylor}).\footnote{\cite{DebortoliGaliGambetti2019} consider the case of $\alpha=0$ and $\lambda^{\ast}=1$ in equation (\ref{iTaylor}) and interpret $\rho_{i}$ -- the coefficient of interest rate smoothing -- as FG when $i_{t}^{\ast}$ is below the ELB.}

The central bank activates QE when the economy hits the ELB. The central bank continues using the shadow rate as policy guidance in an ELB regime as in a non-ELB regime. Specifically, the amount of long-term bond purchases depends on the shadow rate, and as a result the amount of long-term government bonds, $b_{L,t}\equiv B_{L,t}/P_{t}$, held by the
private agents is given by:%
\begin{equation}
\hat{b}_{L,t}=\left\{
\begin{array}
[c]{c}%
0\\
\gamma\frac{i_{t}^{\ast}-\underline{i}}{1+i}%
\end{array}
\right.
\begin{array}
[c]{c}%
\text{if }i_{t}^{\ast}\geq \underline{i}\\
\text{if }i_{t}^{\ast}< \underline{i}
\end{array}
,\label{QE}%
\end{equation}
where the caret on a variable denotes a deviation from the steady state. This QE rule implies that asset purchases by the central bank is zero (relative to the
steady state) when the ELB is not binding (i.e. $i_{t}=i_{t}^{\ast}\geq \underline{i}$) and, given $\gamma>0$, the purchases are positive (i.e.,
$\hat{b}_{L,t}<0$) when the shadow rate goes below the ELB (i.e.,
\thinspace$i_{t}^{\ast}<\underline{i}$).

\subsubsection{Market clearing and equilibrium}

As well as the goods market clearing condition (\ref{goods}), there are market
clearing conditions for labor, long-term government bonds, and short-term
government bonds:%
\begin{align}
\omega_{u}h_{t}^{u}+\left(  1-\omega_{u}\right)  h_{t}^{r}  &  =h_{t}%
,\label{labor}\\
\omega_{u}b_{L,t}^{u}+\left(  1-\omega_{u}\right)  b_{L,t}^{r}  &
=b_{L,t},\label{lbonds}\\
\omega_{u}b_{S,t}^{u}  &  =b_{S,t} \label{sbonds}%
\end{align}
Also, either the U-household's budget constraint or the R-household's budget
constraint should be added as an equilibrium condition. Here the latter budget
constraint is added:%
\begin{equation}
c_{t}^{r}+P_{L,t}b_{L,t}^{r}=\left(  \bar{R}_{L,t}/\pi_{t}\right)
P_{L,t}b_{L,t-1}^{r}+w_{t}h_{t}^{r}-T_{t}^{r}/P_{t}+\Pi_{t}^{r}/P_{t},
\label{rbudget}%
\end{equation}
where%
\begin{align*}
\frac{T_{t}^{r}}{P_{t}}  &  =-\left(  b_{S,t}+P_{L,t}b_{L,t}\right)
+\frac{1+i_{t-1}}{\pi_{t}}b_{S,t-1}+\frac{1+\mu P_{L,t}}{\pi_{t}}b_{L,t-1},\\
\frac{\Pi_{t}^{r}}{P_{t}}  &  =y_{t}-w_{t}h_{t}.
\end{align*}

The cost of trading long-term bonds, $\zeta_{t}$, is specified as%
\begin{equation}
\frac{\zeta_{t}}{\zeta}= \left(  \frac{b_{L,t}}{b_{L}}\right)  ^{\rho_{\zeta}}, \hspace{1cm} \rho_{\zeta}>0. \label{tcost}
\end{equation}
The trading cost is increasing in the amount of
long-term bonds relative to its steady state value. The trading cost is $\zeta$ in steady state.

The system of equations for the economy consists of 19 equations,
(\ref{goods})-(\ref{rbudget}), with the following endogenous variables:%
\[
c_{t}^{u},c_{t}^{r},h_{t}^{u},h_{t}^{r},h_{t},b_{L,t}^{u},b_{L,t}^{r}%
,b_{L,t},b_{t}^{u},y_{t},w_{t},i_{t},i_{t}^{\ast},i_{t}^{\text{Taylor}},R_{L,t}%
,\pi_{t},\tilde{p}_{t},F_{p,t}^{j},K_{p,t}^{j}.
\]

\subsection{Log-linearized equations}
\label{A: log-linear}

We log-linearize the equilibrium conditions of the theoretical model presented in Appendix \ref{A: equilibrium conditions} around the steady state in which inflation is equal to the target rate of inflation set by the central bank. By doing so, we derive key equations in the system of equations (\ref{i})-(\ref{Phillips}) presented in Section \ref{s: dsge} of the main text. We also derive a log-linearized equation for the long-term yield.

\paragraph{Euler equation.} Log-linearizing equations
(\ref{goods}), (\ref{HHu2}), (\ref{HHu3}), (\ref{HHr2}),  and (\ref{tcost}), we obtain\footnote{The variable $\hat{i}$ represents the deviation of the gross interest rate from the steady state.}%
\begin{align}
\hat{y}_{t} &  =\frac{\omega_{u}c^{u}}{y}\hat{c}_{t}^{u}+\frac{\left(
1-\omega_{u}\right)  c^{r}}{y}\hat{c}_{t}^{r}.\label{l_goods} \\
0 &  =E_{t}\left[  -\sigma\left(  \hat{c}_{t+1}^{u}-\hat{c}_{t}^{u}\right)
+\hat{\imath}_{t}-\hat{\pi}_{t+1}+z_{t+1}^{b}\right]  ,\label{l_HHu2}\\
\frac{\zeta}{1+\zeta}\hat{\zeta}_{t} &  =E_{t}\left[  -\sigma\left(  \hat
{c}_{t+1}^{u}-\hat{c}_{t}^{u}\right)  +\hat{R}_{L,t+1}-\hat{\pi}_{t+1}%
+z_{t+1}^{b}\right]  ,\label{l_HHu3}\\
0 &  =E_{t}\left[  -\sigma\left(  \hat{c}_{t+1}^{r}-\hat{c}_{t}^{r}\right)
+\hat{R}_{L,t+1}-\hat{\pi}_{t+1}+z_{t+1}^{b}\right]  ,\label{l_HHr2}\\
\hat{\zeta}_{t} & = \rho_{\zeta}\hat{b}_{L,t}. \label{l_tcost}
\end{align}
Equation (\ref{l_goods}) can be written as:%
\[
\hat{c}_{t}^{u}=\frac{y}{\omega_{u}c^{u}}\left\{  \hat{y}_{t}-\frac{\left(
1-\omega_{u}\right)  c^{r}}{y}\hat{c}_{t}^{r}\right\}  .
\]
Subtracting $\hat{c}_{t+1}^{u}$ from $\hat{c}_{t}^{u}$ yields:%
\begin{align}
\hat{c}_{t+1}^{u}-\hat{c}_{t}^{u} &  =\frac{y}{\omega_{u}c^{u}}\left\{
\hat{y}_{t+1}-\hat{y}_{t}-\frac{\left(  1-\omega_{u}\right)  c^{r}}{y}\left(
\hat{c}_{t+1}^{r}-\hat{c}_{t}^{r}\right)  \right\}  ,\nonumber\\
&  =\frac{y}{\omega_{u}c^{u}}\left\{  \hat{y}_{t+1}-\hat{y}_{t}-\frac{\left(
1-\omega_{u}\right)  c^{r}}{y}\frac{\left(  \hat{R}_{L,t+1}-\hat{\pi}%
_{t+1}+z_{t+1}^{b}\right)  }{\sigma}\right\}  ,\label{l_goods_a}%
\end{align}
where equation (\ref{l_HHr2}) was used in the second equality. Substituting
equation (\ref{l_goods_a}) into equation (\ref{l_HHu2}) yields:%
\begin{align}
0 &  =E_{t}\left[  -\sigma\left(  \hat{c}_{t+1}^{u}-\hat{c}_{t}^{u}\right)
+\hat{\imath}_{t}-\hat{\pi}_{t+1}+z_{t+1}^{b}\right]  ,\nonumber\\
&  =E_{t}\left[  -\frac{\sigma y}{\omega_{u}c^{u}}\left(  \hat{y}_{t+1}%
-\hat{y}_{t}\right)  +\frac{\sigma y}{\omega_{u}c^{u}}\frac{\left(
1-\omega_{u}\right)  c^{r}}{y}\frac{\left(  \hat{R}_{L,t+1}-\hat{\pi}%
_{t+1}+z_{t+1}^{b}\right)  }{\sigma}\right.  \nonumber\\
&  \left.  +\hat{\imath}_{t}-\hat{\pi}_{t+1}+z_{t+1}^{b}\right]  ,\nonumber
\end{align}
or, by using equation (\ref{goods}) in steady state, \begin{equation}
0  =E_{t}\left[  -\sigma\left(  \hat{y}_{t+1}-\hat{y}_{t}\right)
+\frac{\left(  1-\omega_{u}\right)  c^{r}}{y}\hat{R}_{L,t+1}+\frac{\omega
_{u}c^{u}}{y}\hat{\imath}_{t}-\hat{\pi}_{t+1}+z_{t+1}^{b}\right]
.\label{l_HHu2_a}%
\end{equation}
Equation (\ref{l_HHu2_a}) shows that the interest rate relevant to the aggregate variables such as output and inflation is the weighted sum of the return of holding the long-term bonds $\hat{R}_{L,t+1}$ and the short-term interest rate $\hat{i}_{t}$. Also, substituting equation (\ref{l_goods_a}) into equation (\ref{l_HHu3})
yields:
\begin{align*}
\frac{\zeta}{1+\zeta}\hat{\zeta}_{t} &  =E_{t}\left[  -\sigma\left(  \hat
{c}_{t+1}^{u}-\hat{c}_{t}^{u}\right)  +\hat{R}_{L,t+1}-\hat{\pi}_{t+1}\right]
\\
&  =E_{t}\left[  -\sigma\frac{y}{\omega_{u}c^{u}}\left\{  \hat{y}_{t+1}%
-\hat{y}_{t}-\frac{\left(  1-\omega_{u}\right)  c^{r}}{y}\frac{\left(  \hat
{R}_{L,t+1}-\hat{\pi}_{t+1}+z_{t+1}^{b}\right)  }{\sigma}\right\}  \right.  \\
&  \left.  +\hat{R}_{L,t+1}-\hat{\pi}_{t+1}+z_{t+1}^{b}\right]  ,\\
&  =E_{t}\left[  -\frac{\sigma y}{\omega_{u}c^{u}}\left(  \hat{y}_{t+1}%
-\hat{y}_{t}\right)  +\frac{y}{\omega_{u}c^{u}}\left(  \hat{R}_{L,t+1}%
-\hat{\pi}_{t+1}+z_{t+1}^{b}\right)  \right]  ,
\end{align*}
or, substituting out $\hat{\zeta}_{t}$ by using equation (\ref{l_tcost}) yields
\begin{align}
E_{t}\left(  \hat{R}_{L,t+1}-\hat{\pi}_{t+1}\right)   &  =\sigma E_{t}\left(
\hat{y}_{t+1}-\hat{y}_{t}\right)  -E_{t}\left(  z_{t+1}^{b}\right)
+\frac{\omega_{u}c^{u}}{y}\frac{\zeta}{1+\zeta}\rho_{\zeta}\hat{b}_{L,t}.\label{l_HHu3_a}%
\end{align}
Combining equations (\ref{l_HHu2_a}) and (\ref{l_HHu3_a}) yields:%
\begin{align*}
0 &  =E_{t}\left[  -\sigma\left(  \hat{y}_{t+1}-\hat{y}_{t}\right)
+\frac{\left(  1-\omega_{u}\right)  c^{r}}{y}\hat{R}_{L,t+1}+\frac{\omega
_{u}c^{u}}{y}\hat{\imath}_{t}-\hat{\pi}_{t+1}+z_{t+1}^{b}\right]  \\
&  =E_{t}\left[  -\sigma\left(  \hat{y}_{t+1}-\hat{y}_{t}\right)
+\frac{\left(  1-\omega_{u}\right)  c^{r}}{y}\left(  \sigma\left(  \hat
{y}_{t+1}-\hat{y}_{t}\right)  -z_{t+1}^{b}+\frac{\omega_{u}c^{u}}{y}%
\frac{\zeta}{1+\zeta}\rho_{\zeta}\hat{b}_{L,t}+\hat{\pi}_{t+1}\right)
\right.  \\
&  \left.  +\frac{\omega_{u}c^{u}}{y}\hat{\imath}_{t}-\hat{\pi}_{t+1}%
+z_{t+1}^{b}\right]  \\
&  =E_{t}\left[  -\frac{\omega_{u}c^{u}\sigma}{y}\left(  \hat{y}_{t+1}-\hat
{y}_{t}\right)  +\frac{\omega_{u}c^{u}}{y}\hat{\imath}_{t}-\frac{\omega
_{u}c^{u}}{y}\left(  \hat{\pi}_{t+1}-z_{t+1}^{b}\right)  +\frac{\left(
1-\omega_{u}\right)  c^{r}}{y}\frac{\omega_{u}c^{u}}{y}\frac{\zeta}{1+\zeta
}\rho_{\zeta}\hat{b}_{L,t}\right]  ,
\end{align*}
or
\begin{equation*}
0   =E_{t}\left[  -\sigma\left(  \hat{y}_{t+1}-\hat{y}_{t}\right)
+\hat{\imath}_{t}-\hat{\pi}_{t+1}+z_{t+1}^{b}+\frac{\left(  1-\omega
_{u}\right)  c^{r}}{y}\frac{\zeta}{1+\zeta}\rho_{\zeta}\hat{b}\right]  ,
\end{equation*}
or%
\begin{align*}
\hat{y}_{t} &  =E_{t}\hat{y}_{t+1}-\frac{1}{\sigma}\left(  \hat{\imath}%
_{t}-E_{t}\hat{\pi}_{t+1}+E_{t}z_{t+1}^{b}\right)  -\frac{1}{\sigma}%
\frac{\left(  1-\omega_{u}\right)  c^{r}}{y}\frac{\zeta}{1+\zeta}\rho_{\zeta
}\hat{b}_{L,t}\\
&  =E_{t}\hat{y}_{t+1}-\frac{1}{\sigma}\left(  \hat{\imath}_{t}-E_{t}\hat{\pi
}_{t+1}\right)  -\frac{1}{\sigma}\frac{\left(  1-\omega_{u}\right)  c^{r}}%
{y}\frac{\zeta}{1+\zeta}\rho_{\zeta}\hat{b}_{L,t}-\frac{\rho_{b}}{\sigma
}z_{t}^{b}.
\end{align*}
This equation shows that the central bank's government bond purchase -- a
decrease in $\hat{b}_{L,t}$ -- stimulates output, given $E_{t}\hat{y}_{t+1}$
and the real rate $\hat{i}_{t}-E_{t}\hat{\pi}_{t+1}$. Since $\hat{b}_{L,t}$ follows the simple rule (\ref{QE}), the equation can be written as equation
(\ref{Euler}) in the main text, which is reproduced here for convenience:
\begin{equation}
    \hat{y}_{t} = E_{t}\hat{y}_{t+1} - \frac{1}{\sigma}\left((1-\lambda^{\ast})\hat{i}_{t} + \lambda^{\ast}\hat{i}_{t}^{\ast} - E_{t}\hat{\pi}_{t+1}\right) - \chi_{b}z_{t}^{b} \label{Euler_app}
\end{equation}
where%
\begin{align}
\lambda^{\ast} &=\frac{\left(  1-\omega_{u}\right)  c^{r}}{y}\frac{\zeta}{1+\zeta}%
\rho_{\zeta}\gamma \notag \\
\chi_{b} &  =\frac{\rho_{b}}{\sigma}\label{chib}
\end{align}

The case of $\lambda^{\ast}=1$ (and $\alpha=0$) corresponds to the fully effective UMP, which makes the ELB irrelevant. \ Such a case
can be achieved, e.g., when the central bank responds to the shadow rate
aggressively enough to satisfy%
\[
\gamma=\left[  \frac{\left(  1-\omega_{u}\right)  c^{r}}{y}\frac{\zeta
}{1+\zeta}\rho_{\zeta}\right]  ^{-1}.
\]

\paragraph{Phillips curve.}
The Phillips curve can be derived from equations (\ref{pt})-(\ref{Ps}).
Log-linearizing equation (\ref{Ps}) yields:%
\begin{equation}
\widehat{\tilde{p}}_{t}=-\frac{\xi}{1-\xi}\widehat{\tilde{\Pi}%
}_{t|t-1}^{p},\label{Ps_ll}%
\end{equation}
where%
\[
\widehat{\tilde{\Pi}}_{t|t-1}^{p}=\left(  1-\nu_{p}\right)  \hat{\pi}%
_{t-1}-\hat{\pi}_{t}.
\]
Log-linearizing equation (\ref{pt}) yields:%
\begin{align}
\frac{\theta-\lambda_{p}}{\left(  1-\lambda_{p}\right)  \theta}\widehat{\tilde
{p}}_{t} &  =\frac{\omega_{u}K_{p}^{u}}{\omega_{u}K_{p}^{u}+\left(
1-\omega_{u}\right)  K_{p}^{r}}\hat{K}_{p,t}^{u}+\frac{\left(  1-\omega
_{u}\right)  K_{p}^{r}}{\omega_{u}K_{p}^{u}+\left(  1-\omega_{u}\right)
K_{p}^{r}}\hat{K}_{p,t}^{r}\nonumber\\
&  -\frac{\omega_{u}F_{p}^{u}}{\omega_{u}F_{p}^{u}+\left(  1-\omega
_{u}\right)  F_{p}^{r}}\hat{F}_{p,t}^{u}-\frac{\left(  1-\omega_{u}\right)
F_{p}^{r}}{\omega_{u}F_{p}^{u}+\left(  1-\omega_{u}\right)  F_{p}^{r}}\hat
{F}_{p,t}^{r}.\label{pt_ll}%
\end{align}
Combining equations (\ref{Ps_ll}) and (\ref{pt_ll}) leads to:%
\begin{align}
-\frac{\xi}{1-\xi}\frac{\theta-\lambda_{p}}{\left(  1-\lambda
_{p}\right)  \theta}\left[  \left(  1-\nu_{p}\right)  \hat{\pi}_{t-1}-\hat
{\pi}_{t}\right]   &  =\frac{\omega_{u}K_{p}^{u}}{\omega_{u}K_{p}^{u}+\left(
1-\omega_{u}\right)  K_{p}^{r}}\hat{K}_{p,t}^{u}+\frac{\left(  1-\omega
_{u}\right)  K_{p}^{r}}{\omega_{u}K_{p}^{u}+\left(  1-\omega_{u}\right)
K_{p}^{r}}\hat{K}_{p,t}^{r}\nonumber\\
&  -\frac{\omega_{u}F_{p}^{u}}{\omega_{u}F_{p}^{u}+\left(  1-\omega
_{u}\right)  F_{p}^{r}}\hat{F}_{p,t}^{u}-\frac{\left(  1-\omega_{u}\right)
F_{p}^{r}}{\omega_{u}F_{p}^{u}+\left(  1-\omega_{u}\right)  F_{p}^{r}}\hat
{F}_{p,t}^{r}.\label{K_F}%
\end{align}
Log-linearizing equation (\ref{Fp}) and (\ref{Kp}) yields:%
\begin{align*}
\hat{F}_{p,t}^{j} &  =\left(  1-\xi\delta\right)  \left(  -\sigma\hat
{c}_{t}^{j}+\hat{y}_{t}\right)  +\xi\delta E_{t}\left(  z_{t+1}^{b}+\frac
{1}{1-\lambda_{p}}\widehat{\tilde{\Pi}}_{t+1|t}^{p}+\hat{F}_{p,t+1}%
^{j}\right)  ,\\
\hat{K}_{p,t}^{j} &  =\left(  1-\xi\delta\right)  \left(  -\sigma\hat
{c}_{t}^{j}+\frac{1}{\theta}\hat{y}_{t}-\frac{1}{\theta}z_{t}^{a}+\hat{w}%
_{t}\right)  +\xi\delta E_{t}\left(  z_{t+1}^{b}+\frac{\lambda_{p}}{\left(
1-\lambda_{p}\right)  \theta}\widehat{\tilde{\Pi}}_{t+1|t}^{p}+\hat{K}%
_{p,t+1}^{j}\right)  ,
\end{align*}
for $j \in \{r,u\}$. The term involving $\hat{F}_{p,t}^{u}$ and $\hat{F}%
_{p,t}^{r}$ in equation (\ref{K_F}) is calculated as follows.%
\begin{align*}
&  \frac{\omega_{u}F_{p}^{u}}{\omega_{u}F_{p}^{u}+\left(  1-\omega_{u}\right)
F_{p}^{r}}\hat{F}_{p,t}^{u}+\frac{\left(  1-\omega_{u}\right)  F_{p}^{r}%
}{\omega_{u}F_{p}^{u}+\left(  1-\omega_{u}\right)  F_{p}^{r}}\hat{F}_{p,t}%
^{r}\\
&  =\left(  1-\xi\delta\right)  \left(  -\sigma\frac{\omega_{u}F_{p}%
^{u}\hat{c}_{t}^{u}+\left(  1-\omega_{u}\right)  F_{p}^{r}\hat{c}_{t}^{r}%
}{\omega_{u}F_{p}^{u}+\left(  1-\omega_{u}\right)  F_{p}^{r}}+\hat{y}%
_{t}\right)  \\
&  +\xi\delta E_{t}\left(  z_{t+1}^{b}+\frac{1}{1-\lambda_{p}}%
\widehat{\tilde{\Pi}}_{t+1|t}^{p}+\frac{\omega_{u}F_{p}^{u}}{\omega_{u}%
F_{p}^{u}+\left(  1-\omega_{u}\right)  F_{p}^{r}}\hat{F}_{p,t+1}^{u}%
+\frac{\left(  1-\omega_{u}\right)  F_{p}^{r}}{\omega_{u}F_{p}^{u}+\left(
1-\omega_{u}\right)  F_{p}^{r}}\hat{F}_{p,t+1}^{r}\right)  .
\end{align*}
Similarly, the term involving $\hat{K}_{p,t}^{u}$ and $\hat{K}_{p,t}^{r}$ in
equation (\ref{K_F}) is calculated as:%
\begin{align*}
&  \frac{\omega_{u}K_{p}^{u}}{\omega_{u}K_{p}^{u}+\left(  1-\omega_{u}\right)
K_{p}^{r}}\hat{K}_{p,t}^{u}+\frac{\left(  1-\omega_{u}\right)  K_{p}^{r}%
}{\omega_{u}K_{p}^{u}+\left(  1-\omega_{u}\right)  K_{p}^{r}}\hat{K}_{p,t}%
^{r}\\
&  =\left(  1-\xi\delta \right)  \left(  -\sigma\frac{\omega_{u}K_{p}%
^{u}\hat{c}_{t}^{u}+\left(  1-\omega_{u}\right)  K_{p}^{r}\hat{c}_{t}^{r}%
}{\omega_{u}K_{p}^{u}+\left(  1-\omega_{u}\right)  K_{p}^{r}}+\frac{1}{\theta
}\hat{y}_{t}-\frac{1}{\theta}z_{t}^{a}+\hat{w}_{t}\right)  \\
&  +\xi\delta E_{t}\left(  z_{t+1}^{b}+\frac{\lambda_{p}}{\left(
1-\lambda_{p}\right)  \theta}\widehat{\tilde{\Pi}}_{t+1|t}^{p}+\frac
{\omega_{u}K_{p}^{u}}{\omega_{u}K_{p}^{u}+\left(  1-\omega_{u}\right)
K_{p}^{r}}\hat{K}_{p,t+1}^{u}+\frac{\left(  1-\omega_{u}\right)  K_{p}^{r}%
}{\omega_{u}K_{p}^{u}+\left(  1-\omega_{u}\right)  K_{p}^{r}}\hat{K}%
_{p,t+1}^{r}\right)  .
\end{align*}
Let the right-hand-side of equation (\ref{K_F}) be denoted as $\hat{X}_{t}$. Then,
using the above relationships just derived, $\hat{X}_{t}$ can be written as:%
\begin{equation*}
\hat{X}_{t}  =\left(  1-\xi\delta \right)  \left[  \left(  \frac{1}%
{\theta}-1\right)  \hat{y}_{t}-\frac{1}{\theta}z_{t}^{a}+\hat{w}_{t}\right]
  +\xi\delta E_{t}\left(  -\frac{\lambda_{p}-\theta}{\left(  \lambda
_{p}-1\right)  \theta}\widehat{\tilde{\Pi}}_{t+1|t}^{p}+\hat{X}_{t+1}\right)
\end{equation*}
Because $\hat{X}_{t}$ is the right-hand-side of equation (\ref{K_F}), equation
(\ref{K_F}) can be written as:%
\begin{align*}
- &  \frac{\xi}{1-\xi}\frac{\lambda_{p}-\theta}{\left(  \lambda
_{p}-1\right)  \theta}\left[  \left(  1-\nu_{p}\right)  \hat{\pi}_{t-1}%
-\hat{\pi}_{t}\right]  =\left(  1-\xi\delta \right)  \left[  \left(
\frac{1}{\theta}-1\right)  \hat{y}_{t}-\frac{1}{\theta}z_{t}^{a}+\hat{w}%
_{t}\right]  \\
&  +\xi\delta E_{t}\left(  -\frac{\lambda_{p}-\theta}{\left(  \lambda
_{p}-1\right)  \theta}\widehat{\tilde{\Pi}}_{t+1|t}^{p}-\frac{\xi}%
{1-\xi}\frac{\lambda_{p}-\theta}{\left(  \lambda_{p}-1\right)  \theta
}\left[  \left(  1-\nu_{p}\right)  \hat{\pi}_{t}-\hat{\pi}_{t+1}\right]
\right)  ,
\end{align*}
or%
\[
\hat{\pi}_{t}=\frac{\xi\left(  1-\nu_{p}\right)  }{\left(  \xi%
+1-\nu_{p}\right)  }\hat{\pi}_{t-1}+\frac{\left(  1-\xi\delta \right)  \left(
1-\xi\right)  \left(  \lambda_{p}-1\right)  \theta}{\left(  \lambda
_{p}-\theta\right)  \left(  \xi+1-\nu_{p}\right)  }\left[  \left(  \frac
{1}{\theta}-1\right)  \hat{y}_{t}-\frac{1}{\theta}z_{t}^{a}+\hat{w}%
_{t}\right]  +\frac{\xi\delta}{\left(  \xi+1-\nu_{p}\right)  }E_{t}%
\hat{\pi}_{t+1}.
\]
From equations (\ref{HHu1}) and (\ref{HHr1}), the wage $\hat{w}_{t}$ can be
written as:%
\begin{align*}
\hat{w}_{t} &  =\omega_{u}\left(  \sigma\hat{c}_{t}^{u}+\frac{1}{\nu}\hat
{h}_{t}^{u}\right)  +\left(  1-\omega_{u}\right)  \left(  \sigma\hat{c}%
_{t}^{r}+\frac{1}{\nu}\hat{h}_{t}^{r}\right)  ,\\
&  =\sigma\hat{y}_{t}+\frac{1}{\nu}\hat{h}_{t}
  =\left(  \sigma+\frac{1}{\nu\theta}\right)  \hat{y}_{t}-\frac{1}{\nu\theta
}z_{t}^{a},
\end{align*}
where the market clearing conditions (\ref{goods}) and (\ref{labor}) were used
in the second equality and the production function (\ref{Y}) was used in the
third equality. Since we assume $c^{u}=c^{r}$, the second equality
holds. By using the expression for $\hat{w}_{t}$, the Phillips curve can be
written as%
\begin{align}
&  \hat{\pi}_{t}=\frac{\xi\left(  1-\nu_{p}\right)  }{\left(  \xi
+1-\nu_{p}\right)  }\hat{\pi}_{t-1}\nonumber\\
+ &  \frac{\left(  1-\xi\delta\right)  \left(  1-\xi\right)  \left(
\lambda_{p}-1\right)  \theta}{\left(  \lambda_{p}-\theta\right)  \left(
\xi+1-\nu_{p}\right)  }\left[  \frac{\nu+\nu\theta\left(  \sigma-1\right)
+1}{\nu\theta}\hat{y}_{t}-\frac{1+\nu}{\nu\theta}z_{t}^{a}\right]
+\frac{\xi\delta}{\left(  \xi+1-\nu_{p}\right)  }E_{t}\hat{\pi}%
_{t+1}.\notag
\end{align}
In the case of no price indexation to the past inflation rate and a linear
production function, that is, in the case of $\nu_{p}=1$ and $\theta=1$, the
Phillips curve collapses to the standard form:%
\[
\hat{\pi}_{t}=\frac{\left(  1-\xi\delta\right)  \left(  1-\xi\right)
}{\xi} \left(  \sigma+\frac{1}{\nu}\right)  \hat{y}_{t}  +\delta E_{t}\hat{\pi}_{t+1} - \frac{\left(  1-\xi\delta\right)  \left(  1-\xi\right)
}{\xi}\frac
{1+\nu}{\nu}z_{t}^{a}.
\]
This completes the derivation of equation (\ref{Phillips}) in the main text, where
\begin{align}
  &\kappa = \frac{\left(  1-\xi\delta\right)  \left(  1-\xi\right)
}{\xi} \left(  \sigma+\frac{1}{\nu}\right), \label{kappa} \\
 &\chi_{a}=\frac{\left(  1-\xi\delta\right)  \left(  1-\xi\right)
}{\xi}\frac
{1+\nu}{\nu}. \label{chia}
\end{align}

\paragraph{Long-term yield.}
From equations  (\ref{l_HHu2_a}) and (\ref{Euler_app}), the interest rate relevant to the aggregate variables has the following equality:
\begin{equation}
  \frac{(1-\omega_{u})c^{r}}{y}E_{t}\hat{R}_{L,t+1} + \frac{\omega_{u}c^{u}}{y}\hat{i}_{t} = (1-\lambda^{\ast})\hat{i}_{t} + \lambda^{\ast}\hat{i}_{t}^{\ast}.\label{R_equality}
\end{equation}
This equation can be written as
\begin{equation}
    E_{t}\hat{R}_{L,t+1} = \begin{cases}
    \hat{i}_{t}^{\ast} & \hat{i}_{t}^{\ast} \geq \hat{\underline{i}} \\
    \frac{\lambda^{\ast}y}{(1-\omega_{u})c^{r}}\hat{i}_{t}^{\ast} + \frac{(1-\lambda^{\ast})y-\omega_{u}c^{u}}{(1-\omega_{u})c^{r}}\hat{\underline{i}} & \hat{i}_{t}^{\ast} < \hat{\underline{i}}
    \end{cases} \label{ERL}
\end{equation}
By using $R_{L,t+1}=\bar{R}_{L,t+1}(\bar{R}_{L,t}-\mu)/(\bar{R}_{L,t+1}-\mu)$, which relates the return of holding long-term bonds $\hat{R}_{L,t+1}$ to the long-term yield $\hat{\bar{R}}_{L,t}$, the long-term yield can be written as
\begin{equation}
   \hat{\bar{R}}_{L,t} = \frac{\bar{R}_{L}-\mu}{\bar{R}_{L}}E_{t}\hat{R}_{L,t+1} + \frac{\mu}{\bar{R}_{L}}E_{t}\hat{\bar{R}}_{L,t+1}, \label{Rbhat}
\end{equation}
where $\bar{R}_{L}>\mu$ in steady state.
Substitution equation (\ref{ERL}) into equation (\ref{Rbhat}) yields
\begin{equation}
    \hat{\bar{R}}_{L,t} = \begin{cases}
    \frac{\bar{R}_{L}-\mu}{\bar{R}_{L}}\hat{i}^{\ast}_{t} + \frac{\mu}{\bar{R}_{L}}E_{t}\hat{\bar{R}}_{L,t+1} & \hat{i}_{t}^{\ast} \geq \hat{\underline{i}} \\
    \frac{\bar{R}_{L}-\mu}{\bar{R}_{L}}\left[\frac{y\lambda^{\ast}}{(1-\omega_{u})c^{r}}\hat{i}^{\ast}_{t}+\frac{(1-\lambda^{\ast})y-\omega_{u}c^{u}}{(1-\omega_{u})c^{r}}\hat{\underline{i}} \right] + \frac{\mu}{\bar{R}_{L}}E_{t}\hat{\bar{R}}_{L,t+1} & \hat{i}^{\ast}_{t}< \hat{\underline{i}}
    \end{cases} \label{Rbhat1}
\end{equation}
Equation (\ref{Rbhat1}) shows that the long-term yield is the discounted sum of the current and future short-term returns, where the short-term return is given by the shadow rate in the non-ELB regime and in the ELB regime it is given by the first two terms in the square brackets in (\ref{Rbhat1}).

\subsection{Parameterization of the model\label{A: parameterization}}

Instead of parameterizing the model presented in Appendix \ref{A: equilibrium conditions}, we parameterize the system of log-linearized equations (\ref{i})-(\ref{Phillips}) in the main text. It is worth emphasizing that we use the parameterized model to illustrate the implications of the theoretical model, and not to study the quantitative implications, which would require a more complex system.

The relative risk aversion parameter is set at $\sigma=2$. The discount factor is set close to unity at $\delta=0.997$. The slope of the Phillips curve $\kappa$ is set at $\kappa=0.336$ using equation (\ref{kappa}) with the Calvo parameter of $\xi=0.75$ and the Frisch labor elasticity of $\nu=0.5$. In the monetary policy rule, the persistence parameter is set at $\rho_{i}=0.7$; the inflation coefficient is set at $r_{\pi}=1.5$; the output coefficient is set at $r_{y}=0.5$. The AR(1) coefficients for the supply and demand shocks are set at $\rho_{a}=\rho_{b}=0.9$, and the coefficients $\chi_{b}$ and $\chi_{a}$ are set according to equations (\ref{chib}) and (\ref{chia}), respectively. The term premium in steady state is set at $\zeta=0.01/4$. We consider different values for the parameters $\lambda^{\ast}$ and $\alpha$ (reported in the main text) to study the effects of UMP.

\subsection{Proof of Proposition \ref{prop1}}\label{a: proof 1}
\textbf{Part (i)} Because of the equivalence established in Lemma \ref{lemma1}, without loss of generality, consider the case of $\lambda^{\ast}=1$ and $\alpha=0$ in the theoretical model. In this case, the variables $\hat{y}_{t}$, $\hat{\pi}_{t}$, and $\hat{i}_{t}^{\ast}$ have a closed system of equations, consisting of equation (\ref{Euler}) with $\lambda^{\ast}=1$, equation (\ref{Phillips}), and $\hat{i}_{t}^{\ast}=\hat{i}_{t}^{\text{Taylor}}$, where $\hat{i}_{t}^{\text{Taylor}}$ is given by equation (\ref{Taylor}).

In this case, the state of the economy in period $t$ can be summarized by $\hat{i}_{t-1}^{\ast}$, $\epsilon_{t}^{i}$, $z_{t}^{a}$, and $z_{t}^{b}$. Then decision rules for $\hat{y}_{t}$ and $\hat{\pi}_{t}$ have the following form:
\begin{align}
    &\hat{y}_{t} = d_{y i^{\ast}}\hat{i}_{t-1}^{\ast} + d_{yi}\epsilon_{t}^{i} + d_{ya}z_{t}^{a} + d_{yb}z_{t}^{b}, \notag \\
    &\hat{\pi}_{t} = d_{\pi i^{\ast}}\hat{i}_{t-1}^{\ast} + d_{\pi i}\epsilon_{t}^{i} + d_{\pi a}z_{t}^{a} + d_{\pi b}z_{t}^{b}, \notag
\end{align}
with coefficients $\{d_{y i^{\ast}}, d_{yi}, d_{ya}, d_{yb}, d_{\pi i^{\ast}}, d_{\pi i}, d_{\pi a}, d_{\pi b}\}$ uniquely determined under standard assumptions of the model (such as the Taylor principle).
With these decision rules, the equation for $\hat{i}_{t}^{\ast}$ can be written as
\begin{align}
  \hat{i}_{t}^{\ast} = &\left[\rho_{i}+ (1-\rho_{i})\left(r_{\pi}d_{\pi i^{\ast}}+r_{y}d_{y i^{\ast}}\right)\right]\hat{i}_{t-1}^{\ast}+ \left[(1-\rho_{i})\left(r_{\pi}d_{\pi i} + r_{y}d_{y i}\right)+1\right]\epsilon_{t}^{i} \notag \\
+&
(1-\rho_{i})\left(r_{\pi}d_{\pi a} + r_{y}d_{y a}\right)z_{t}^{a} + (1-\rho_{i})\left(r_{\pi}d_{\pi b} + r_{y}d_{y b}\right)z_{t}^{b} \notag \\
=& d_{i^{\ast}i^{\ast}}\hat{i}_{t-1}^{\ast} + d_{i^{\ast}i}\epsilon_{t}^{i} + d_{i^{\ast}a}z_{t}^{a} + d_{i^{\ast}b}z_{t}^{b}. \label{i*hat}
\end{align}
Let $\mathbf{y}_{t} \equiv [\hat{y}_{t}, \hat{\pi}_{t}, \hat{i}_{t}^{\ast}]^{\prime}$ denote the vector of endogenous variables. The decision rule implies
\begin{align}
   \mathbf{y}_{t} = &\begin{bmatrix}
                               d_{y i^{\ast}} & d_{yi} & d_{ya} & d_{yb} \\
d_{\pi i^{\ast}} & d_{\pi i} & d_{\pi a} & d_{\pi b} \\
d_{i^{\ast}i^{\ast}} & d_{i^{\ast}i} & d_{i^{\ast}a} & d_{i^{\ast}b}
                              \end{bmatrix}
                  \begin{bmatrix}
                   i_{t-1}^{\ast} \\
                   \epsilon_{t}^{i} \\
                   \rho_{a}z_{t-1}^{a} + \epsilon_{t}^{a} \\
                   \rho_{b}z_{t-1}^{b} + \epsilon_{t}^{b}
                 \end{bmatrix} \notag \\
        =&\begin{bmatrix}
                               d_{y i^{\ast}} &  \rho_{a}d_{ya} & \rho_{b}d_{yb} \\
d_{\pi i^{\ast}}  & \rho_{a}d_{\pi a} & \rho_{b}d_{\pi b} \\
d_{i^{\ast}i^{\ast}}  & \rho_{a}d_{i^{\ast}a} & \rho_{b}d_{i^{\ast}b}
                              \end{bmatrix}
                  \begin{bmatrix}
                   i_{t-1}^{\ast} \\
                   z_{t-1}^{a}\\
                   z_{t-1}^{b}
                 \end{bmatrix}+
                 \begin{bmatrix}
                               d_{yi} & d_{ya} & d_{yb} \\
d_{\pi i} & d_{\pi a} & d_{\pi b} \\
d_{i^{\ast}i} & d_{i^{\ast}a} & d_{i^{\ast}b}
                              \end{bmatrix}
                              \begin{bmatrix}
                              \epsilon_{t}^{i} \\
                              \epsilon_{t}^{a} \\
                              \epsilon_{t}^{b}
                              \end{bmatrix} \notag \\
=& \mathbf{C} \mathbf{x}_{t-1} + \mathbf{D} \mathbf{\epsilon}_{t}. \label{LMy}
\end{align}
The law of motion for $\mathbf{x}_{t}\equiv [\hat{i}_{t}^{\ast}, z_{t}^{a}, z_{t}^{b}]^{\prime}$ is:
\begin{align}
  \mathbf{x}_{t} =& \begin{bmatrix}
                 d_{i^{\ast}i^{\ast}} & \rho_{a}d_{i^{\ast}a} & \rho_{b}d_{i^{\ast}b} \\
                 0   & \rho_{a} & 0 \\
                 0   &  0 & \rho_{b}
                 \end{bmatrix}\mathbf{x}_{t-1}
                  +
                 \begin{bmatrix}
                 d_{i^{\ast}i} & d_{i^{\ast}a} & d_{i^{\ast}b} \\
                 0    & 1 & 0 \\
                 0    & 0  & 1
                \end{bmatrix}\mathbf{\epsilon}_{t}\notag \\
=& \mathbf{A} \mathbf{x}_{t-1} + \mathbf{B}\mathbf{\epsilon}_{t}.\label{LMx}
\end{align}
Solving equation (\ref{LMy}) for ${\epsilon}_{t}$, and substituting the outcome in equation (\ref{LMx}) yields:
\begin{equation}
   \mathbf{x}_{t} = \left(\mathbf{A}-\mathbf{B}\mathbf{D}^{-1}\mathbf{C}\right)\mathbf{x}_{t-1} + \mathbf{B}\mathbf{D}^{-1}\mathbf{y}_{t}. \notag
\end{equation}
If $\mathbf{A}-\mathbf{B}\mathbf{D}^{-1}\mathbf{C}=\mathbf{0}$,  the vector of endogenous variables, $\mathbf{y}_{t}$, has a VAR(1) representation:
\begin{equation}
   \mathbf{y}_{t} = \mathbf{C}\mathbf{B}\mathbf{D}^{-1}\mathbf{y}_{t-1} + \mathbf{D}\mathbf{\epsilon}_{t}. \notag
\end{equation}
The rest of the proof shows $\mathbf{A}-\mathbf{B}\mathbf{D}^{-1}\mathbf{C}=\mathbf{0}$.
Substituting the matrices $\mathbf{A}$ and $\mathbf{B}$ in equation (\ref{LMx}) into this condition yields:
\begin{equation}
   \mathbf{D}^{-1}\mathbf{C} = \begin{bmatrix}
d_{i^{\ast}i^{\ast}}/d_{i^{\ast}i} & 0 & 0 \\
0 & \rho_{a} & 0 \\
0 & 0 & \rho_{b}
\end{bmatrix}.\notag
\end{equation}
Further substituting the matrices $\mathbf{C}$ and $\mathbf{D}$ in equation (\ref{LMy}) into this condition leads to: $\mathbf{A}-\mathbf{B}\mathbf{D}^{-1}\mathbf{C}=\mathbf{0}$ if and only if $d_{y i^{\ast}} = d_{yi} \left(d_{i^{\ast}i^{\ast}}/d_{i^{\ast}i}\right)$ and $d_{\pi i^{\ast}} = d_{\pi i} \left(d_{i^{\ast}i^{\ast}}/d_{i^{\ast}i}\right)$.
Substituting the decision rules into equation (\ref{Euler}) yields:
\begin{align}
 &\hat{y}_{t} = \left(d_{y i^{\ast}} - \frac{1}{\sigma} + \frac{d_{\pi i}}{\sigma}\right)d_{i^{\ast} i^{\ast}}\hat{i}_{t-1}^{\ast} + \left(d_{y i^{\ast}} - \frac{1}{\sigma} + \frac{d_{\pi i}}{\sigma}\right)d_{i^{\ast} i}\epsilon_{t}^{i} + ..., \notag
\end{align}
where terms related to $z_{t}^{a}$ and $z_{t}^{b}$ are omitted. Matching coefficients on $\hat{i}_{t-1}^{\ast}$ and $\epsilon_{t}^{i}$ of both sides of the equation yields:
\begin{align}
   d_{yi^{\ast}} = &\left(d_{y i^{\ast}} - \frac{1}{\sigma} + \frac{d_{\pi i}}{\sigma}\right)d_{i^{\ast} i^{\ast}}, \notag \\
    d_{y i} = &\left(d_{y i^{\ast}} - \frac{1}{\sigma} + \frac{d_{\pi i}}{\sigma}\right)d_{i^{\ast} i}. \notag
\end{align}
These two equations imply $d_{yi^{\ast}} = d_{yi}\left(d_{i^{\ast}i^{\ast}}/d_{i^{\ast}i}\right)$. Next, substituting the decision rules into equation (\ref{Phillips}) yields:
\begin{equation}
  \hat{\pi}_{t} = \left(\delta d_{\pi i^{\ast}} + \kappa d_{y i^{\ast}}\right) \hat{i}_{t-1}^{\ast} + \left(\delta d_{\pi i^{\ast}}d_{i^{\ast}i} + \kappa d_{yi}\right)\epsilon_{t}^{i} + ..., \notag
\end{equation}
where terms related to $z_{t}^{a}$ and $z_{t}^{b}$ are omitted. Matching coefficients on $\hat{i}_{t-1}^{\ast}$ and $\epsilon_{t}^{i}$ of both sides of the equation yields:
\begin{align}
  d_{\pi i^{\ast}} = &\delta d_{\pi i^{\ast}} + \kappa d_{yi}\left(\frac{d_{i^{\ast}i^{\ast}}}{d_{i^{\ast}i}}\right), \notag \\
  d_{\pi i} = &\delta d_{\pi i^{\ast}}d_{i^{\ast}i} + \kappa d_{yi}, \notag
\end{align}
where $d_{yi^{\ast}}=d_{yi}\left(d_{i^{\ast}i^{\ast}}/d_{i^{\ast}i}\right)$ is used in the first equation. Solving these two equations for $d_{\pi i^{\ast}}$ yields $d_{\pi i^{\ast}} = d_{\pi i}\left(d_{i^{\ast}i^{\ast}}/d_{i^{\ast}i}\right)$.
\vspace{.3cm}

\noindent
\textbf{Part (ii)} Again, without loss of generality, consider the case of $\lambda^{\ast}=1$ and $\alpha=0$. Under Assumption \ref{assumption1} and the irrelevance hypothesis, the long-term yield can be written as (\ref{RLbarhat3}) with $\lambda^{\ast}=1$ as
\begin{equation}
    \hat{\bar{R}}_{L}=\frac{\bar{R}_{L}-\mu}{\bar{R}_{L}}\hat{i}^{\ast}_{t} + \frac{\mu}{\bar{R}_{L}}E_{t}\hat{\bar{R}}_{L,t+1}. \notag
\end{equation}
Solving this equation forward yields
\begin{equation}
  \hat{\bar{R}}_{L,t} = \left(\frac{\bar{R}_{L}-\mu}{\bar{R}_{L}}\right) E_{t}\left[\hat{i}_{t}^{\ast} + \frac{\mu}{\bar{R}_{L}}\hat{i}_{t+1}^{\ast} + \left(\frac{\mu}{\bar{R}_{L}}\right)^{2}\hat{i}_{t+2}^{\ast} + ... \right].\notag
\end{equation}
Because the right-hand-side of the equation depends on information in period $t$, which consist of $\hat{i}_{t}^{\ast}$, $z_{t}^{a}$, and $z_{t}^{b}$, the long-term interest rate can be written as:
\begin{equation}
  \hat{\bar{R}}_{L,t} = f_{i^{\ast}}\hat{i}_{t}^{\ast} + f_{a}z_{t}^{a} + f_{b}z_{t}^{b}, \notag
\end{equation}
where $f_{i^{\ast}}$, $f_{a}$, and $f_{b}$ are coefficients derived by using equation (\ref{i*hat}) as
\begin{align*}
 &f_{i^{\ast}} = \frac{\bar{R}_{L}-\mu}{\bar{R}_{L}- d_{i^{\ast}i^{\ast}}\mu}, \\
 &f_{a} = \frac{(\bar{R}_{L}-\mu) d_{i^{\ast}a}\rho_{a}\mu}{(\bar{R}_{L}-\rho_{a}\mu)(\bar{R}_{L}- d_{i^{\ast}i^{\ast}}\mu)}, \\
 &f_{b} = \frac{(\bar{R}_{L}-\mu) d_{i^{\ast}b}\rho_{b}\mu}{(\bar{R}_{L}-\rho_{b}\mu)(\bar{R}_{L}- d_{i^{\ast}i^{\ast}}\mu)}.
\end{align*}
Again by using equation (\ref{i*hat}) the equation for the long-term yield can be written as:
\begin{equation}
  \hat{\bar{R}}_{L,t} = f_{i^{\ast}}d_{i^{\ast}i^{\ast}}\hat{i}_{t-1}^{\ast} + f_{i^{\ast}}d_{i^{\ast}i}\epsilon_{t}^{i} + \left(f_{i^{\ast}}d_{i^{\ast}a}+f_{a}\right)z_{t}^{a} + \left(f_{i^{\ast}}d_{i^{\ast}b}+f_{b}\right)z_{t}^{b}. \notag
\end{equation}
Define $\mathbf{y}_{t}\equiv [\hat{y}_{t}, \hat{\pi}_{t}, \hat{\bar{R}}_{L,t}]^{\prime}$, $\mathbf{x}_{t}\equiv [\hat{i}_{t-1}^{\ast}, z_{t}^{a}, z_{t}^{b}]^{\prime}$, and $\mathbf{\epsilon}_{t}=[\epsilon_{t}^{i}, \epsilon_{t}^{a}, \epsilon_{t}^{b}]^{\prime}$. Then, the state space representation for $\mathbf{y}$ is
\begin{align}
   \mathbf{y}_{t}
        =&\begin{bmatrix}
                               d_{y i^{\ast}} &  \rho_{a}d_{ya} & \rho_{b}d_{yb} \\
d_{\pi i^{\ast}}  & \rho_{a}d_{\pi a} & \rho_{b}d_{\pi b} \\
f_{i^{\ast}}d_{i^{\ast}i^{\ast}}  & \rho_{a}(f_{i^{\ast}}d_{i^{\ast}a}+f_{a}) & \rho_{b}(f_{i^{\ast}}d_{i^{\ast}b}+f_{b})
                              \end{bmatrix}
                  \mathbf{x}_{t}+
                 \begin{bmatrix}
                               d_{yi} & d_{ya} & d_{yb} \\
d_{\pi i} & d_{\pi a} & d_{\pi b} \\
f_{i^{\ast}}d_{i^{\ast}i} & f_{i^{\ast}}d_{i^{\ast}a}+f_{a} & f_{i^{\ast}}d_{i^{\ast}b}+f_{b}
                              \end{bmatrix}
                              \mathbf{\epsilon}_{t} \notag \\
=& \mathbf{C} \mathbf{x}_{t-1} + \mathbf{D} \mathbf{\epsilon}_{t}. \notag
\end{align}
and
\begin{align}
  \mathbf{x}_{t} =& \begin{bmatrix}
                 d_{i^{\ast}i^{\ast}} & \rho_{a}d_{i^{\ast}a} & \rho_{b}d_{i^{\ast}b}  \\
                 0   & \rho_{a} & 0 \\
                 0   &  0 & \rho_{b}
                 \end{bmatrix}\mathbf{x}_{t-1}
                  +
                 \begin{bmatrix}
                 d_{i^{\ast}i} & d_{i^{\ast}a} & d_{i^{\ast}b}  \\
                 0    & 1 & 0 \\
                 0    & 0  & 1
                \end{bmatrix}\mathbf{\epsilon}_{t}\notag \\
=& \mathbf{A} \mathbf{x}_{t-1} + \mathbf{B}\mathbf{\epsilon}_{t}.\notag
\end{align}
Similar to the part (i) in Proposition \ref{prop1}, a solution for $\mathbf{y}_{t}$ has a VAR(1) representation if and only if $\mathbf{A}-\mathbf{B}\mathbf{D}^{-1}\mathbf{C}=\mathbf{0}$. This condition holds if and only if $d_{yi^{\ast}}=d_{yi}(d_{i^{\ast}i^{\ast}}/d_{i^{\ast}i})$ and $d_{\pi i^{\ast}}=d_{\pi i}(d_{i^{\ast}i^{\ast}}/d_{i^{\ast}i})$. The latter two conditions hold as shown in Part (i).

\subsection{Proof of Proposition \ref{prop2}}\label{a: proof2}
We show that equations (\ref{i}), (\ref{i*}), (\ref{Taylor}), (\ref{Euler}), and (\ref{Phillips}) can be written in the empirical structural form of equations (\ref{eq: Y2}), (\ref{eq: Y2*}), and (\ref{eq: Y1}). This will prove the proposition since the structural form has a piecewise linear representation, as explained in the main text. It is straightforward to see that equations (\ref{i}), (\ref{i*}), and (\ref{Taylor}) in the theoretical model can be written in the form of equations (\ref{eq: Y2}) and (\ref{eq: Y2*}) in the empirical model. Below, we are going to show that equations (\ref{Euler}) and (\ref{Phillips}) can be represented by the structural form equation (\ref{eq: Y1}).

Without loss of generality, consider a case in which agents forming expectations assuming: $\lambda^{\ast}=1$ and $\alpha=0$. When forming expectations about variables in period $t+1$, the initial condition is given by $\tilde{\mathbf{x}}_{t}\equiv [(1-\lambda^{\ast})\hat{i}_{t}+\lambda^{\ast}\hat{i}_{t}^{\ast}, z_{t}^{a}, z_{t}^{b}]^{\prime}$. Under Assumption \ref{assumption3},  the decision rule used for forming expectations about period $t+1$ variables is  $\mathbf{y}_{t+1} = \mathbf{C}\tilde{\mathbf{x}}_{t} + \mathbf{D}\mathbf{\epsilon}_{t+1}$, where $\mathbf{C}$ and $\mathbf{D}$ are those defined in equation (\ref{LMy}).
From period $t+s$ onward, for $s=2,3,...$, time $t+s$ variables are expected \textit{in period $t$} to follow $\mathbf{y}_{t+s}=\mathbf{C}\mathbf{x}_{t+s-1} + \mathbf{D}\mathbf{\epsilon}_{t+s}$, where $\mathbf{x}_{t}\equiv [\hat{i}_{t}^{\ast},z_{t}^{a},z_{t}^{b}]^{\prime}$.
But, once the time proceeds and becomes period $t+1$, the initial condition is updated to $\tilde{\mathbf{x}}_{t+1}$ and this is used for forming expectations about $t+2$ variables as $E_{t+1}\mathbf{y}_{t+2}=\mathbf{C}\tilde{\mathbf{x}}_{t+1}$. Hence, under the assumption about expectations, the decision rule is given by $\mathbf{y}_{t+s} = \mathbf{C}\tilde{\mathbf{x}}_{t+s-1} + \mathbf{D}\mathbf{\epsilon}_{t+s}$ for $s=1,2,...$ In this system, in every period information is updated and $\tilde{\mathbf{x}}_{t+s-1}$ is used as an initial condition.  The interest rate $\hat{i}_{t+s-1}$ in the initial condition is treated as if it were an exogenous variable.

By substituting the decision rule into the expected variables, equations (\ref{Euler}) and (\ref{Phillips}) can be written as:
\begin{align}
 &\hat{y}_{t} = \left(-\frac{1}{\sigma} +d_{y i^{\ast}}+\frac{d_{\pi i^{\ast}}}{\sigma}\right) \left((1-\lambda^{\ast})\hat{i}_{t} + \lambda^{\ast}\hat{i}_{t}^{\ast}\right) +  \left(\rho_{a}d_{y a}+\frac{\rho_{a} d_{\pi a}}{\sigma}\right)z_{t}^{a} + \left(\rho_{b}d_{y b}
  +\frac{\rho_{b} d_{\pi b}}{\sigma} - \chi_{z}\right)z_{t}^{b}, \label{yhat_app} \\
   &-\kappa y_{t} + \hat{\pi}_{t} =  \delta d_{\pi i^{\ast}}\left((1-\lambda^{\ast})\hat{i}_{t} + \lambda^{\ast}\hat{i}_{t}^{\ast}\right) + (\delta \rho_{a} d_{\pi a} - \chi_{a})z_{t}^{a} + \delta \rho_{b} d_{\pi b}z_{t}^{b}. \label{pihat_app}
\end{align}
Since $z_{t}^{a}$ and $z_{t}^{b}$ follow AR(1) processes, equations (\ref{yhat_app}) and (\ref{pihat_app}) can be written in a matrix form as:
\begin{equation}
  \mathbf{H}_{1}\begin{bmatrix} \hat{y}_{t} \\ \hat{\pi}_{t} \end{bmatrix} = \mathbf{H}_{2}\left((1-\lambda^{\ast})\hat{i} + \lambda^{\ast}\hat{i}_{t}^{\ast}\right)+ \mathbf{H}_{3}\begin{bmatrix} z_{t-1}^{a} \\ z_{t-1}^{b} \end{bmatrix} + \mathbf{H}_{4}\begin{bmatrix} \epsilon_{t}^{a} \\ \epsilon_{t}^{b}\end{bmatrix}. \notag
\end{equation}
or
\begin{equation}
  \begin{bmatrix} \hat{y}_{t} \\ \hat{\pi}_{t} \end{bmatrix} = \mathbf{H}_{1}^{-1}\mathbf{H}_{2}\left((1-\lambda^{\ast})\hat{i} + \lambda^{\ast}\hat{i}_{t}^{\ast}\right)+ \mathbf{H}_{1}^{-1}\mathbf{H}_{3}\begin{bmatrix} z_{t-1}^{a} \\ z_{t-1}^{b} \end{bmatrix} + \mathbf{H}_{1}^{-1}\mathbf{H}_{4}\begin{bmatrix} \epsilon_{t}^{a} \\ \epsilon_{t}^{b}\end{bmatrix}. \label{structural}
\end{equation}
Also, under Assumption \ref{assumption3}, the expected values can be written as: $E_{t}\tilde{\mathbf{y}}_{t+1} = \mathbf{G}\tilde{\mathbf{y}}_{t}$, where $\tilde{\mathbf{y}}_{t}\equiv [\hat{y}_{t}, \hat{\pi}_{t}, (1-\lambda^{\ast})\hat{i}_{t} + \lambda^{\ast}\hat{i}_{t}^{\ast}]^{\prime}$ and $\mathbf{G}\equiv \mathbf{C}\mathbf{B}\mathbf{D}^{-1}$, as derived in the proof of Proposition \ref{prop1}.
By using this equation, equations (\ref{Euler}) and (\ref{Phillips}) can be written as:
\begin{align}
   &\chi_{z}z_{t}^{b} = \left(g_{yy} + \frac{g_{\pi y}}{\sigma}-1\right) \hat{y}_{t} + \left(g_{y\pi} + \frac{g_{\pi \pi}}{\sigma}\right)\hat{\pi}_{t}+ \left(g_{yi^{\ast}} + \frac{g_{\pi i^{\ast}}}{\sigma} - \frac{1}{\sigma}\right)\left((1-\lambda^{\ast})\hat{i}_{t} + \lambda^{\ast}\hat{i}_{t}^{\ast}\right), \notag \\
&\chi_{a}z_{t}^{a} = \left(\delta g_{\pi y} + \kappa\right)\hat{y}_{t} + \left(\delta g_{\pi \pi} - 1\right)\hat{\pi}_{t} + \delta g_{\pi i^{\ast}}\left((1-\lambda^{\ast})\hat{i}_{t} + \lambda^{\ast}\hat{i}_{t}^{\ast}\right), \notag
\end{align}
where $g_{ij}$'s correspond to elements in the matrix $\mathbf{G}$.
Then, the lagged shocks $z_{t-1}^{b}$ and $z_{t-1}^{a}$ in equation (\ref{structural}) can be represented by a function of $\tilde{\mathbf{y}}_{t-1}\equiv [\hat{y}_{t-1}, \hat{\pi}_{t-1}, (1-\lambda^{\ast})\hat{i}_{t-1} + \lambda^{\ast}\hat{i}_{t-1}^{\ast}]^{\prime}$.  From this result, equation (\ref{structural}) is in the same form of equation (\ref{eq: Y1}) in the structural form.

\subsection{Impulse responses to demand and supply shocks\label{app:pref_supp_shocks}}

We study impulse responses to a demand shock and a supply shock, respectively, in an ELB regime, using the theoretical model presented in Section \ref{s: dsge} of the main text.  We show that the responses differ significantly depending on the effectiveness of UMP.

Figure \ref{Fig:demandshock} plots impulse responses of output and inflation in the theoretical model to the contractionary demand shock of $\epsilon_{t}^{b}=0.25/400$ under the ELB. The responses are calculated exactly in the same way as those to a monetary policy shock, shown in Figure \ref{Fig:MPshock} in the main text. In the case of no UMP ($\xi=0$), a negative demand shock causes the largest declines in output and inflation. As the effectiveness of UMP increases, i.e., as $\xi$ increases, the negative responses of output and inflation become smaller.

\begin{figure}[t]
\caption{Impulse responses to a demand shock at the ELB}%
\label{Fig:demandshock}
\centering
\includegraphics[width=16.5cm]{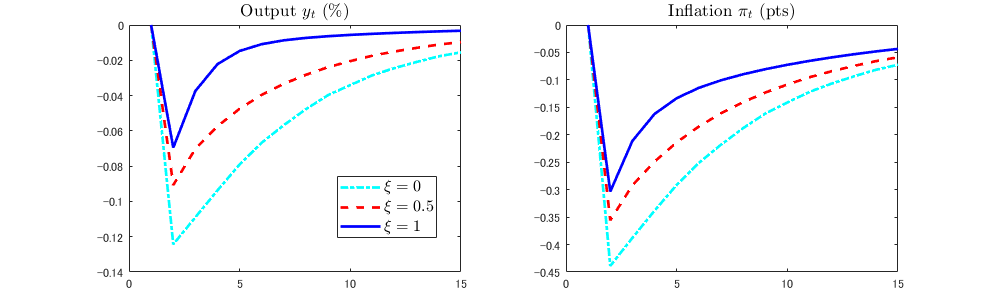}
\vspace{-.5cm}
\end{figure}

Figure \ref{Fig:supplyshock} plots impulse responses of output and inflation to the negative supply shock of $\epsilon_{t}^{a}=-0.25/100$ under the ELB. In the case of fully effective UMP ($\xi=1$), output decreases and inflation increases in response to the negative supply shock, as in the responses in a non-ELB regime.  However, as the effectiveness of UMP decreases, the degree of a decrease in output shrinks, and output even increases on impact in response to the negative supply shock in the case of no UMP ($\xi=0$).  This is driven by a stronger increase in inflation under the ELB.   Such an increase in inflation mitigates the negative impact of the ELB on output.  This effect dominates the direct effect of the negative supply shock, resulting in an increase in output on impact.

While we exclusively focus on monetary policy shocks in this paper, the same co-movement of the variables in response to supply and demand shocks when the economy approaches the ELB would pose a challenge for identifying responses to demand and supply shocks.

\begin{figure}[t]
\caption{Impulse responses to a supply shock at the ELB}%
\label{Fig:supplyshock}
\centering
\includegraphics[width=16.5cm]{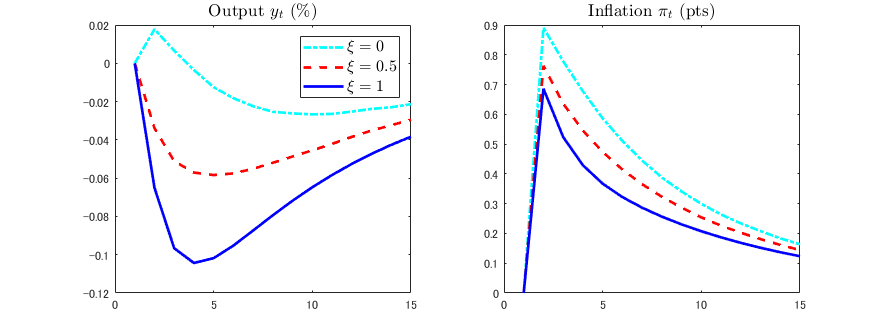}
\vspace{-.5cm}
\end{figure}

\section{Derivation of the attenuation effect (\ref{eq: attenuation})}\label{app: attenuation}

Start from the definition of the IRF to the monetary policy shock
$\bar{\varepsilon}_{2t}:=A_{22}^{\ast-1}\varepsilon_{2t}$.\footnote{Note that
the derivation of (\ref{eq: attenuation}) remains the same if we worked with
perturbations to $\varepsilon_{2t}$ instead of $\bar{\varepsilon}_{2t},$ but
we choose the latter to avoid carrying $A_{22}^{\ast}$ around in the
derivation.} This is a function of the shock  magnitude $\varsigma$ and horizon
$h$:%

\begin{equation}
IRF_{h,t}\left(  \varsigma\right)  =E\left(  Y_{1t+h}|\bar{\varepsilon}%
_{2t}=\varsigma,x_{t}\right)  -E\left(  Y_{1t+h}|\bar{\varepsilon}%
_{2t}=0,x_{t}\right)  ,\label{eq: IRF to MPS}%
\end{equation}
where $x_{t}=\left(  x_{1t}^{\prime},x_{2t}\right)  ^{\prime}$, $x_{it}%
:=\allowbreak C_{i}X_{t}+C_{i}^{\ast}X_{t}^{\ast}\ $embodies all the relevant
history of $Y_{t}$ up to period $t-1$.\footnote{$x_{t}$ is the sufficient
statistic for the entire history of $Y_{t}$ in the conditional expectations, i.e.,
$E\left(  Y_{t+h}|\bar{\varepsilon}_{2t},Y_{t-1},Y_{t-2},...\right)
\allowbreak=E\left(  Y_{t+h}|\bar{\varepsilon}_{2t},x_{t}\right)  $.} We only
need to discuss the impact effects, so we set $h=0$ in (\ref{eq: IRF to MPS})
and write $IRF_{0,t}\left(  \varsigma\right)  =g\left(  \varsigma
;x_{t}\right)  -g\left(  0;x_{t}\right)  ,$ where%
\begin{equation}
g\left(  \varsigma;x_{t}\right)  :=E\left(  Y_{1t}|\bar{\varepsilon}%
_{2t}=\varsigma,x_{t}\right)  .\label{eq: g}%
\end{equation}
Note that, despite the kink in the model, the fact that we are taking
expectations with respect to the remaining shocks $\bar{\varepsilon}%
_{1t}:=A_{11}^{-1}\varepsilon_{1t}$ implies that the function $g$ is smooth in both
arguments. 
 So, we can consider
infinitesimal interventions by computing $\lim_{\varsigma\rightarrow0}%
\frac{g\left(  \varsigma;x_{t}\right)  -g\left(  0;x_{t}\right)  }{\varsigma
}=:\frac{\partial g\left(  \varsigma;x_{t}\right)  }{\partial\varsigma}.$

Let $D_{t}=1_{\left\{  i_{t}^{\ast}<\underline{i}_{t}\right\}  }$ denote the indicator
that the interest rate is at the ELB. Then, using equation (\ref{eq: CKSVAR-RF}) in the main text,
we obtain:
\begin{align}
g\left(  \varsigma,x_{t}\right)   &  =C_{1}X_{t}+C_{12}^{\ast}X_{2t}^{\ast
}+E\left(  u_{1t}|\bar{\varepsilon}_{2t}=\varsigma,x_{t}\right)  \nonumber\\
&  -\widetilde{\beta}E\left(  D_{t}\left(  C_{2}X_{t}+C_{22}^{\ast}%
X_{2t}^{\ast}-\underline{i}_{t}+u_{2t}\right)  |\bar{\varepsilon}_{2t}=\varsigma
,x_{t}\right)  \nonumber\\
&  =x_{1t}+\left(  I_{k-1}-\beta\gamma\right)  ^{-1}\beta\varsigma
-\widetilde{\beta}E\left(  D_{t}|\bar{\varepsilon}_{2t}=\varsigma
,x_{t}\right)  \left(  x_{2t}-\underline{i}_{t}+\frac{\varsigma}{1-\gamma\beta}\right)
\nonumber\\
&  -\widetilde{\beta}E\left(  \left.  \frac{\gamma\bar{\varepsilon}_{1t}%
}{1-\gamma\beta}D_{t}\right\vert \bar{\varepsilon}_{2t}=\varsigma
,x_{t}\right)  \nonumber\\
&  =x_{1t}+\left(  I_{k-1}-\beta\gamma\right)  ^{-1}\beta\varsigma
-\widetilde{\beta}\Phi\left(  \frac{\underline{i}_{t}-x_{2t}-\frac{\varsigma}%
{1-\gamma\beta}}{\varpi}\right)  \left(  x_{2t}-\underline{i}_{t}+\frac{\varsigma
}{1-\gamma\beta}\right)  \nonumber\\
&  +\widetilde{\beta}\varpi\phi\left(  \frac{\underline{i}_{t}-x_{2t}-\frac{\varsigma
}{1-\gamma\beta}}{\varpi}\right)  .\label{eq: g solved}%
\end{align}
The second equality in equation (\ref{eq: g solved}) follows from the definitions of $x_{it}$ and
the fact that:
\[
u_{1t}=\left(  I_{k-1}-\beta\gamma\right)  ^{-1}\left(  \bar{\varepsilon}%
_{1t}+\beta\bar{\varepsilon}_{2t}\right)  ,\text{ \ and \ \ }u_{2t}%
:=\frac{\gamma\bar{\varepsilon}_{1t}+\bar{\varepsilon}_{2t}}{1-\gamma\beta},
\]
see \cite[equations 32 and 33]{Mavroeidis2019}, and the third equality in equation (\ref{eq: g solved}) follows
from:
\begin{align*}
E\left(  D_{t}|\bar{\varepsilon}_{2t}=\varsigma,x_{t}\right)   &  =\Pr\left(
u_{2t}<\underline{i}_{t}-x_{2t}|\bar{\varepsilon}_{2t}=\varsigma,x_{t}\right)  \\
&  =\Pr\left(  \frac{\gamma\bar{\varepsilon}_{1t}}{1-\gamma\beta}<\underline{i}_{t}%
-x_{2t}-\frac{\varsigma}{1-\gamma\beta}\right)  \\
&  =\Phi\left(  \frac{b-x_{2t}-\frac{\varsigma}{1-\gamma\beta}}{\varpi
}\right)  ,\quad\varpi^{2}:=var\left(  \frac{\gamma\bar{\varepsilon}_{1t}%
}{1-\gamma\beta}\right)
\end{align*}
and%
\begin{align*}
E\left(  \left.  \frac{\gamma\bar{\varepsilon}_{1t}}{1-\gamma\beta}%
D_{t}\right\vert \bar{\varepsilon}_{2t}=\varsigma,x_{t}\right)   &  =E\left(
\frac{\gamma\bar{\varepsilon}_{1t}}{1-\gamma\beta}\left\vert \frac{\gamma
\bar{\varepsilon}_{1t}}{1-\gamma\beta}<\underline{i}_{t}-x_{2t}-\frac{\varsigma}%
{1-\gamma\beta},\bar{\varepsilon}_{2t}=\varsigma,x_{t}\right.  \right)  \\
&  \times\Pr\left(  \left.  \frac{\gamma\bar{\varepsilon}_{1t}}{1-\gamma\beta
}<\underline{i}_{t}-x_{2t}-\frac{\varsigma}{1-\gamma\beta}\right\vert \bar{\varepsilon
}_{2t}=\varsigma,x_{t}\right)  \\
&  =-\varpi\phi\left(  \frac{\underline{i}_{t}-x_{2t}-\frac{\varsigma}{1-\gamma\beta}%
}{\varpi}\right)  ,
\end{align*}
where the second equality in the last expression follows from the independence
of $\bar{\varepsilon}_{1t}$ from $\bar{\varepsilon}_{2t}$ and $x_{t},$ and the
properties of the truncated standard normal distribution, i.e., $E\left(
z|z<a\right)  =-\phi\left(  a\right)  /\Phi\left(  a\right)  $.

Differentiating equation (\ref{eq: g solved}) with respect to $\varsigma$ yields:%
\begin{align*}
\frac{\partial g\left(  \varsigma,x_{t}\right)  }{\partial\varsigma} &
=\left(  I_{k-1}-\beta\gamma\right)  ^{-1}\beta-\frac{1}{1-\gamma\beta
}\widetilde{\beta}\Phi\left(  \frac{\underline{i}_{t}-x_{2t}-\frac{\varsigma}%
{1-\gamma\beta}}{\varpi}\right)  \\
&  +\widetilde{\beta}\frac{1}{1-\gamma\beta}\phi\left(  \frac{\underline{i}_{t}%
-x_{2t}-\frac{\varsigma}{1-\gamma\beta}}{\varpi}\right)  \left(  \frac
{x_{2t}-\underline{i}_{t}+\frac{\varsigma}{1-\gamma\beta}}{\varpi}\right)  \\
&  +\widetilde{\beta}\frac{1}{1-\gamma\beta}\left(  \frac{\underline{i}_{t}-x_{2t}%
-\frac{\varsigma}{1-\gamma\beta}}{\varpi}\right)  \phi\left(  \frac
{\underline{i}_{t}-x_{2t}-\frac{\varsigma}{1-\gamma\beta}}{\varpi}\right)  \\
&  =\left(  I_{k-1}-\beta\gamma\right)  ^{-1}\beta-\frac{1}{1-\gamma\beta
}\widetilde{\beta}\Phi\left(  \frac{\underline{i}_{t}-x_{2t}-\frac{\varsigma}%
{1-\gamma\beta}}{\varpi}\right)  ,
\end{align*}
where the first equality follows from the fact that $\partial\phi\left(
z\right)  /\partial z=-z\phi\left(  z\right)  .$ Evaluating the above
expression at $\varsigma=0$ yields the impact effect of a small monetary
policy shock on $Y_{1t}$ in period $t$, which is a $k-1$ vector, namely,%
\[
IR_{t}:=\left.  \frac{\partial g\left(  \varsigma,x_{t}\right)  }%
{\partial\varsigma}\right\vert _{\varsigma=0}=\left(  I_{k-1}-\beta
\gamma\right)  ^{-1}\beta-\frac{1}{1-\gamma\beta}\widetilde{\beta}\Phi\left(
\frac{\underline{i}_{t}-x_{2t}}{\varpi}\right)  .
\]

If there is no attenuation effect, the impact effect of the monetary
policy shock $\bar{\varepsilon}_{2t}$ on $Y_{1t}$ is common across regimes and
is given by:%
\[
IR_{NA}=\left(  I_{k-1}-\beta\gamma\right)  ^{-1}\beta=\frac{\beta}%
{1-\gamma\beta},
\]
where the second equality follows from the fact that $\left(  I_{k-1}-\beta
\gamma\right)  \beta=\beta\left(  1-\gamma\beta\right)  $. Therefore,
\[
IR_{t}=IR_{NA}-\frac{1}{1-\gamma\beta}\widetilde{\beta}\Phi\left(  \frac
{\underline{i}_{t}-x_{2t}}{\varpi}\right)  .
\]
The $j$th element of the $k-1$ vector $IR_{t}$ above can be written as%
\begin{align*}
IR_{j,t}  & =IR_{j,NA}-\frac{1}{1-\gamma\beta}\widetilde{\beta}_{j}\Phi\left(
\frac{\underline{i}_{t}-x_{2t}}{\varpi}\right)  \\
& =\frac{\beta_{j}}{1-\gamma\beta}-\frac{\beta_{j}}{1-\gamma\beta}%
\frac{\widetilde{\beta}_{j}}{\beta_{j}}\Phi\left(  \frac{\underline{i}_{t}-x_{2t}}{\varpi
}\right)  \\
& =\left(  1-\frac{\widetilde{\beta}_{j}}{\beta_{j}}\Phi\left(  \frac
{\underline{i}_{t}-x_{2t}}{\varpi}\right)  \right)  IR_{j,NA}.
\end{align*}
Renaming $x_{2t}=C_{2}X_{t}+C_{22}^{\ast}X_{2t}^{\ast},$ the one-step ahead
forecast of the reduced-form shadow rate, as $i_{t|t-1}^{\ast}$ yields
(\ref{eq: attenuation}) as required.

\section{Data description \label{app:data}}

We construct our quarterly data by taking averages of monthly series. For the U.S., the inflation rate is computed from the implicit price deflator (GDPDEF) as $\pi_{t} = 400\times \log(P_{t}/P_{t-1})$, where $P_{t}$ is the GDP deflator. The output gap is calculated as $100\%\times (GDPC1-GDPPOT)/GDPPOT$, where GDPC1 is the series for the U.S. real GDP and GDPPOT is the U.S. real potential GDP. The long-term interest rate is from the 10-year Treasury constant maturity rate (GS10). All these series are from the FRED database.\footnote{The data can be retrieved from the following websites: GDP deflator \citep{FredGDPDEF} \url{https://fred.stlouisfed.org/series/GDPDEF}; and series to construct the output gap \citep{FredGDP, FredGDPPOT}: \url{https://fred.stlouisfed.org/series/GDPC1} and \url{https://fred.stlouisfed.org/series/GDPPOT}; the Federal Funds Rate \citep{FredFEDFUNDS} \url{https://fred.stlouisfed.org/series/FEDFUNDS}; and the long yield \citep{FredLR1, FredLR, FredLR2, FredLR3, FredLR5, FredLR7} \url{https://fred.stlouisfed.org/series/GS10}. The data for the different monetary aggregates \citep{FredSHO, FredM1, FredM2M, FredM2, FredMB, FredMZM, Divisia} is available at: \url{https://fred.stlouisfed.org/categories/24} and \url{http://www.centerforfinancialstability.org/amfm_data.php}.} Money growth data for the U.S. are computed from 12 alternative indicators as listed in Table \ref{tb:mon_agg} as $m_{t}=400\times \log(M_{t}/M_{t-1})$, where $M_{t}$ is the particular money supply considered.
All $M_{t}$ values are quarterly and computed by taking averages of their corresponding monthly values. The traditional monetary aggregates (MB, M1, M2, M2M, MZM), and securities held outright are from the FRED database. The Divisia monetary aggregates (DIVM1, DIVM2, DIVM2M, DIVMZM, DIVM4) are from the Center for Financial Stability Divisia database.

For Japan, the quarterly call rate, bond yields, and the core CPI are computed as the averages of their monthly counterparts. The quarterly inflation rate is computed from the core CPI (consumption tax changes adjusted) as $\pi_{t}=400\times (CPI_{t}-CPI_{t-1})/CPI_{t-1}$. The GDP gap is that published by the Bank of Japan. The trend growth is defined by the annualised growth rate of potential GDP from the previous quarter, which comes from the estimates of the Cabinet Office. The interest on reserves (IOR) is constructed from the interest rate that the Bank of Japan applies to the Complementary Deposit Facility \citep{BOJIOR1,BOJIOR2}.\footnote{The data can be retrieved from the following websites: call rate \citep{BOJCR}: \url{http://www.stat-search.boj.or.jp/index_en.html}; 9-year and 10-year government bond yields \citep{MOFLR}: \url{https://www.mof.go.jp/jgbs/reference/interest_rate/data/jgbcm_all.csv};
GDP gap \cite{BOJOG}: \url{https://www.boj.or.jp/en/research/research_data/index.htm/}; core CPI inflation \citep{SBJINF}: \url{https://www.e-stat.go.jp/stat-search/file-download?statInfId=000031431696&fileKind=1}; trend growth rate \citep{CAOTG}: \url{https://www5.cao.go.jp/keizai3/getsurei-e/index-e.html}.}

\begin{table}[ptb]
\caption{Monetary Aggregates Data used in the Model}%
\label{tb:mon_agg}%
\begin{tabularx}{\linewidth}{|X|X|X|}
\hline
Monetary Aggregate ($M_t$) & Mnemonics in the Corresponding Database & Available Sample Periods \\
\hline
Monetary Base (MB) & BOGMBASE & 1948Q1-2019Q1 \\
\hline
M1 & M1SL & 1959Q2-2019Q1 \\
\hline
M2 & M2SL &  1959Q2-2019Q1 \\
\hline
M2M & M2MSL & 1959Q2-2019Q1 \\
\hline
MZM & MZMSL & 1959Q2-2019Q1 \\
\hline
Securities Held Outright & WSECOUT & 1989Q3-2019Q1 \\
\hline
Divisia M1 (DIVM1) & Divisia M1 & 1967Q2-2019Q1 \\
\hline
Divisia M2 (DIVM2) & Divisia M2 & 1967Q2-2019Q1 \\
\hline
Divisia M2M (DIVM2M) & Divisia M2M & 1967Q2-2019Q1 \\
\hline
Divisia MZM (DIVMZM) & Divisia MZM & 1967Q2-2019Q1 \\
\hline
Divisia M4 (DIVM4) & DM4 & 1967Q2-2019Q1 \\
\hline
\end{tabularx}
\end{table}

\section{Additional empirical results\label{app: empirical}}

\paragraph{Weaker version of IH$_{1}$.}
Table \ref{tb:excl_sr_no_inf_out} shows the results of the weaker version of IH$_{1}$, i.e., $C_{12}=C_{12}^{\ast}=\tilde{\beta}=0$ for inflation and output equations only. As in the baseline CKSVAR specification reported in Table \ref{tb:excl_sr} in the main text, 4 lags are selected for the U.S. and 2 lags are selected for Japan. The $p$-values reported in Table \ref{tb:excl_sr_no_inf_out} show that the weaker version of IH$_{1}$ is firmly rejected for both countries.

\begin{table}[thp]
\begin{centering}
\caption{Test for excluding short rates from VAR that includes long rates}%
\label{tb:excl_sr_no_inf_out}%
\begin{tabular}
[c]{c|rrrrrr|rrrrrr}\hline\hline
\multicolumn{13}{c}{Panel A: KSVAR} \\ \hline
 & \multicolumn{6}{c|}{United States} & \multicolumn{6}{c}{Japan} \\\hline
p & loglik  & pv-p & AIC & LR & df & $p$-val & loglik &  pv-p & AIC & LR & df & $p$-val \\\hline
5 & -213.4  & - & 2.62 & 36.08 & 12 & 0.000 & 248.1 & - & -2.18 & 12.74 & 12 & 0.389\\
4 & -221.5 & 0.446 & 2.55 & 33.42 & 10 & 0.000 & 239.9 & 0.425 & -2.30 & 14.13 & 10 & 0.167\\
3 & -234.4  & 0.112 & 2.53 & 27.12 & 8 & 0.001 & 232.2 & 0.471 & -2.42 & 14.89 & 8 & 0.061\\
2 & -266.0  & 0.000 & 2.66 & 28.29 & 6 & 0.000 & 223.8 & 0.445 & -2.53 & 15.70 & 6 & 0.015\\
1 & -296.7  & 0.000 & 2.78 & 24.62 & 4 & 0.000 & 184.8 & 0.000 & -2.19 & 25.15 & 4 & 0.000\\\hline
\multicolumn{13}{c}{} \\[-2ex]
\multicolumn{13}{c}{Panel B: CKSVAR} \\ \hline
p & loglik  & pv-p & AIC & LR & df & $p$-val & loglik &  pv-p & AIC & LR & df & $p$-val \\\hline
5 & -191.3 & - & 2.60 & 52.08 & 22 & 0.000 & 284.7 & - & -2.42 & 48.79 & 22 & 0.001\\
4 & -202.7 & 0.290 & 2.53 & 44.48 & 18 & 0.000 & 277.1 & 0.766 & -2.61 & 52.42 & 18 & 0.000\\
3 & -223.0 & 0.011 & 2.53 & 33.32 & 14 & 0.003 & 258.1 & 0.081 & -2.62 & 42.61 & 14 & 0.000\\
2 & -256.3 & 0.000 & 2.64 & 31.10 & 10 & 0.001 & 242.1 & 0.018 & -2.68 & 28.23 & 10 & 0.002\\
1 & -290.2 & 0.000 & 2.76 & 27.19 & 6 & 0.000 & 204.8 & 0.000 & -2.43 & 43.99 & 6 & 0.000 \\\hline\hline
\end{tabular}
\\
\end{centering}
\fnote{\footnotesize {Note: Panel A reports results for a KSVAR(p) with inflation, output gap, long rate, and policy rate. Panel B reports corresponding results for a CKSVAR(p) that includes shadow rates. The sample period is 1960q1-2019q1 for the U.S. and 1985q3-2019q1 for Japan. Long rates are 10-year government bond yields for the U.S. and 9-year yields for Japan. Under the null hypothesis, the short rate is excluded from the equations for inflation and output only. loglik is the value of the log-likelihood. pv-p is the $p$-value of the test for lag reduction. AIC is the Akaike information criterion. LR is the value of the LR test statistic for excluding short rates from equations for inflation and output gap. df is the number of restrictions. $p$-val is the asymptotic $\chi^2_{df}$ $p$-value of the test.}}\end{table}

\paragraph{Adding alternative measures of monetary policy.} Table \ref{tb:excl_sr_mon} shows the results of tests for exclusion of the Federal Funds Rate from a SVAR that includes inflation, the output gap, the 10-year bond yield, and various alternative measures of the growth of monetary aggregates outlined in column (1). Column (3) shows the order of the VAR selected by the AIC, which varies between 3 and 4 lags, consistent with the benchmark model in Table \ref{tb:excl_sr} in the main text.  Columns (4)
and (6) report the likelihood ratio test statistics for the joint exclusion
hypothesis and the corresponding asymptotic $p$-values, respectively. These results show
that the data strongly and consistently reject the joint exclusion restrictions on the Federal
Funds Rate across all the alternative specifications for all measures of money
supply, which corroborates the findings in the baseline 4-equation model in Table \ref{tb:excl_sr} in the main text.

\begin{table}[H]
\begin{centering}
\caption{Test for excluding short rates from VARs that include long rates and money}%
\label{tb:excl_sr_mon}%
\begin{tabular}
[c]{llcrcc}\hline\hline
Mon. Aggr. & sample & p & LR & df & $p$-val \\\hline
MB & 1960q1--2019q1 & 3 & 55.05 & 16 & 0.0000\\
M1 & 1960q3-2019q1 & 3 & 55.50 & 16 & 0.0000\\
M2 & 1960q3-2019q1 & 3 & 54.77 & 16 & 0.0000\\
M2M & 1960q3-2019q1 & 4 & 73.78 & 20 & 0.0000\\
MZM & 1960q3-2019q1 & 4 & 79.65 & 20 & 0.0000\\
DIVM1 & 1968q3-2019q1 & 4 & 80.68 & 20 & 0.0000\\
DIVM2 & 1968q3-2019q1 & 4 & 111.50 & 20 & 0.0000\\
DIVM2M & 1968q3-2019q1 & 4 & 110.88 & 20 & 0.0000\\
DIVMZM & 1968q3-2019q1 & 4 & 107.10 & 20 & 0.0000\\
DIVM4 & 1968q3-2019q1 & 4 & 135.38 & 20 & 0.0000\\
SHO &1990q4-2019q1 & 3 & 94.38 & 16 & 0.0000\\\hline\hline
\end{tabular}
\\
\fnote{\footnotesize Note: The estimated model is a KSVAR(p) for the U.S. with inflation, output gap, the Federal Funds Rate, the 10-year government bond yield, and a different measure of money growth in each row. Sample availability varies for each monetary aggregate used. LR is the value of the LR test statistic
for the testing that lags of the Federal Funds Rate can be excluded from all other equations in the model, df is the number of exclusion restrictions, and $p$-val is the asymptotic $\chi^2_{df}$ $p$-value of the test. }
\end{centering}
\end{table}

\paragraph{Robustness of test results for the U.S. to the Great Moderation.}

The test results of the IH over the full sample are subject to a possible misspecification arising from the `Great Moderation', a drop in U.S. macroeconomic volatility in the mid-1980s.
Therefore, we assess the robustness of our results by estimating the model and
performing the above tests of the IH over the sub-sample which starts in 1984q1.
Tables \ref{tb:excl_sr_sub} and \ref{tb:csvar-cksvar_sub} report the results over this subsample, which correspond to the results reported in Tables \ref{tb:excl_sr} and \ref{tb:csvar_lr_long} in the main text for the full sample, respectively. The results of the tests of the IH remain the same: the hypothesis is firmly rejected.

\begin{table}[H]
\begin{centering}
\caption{Test for excluding short rates form VAR that includes long rates post-1984}%
\label{tb:excl_sr_sub}%
\begin{tabular}
[c]{c|rrrrrr|rrrrrr}\hline\hline
 & \multicolumn{6}{c|}{KSVAR(p)} & \multicolumn{6}{c}{CKSVAR(p)} \\\hline
p & loglik  & pv-p & AIC & LR & df & $p$-val & loglik &  pv-p & AIC & LR & df & $p$-val \\\hline
5 & 97.92 & - & -0.01 & 25.63 & 18 & 0.11 & 122.37 & - & -0.07 & 61.65 & 33 & 0.002\\
4 & 92.83 & 0.857 & -0.17 & 28.01 & 15 & 0.022 & 119.99 & 1.000 & -0.33 & 70.78 & 27 & 0.000 \\
3 & 85.07 & 0.776 & -0.28 & 22.31 & 12 & 0.034 & 103.32 & 0.556 & -0.37 & 48.44 & 21 & 0.001\\
2 & 66.06 & 0.064 & -0.24 & 22.99 & 9 & 0.006 & 77.86 & 0.009 & -0.30 & 40.06 & 15 & 0.000\\
1 & 14.30 & 0.000 & 0.27 & 5.33 & 6 & 0.502 & 19.29 & 0.000 & 0.25 & 13.59 & 9 & 0.138\\\hline\hline
\end{tabular}
\\
\end{centering}
\fnote{\footnotesize Note: The estimated model is a (C)KSVAR(p) for the U.S. with inflation, output gap, Federal Funds Rate, and the 10-year government bond yield. Estimation sample is 1984q1-2019q1. loglik is the value of the log-likelihood. pv-p is the $p$-value
of the test for lag reduction. AIC is the Akaike information criterion. LR is the test statistic for excluding short rates from equations for inflation, output gap and long rates. df is the number of restrictions.  $p$-val is the asymptotic $\chi^2_{df}$ $p$-value of the test.}\end{table}

\begin{table}[H]
\begin{centering}
\caption{Testing CSVAR against CKSVAR post-1984}%
\label{tb:csvar-cksvar_sub}%
\begin{tabular}
[c]{lrrrr}\hline\hline
Country & p & LR & df & $p$-val\\\hline
U.S. & 3 & 31.17 & 15 & 0.008\\\hline\hline
\end{tabular}
\\
\end{centering}
\fnote{\footnotesize Note: The unrestricted model is a CKSVAR(3) for the U.S. with inflation, output gap, 10-year government bond yields, and the Federal Funds Rate. Sample: 1984q1-2019q1. LR is the test statistics of the restrictions that the model reduces to CSVAR(3). Lag order is chosen by AIC. df is the number of restrictions. $p$-val is the asymptotic $\chi^2_{df}$ $p$-value of the test.}\end{table}

\paragraph{Robustness of results to the inclusion of credit spreads in the VAR.}

For the U.S., we use Moody's seasoned BAA corporate bond yield relative to 10-year treasury yield \citep{FredCS} as the credit spreads, and the excess bond premium in \cite{GZ_AER12} and \cite{GZ_AER_DATA} for Japan. The test results show that our baseline results for IH$_{1}$ (Tables \ref{tb:excl_sr_app}-\ref{tb:excl_sr_2app}) and IH$_{2}$ (Table \ref{tb:csvar-cksvar_app}) are robust to the inclusion of credit spreads in the VAR.

\begin{table}[H]
\begin{centering}
\caption{Test for excluding short rates from VAR that includes long rates and credit spreads}%
\label{tb:excl_sr_app}%
\begin{tabular}
[c]{c|rrrrrr|rrrrrr}\hline\hline
\multicolumn{13}{c}{Panel A: KSVAR} \\ \hline
 & \multicolumn{6}{c|}{United States} & \multicolumn{6}{c}{Japan} \\\hline
p & loglik  & pv-p & AIC & LR & df & $p$-val & loglik &  pv-p & AIC & LR & df & $p$-val \\\hline
5 & 298.6 & - & -2.34 & 40.59 & 24 & 0.018 & 391.8 & - & -3.52 & 45.64 & 24 & 0.005\\
4 & 286.3 & 0.486 & -2.54 & 42.69 & 20 & 0.002 & 373.9 & 0.073 & -3.63 & 38.10 & 20 & 0.009\\
3 & 266.5  & 0.086 & -2.62 & 37.49 & 16 & 0.002 & 358.7 & 0.061 & -3.77 & 34.18 & 16 & 0.005\\
2 & 232.1  & 0.000 & -2.47 & 18.24 & 12 & 0.109 & 346.3 & 0.100 & -3.96 & 31.84 & 12 & 0.001\\
1 & 178.4  & 0.000 & -2.02 & 10.79 & 8 & 0.214 & 303.4 & 0.000 & -3.69 & 44.30 & 8 & 0.000\\\hline
\multicolumn{13}{c}{} \\[-2ex]
\multicolumn{13}{c}{Panel B: CKSVAR} \\ \hline
p & loglik  & pv-p & AIC & LR & df & $p$-val & loglik &  pv-p & AIC & LR & df & $p$-val \\\hline
5 & 328.6 & - & -2.42 & 87.26 & 44 & 0.000 & 441.6 & - & -3.89 & 131.19 & 44 & 0.000\\
4 & 309.9 & 0.167 & -2.59 & 83.99 & 36 & 0.000 & 412.4 & 0.001 & -3.90 & 100.99 & 36 & 0.000\\
3 & 288.0 & 0.035 & -2.72 & 73.60 & 28 & 0.000 & 376.7 & 0.000 & -3.82 & 62.78 & 28 & 0.000\\
2 & 243.0 & 0.000 & -2.48 & 37.62 & 20 & 0.010 & 356.6 & 0.000 & -3.97 & 47.15 & 20 & 0.001\\
1 & 183.5 & 0.000 & -2.10 & 20.41 & 12 & 0.060 & 316.2 & 0.000 & -3.81 & 62.95 & 12 & 0.000 \\\hline\hline
\end{tabular}
\\
\end{centering}
\fnote{\footnotesize {Note: Panel A reports results for a KSVAR(p) with inflation, output gap, long rate, credit spread, and policy rate. Panel B reports corresponding results for a CKSVAR(p) that includes shadow rates. Estimation sample is 1987q2-2019q1 for the U.S. and 1985q3-2019q1 for Japan. Long rates are 10-year government bond yields for the U.S. and 9-year yields for Japan. The credit spreads are Moody's seasoned BAA corporate bond yield relative to 10-year treasury yield for the U.S., and the excess bond premium introduced by \cite{GZ_AER12} for Japan. loglik is the value of the log-likelihood. pv-p is the $p$-value of the test for lag reduction. AIC is the Akaike information criterion. LR is the test statistic for excluding short rates from equations for inflation, output gap, credit spread, and long rates. df is the number of restrictions. $p$-val is the asymptotic $\chi^2_{df}$ $p$-value of the test.}}\end{table}

\begin{table}[H]
\begin{centering}
\caption{Test for excluding short rates from VAR that includes long rates and credit spreads}%
\label{tb:excl_sr_2app}%
\begin{tabular}
[c]{c|rrrrrr|rrrrrr}\hline\hline
\multicolumn{13}{c}{United States, with Excess Bond Premium} \\ \hline
 & \multicolumn{6}{c}{Panel A: KSVAR} & \multicolumn{6}{c}{Panel B: CKSVAR} \\\hline
p & loglik  & pv-p & AIC & LR & df & $p$-val & loglik  & pv-p & AIC & LR & df & $p$-val\\\hline
5 & 60.9 & - & 0.979 & 73.83 & 24 & 0.000 & 86.7 & - & 0.970 & 113.92 & 44 & 0.000\\
4 & 47.4 & 0.359 & 0.851 & 70.02 & 20 & 0.000 & 70.4 & 0.332 & 0.818 & 100.20 & 36 & 0.000\\
3 & 22.6 & 0.009 & 0.849 & 54.65 & 16 & 0.000 & 36.7 & 0.001 & 0.859 & 69.35 & 28 & 0.000\\
2 & -13.8 & 0.000 & 0.975 & 33.35 & 12 & 0.001 & 0.4 & 0.000 & 0.929 & 49.91 & 20 & 0.000\\
1 & -49.3 & 0.000 & 1.092 & 20.91 & 8 & 0.007 & -36.6 & 0.000 & 1.006 & 37.12 & 12 & 0.000\\\hline\hline
\end{tabular}
\\
\end{centering}
\fnote{\footnotesize {Note: Panel A reports results for a KSVAR(p) with inflation, output gap, long rate, credit spread, and policy rate. Panel B reports corresponding results for a CKSVAR(p) that includes shadow rates. Estimation sample is 1974q2-2019q1. Credit spreads are the excess bond premium in \cite{GZ_AER12}. loglik is the value of the log-likelihood. pv-p is the $p$-value of the test for lag reduction. AIC is the Akaike information criterion. LR is the test statistic for excluding short rates from equations for inflation, output gap, credit spread, and long rates. df is the number of restrictions. $p$-val is the asymptotic $\chi^2_{df}$ $p$-value of the test.}}\end{table}

\begin{table}[H]
\begin{centering}
\caption{Testing CSVAR against CKSVAR with credit spreads}%
\label{tb:csvar-cksvar_app}%
\begin{tabular}
[c]{lrrrr}\hline\hline
Country & p & LR & df & $p$-val\\\hline
U.S.(BAA) & 3 & 49.58 & 19 & 0.000\\\hline
U.S.(EBP) & 4 & 53.56 & 24 & 0.000\\\hline
Japan & 2 & 40.62 & 14 & 0.000\\\hline\hline
\end{tabular}
\\
\end{centering}
\fnote{\footnotesize{Note: The unrestricted model is a CKSVAR($p$) in inflation, output gap, long rate, credit spread, and policy rate. Long rate: 10-year government bond yield (U.S.), 9-year government bond yield (Japan). Policy rate: Federal Funds Rate (U.S.), call rate (Japan). Credit spread: Moody's seasoned BAA corporate bond yield relative to 10-year treasury yield (U.S.), the excess bond premium (U.S. and Japan). Sample: 1987q2-2019q1 (U.S. with BAA spread), 1974q2-2019q1 (U.S. with EBP), 1985q3-2019q1 (Japan). $p$ chosen by AIC. LR is the test statistics of the restrictions that the model reduces to CSVAR($p$). df is the number of restrictions. $p$-val is the asymptotic $\chi^2_{df}$ $p$-value of the test.}}\end{table}

\paragraph{Robustness of Japanese results to 10-year rates.}
Similarly, we test the robustness of our results for the Japanese data by
using the 10-year yields instead. This shortens the available
sample for estimation to 1987q4 to 2019q1. Tables \ref{tb:excl_sr_10yearJP} and \ref{tb:csvar-cksvar_10yearJP} report test statistics for the two types of tests
for the IH. From Tables \ref{tb:excl_sr_10yearJP} and
\ref{tb:csvar-cksvar_10yearJP}, the IH is rejected across all lags. For the
CKSVAR alternative, 2 lags are selected based on the AIC. Table
\ref{tb:csvar-cksvar_10yearJP} also suggests the rejection of the IH.

\begin{table}[H]
\begin{centering}
\caption{Test for excluding short rates from VAR for Japan using 10-year bond yields}%
\label{tb:excl_sr_10yearJP}%
\begin{tabular}
[c]{c|rrrrrr|rrrrrr}\hline\hline
 & \multicolumn{6}{c|}{KSVAR(p)} & \multicolumn{6}{c}{CKSVAR(p)} \\\hline
p & loglik  & pv-p & AIC & LR & df & $p$-val & loglik &  pv-p & AIC & LR & df & $p$-val \\\hline
5 & 285.1 & - & -2.92 & 37.43 & 18 & 0.005 & 320.5 & - & -3.17 & 99.85 & 33 & 0.000\\
4 & 275.1 & 0.217 & -3.02 & 31.62 & 15 & 0.007 & 307.4 & 0.159 & -3.28 & 86.95 & 27 & 0.000\\
3 & 270.5 & 0.605 & -3.20 & 32.05 & 12 & 0.001 & 290.6 & 0.023 & -3.33 & 62.07 & 21 & 0.000\\
2 & 256.2 & 0.155 & -3.23 & 24.60 & 9 & 0.003 & 274.3 & 0.004 & -3.39 & 50.90 & 15 & 0.000\\
1 & 196.4 & 0.000 & -2.53 & 22.84 & 6 & 0.001 & 212.8 & 0.000 & -2.73 & 38.81 & 9 & 0.000\\\hline\hline
\end{tabular}
\\
\end{centering}
\fnote{\footnotesize Note: The estimated model is a (C)KSVAR(p) for Japan with inflation, output gap, 10-year government bond yields, and the call rate. Estimation sample is 1987q4-2019q1. loglik is the value of the log-likelihood. pv-p is the $p$-value
of the test for lag reduction. AIC is the Akaike information criterion. LR is the test statistic for excluding short rates from equations for inflation, output gap and long rates. df is the number of restrictions. $p$-val is the asymptotic $\chi^2_{df}$ $p$-value of the test.}\end{table}

\begin{table}[H]
\begin{centering}
\caption{Testing CSVAR against CKSVAR for Japan using 10-year bond yields}%
\label{tb:csvar-cksvar_10yearJP}%
\begin{tabular}
[c]{lrrrr}\hline\hline
Country & p & LR & df & $p$-val\\\hline
Japan & 2 & 47.54 & 11 & 0.000\\\hline\hline
\end{tabular}
\\
\end{centering}
\fnote{\footnotesize Note: The unrestricted model is a CKSVAR(2) for Japan with inflation, output gap, 10-year government bond yields, and the call rate. Estimation sample: 1987q4-2019q1. LR is the test statistics of the restrictions that the model reduces to CSVAR(2). Lag order is chosen by AIC. df is yje number of restrictions. $p$-val is the asymptotic $\chi^2_{df}$ $p$-value of the test.}\end{table}

\paragraph{Power of the irrelevance tests IH$_1$ and IH$_2$.}

We use the theoretical model to generate 100 artificial time series under values for the parameter $\xi$ in the range $[0.7, 0.99]$. Table \ref{tb:excl_sr_tab1} and \ref{tb:excl_sr_tab2} report the number of rejections for the tests of our irrelevance hypotheses IH$_1$ and IH$_2$, respectively. If our tests are powerful, we would expect the number of rejections to decline with $\xi$ approaching the value of 1 for which the irrelevance hypothesis holds true in the simulated data.

The tables show that the irrelevance tests are powerful. For instance, in the case of the KSVAR as the unrestricted model, the test rejects IH$_1$ at a $1$ percent significance level with the rejection rate (frequency) of 99 percent when $\xi=0.7$, while the rejection rate is 1 percent when $\xi=0.99$ at the same significance level. Similar results hold for alternative significance levels (columns 2, 3), the CKSVAR as the unrestricted model (Table \ref{tb:excl_sr_tab1}, Panel B), and the test for IH$_2$ (Table \ref{tb:excl_sr_tab2}).

\begin{table}[H]
\begin{centering}
\caption{Test for excluding short rates from VAR that includes long rates, with simulated data}%
\label{tb:excl_sr_tab1}%
\begin{tabular}
[c]{c|rrr}\hline\hline
\multicolumn{4}{c}{Panel A: KSVAR} \\ \hline
 &  (1) &  (2)  & (3) \\
$\xi$ & $p \leq 0.01$ & $p \leq 0.05$  & $p \leq 0.1$ \\\hline
0.7 & 99 & 100 & 100 \\
0.75 & 90 & 98 & 99 \\
0.8 & 76 & 92 & 94 \\
0.85 & 58 & 73 & 81 \\
0.9 & 25 & 48 & 60 \\
0.95 & 7 & 19 & 31 \\
0.99 & 1 & 11 & 18 \\\hline
\multicolumn{4}{c}{} \\[-2ex]
\multicolumn{4}{c}{Panel B: CKSVAR} \\ \hline
 &  (1) &  (2)  & (3) \\
$\xi$ & $p \leq 0.01$ & $p \leq 0.05$  & $p \leq 0.1$ \\\hline
0.7 & 99 & 100 & 100 \\
0.75 & 93 & 99 & 100 \\
0.8 & 85 & 92 & 94 \\
0.85 & 69 & 84 & 88 \\
0.9 & 42 & 62 & 71 \\
0.95 & 20 & 33 & 45 \\
0.99 & 13 & 25 & 32 \\\hline\hline
\end{tabular}
\\
\end{centering}
\fnote{\footnotesize {Note: Panel A reports results for a KSVAR(1) with inflation, output gap, long rate, and policy rate. Panel B reports corresponding results for a CKSVAR(1) that includes shadow rates. Estimation sample is data simulated by the calibrated DSGE model for 237 quarters, which equals the length of the U.S. sample in section \ref{s: IHtest}. For each value of $\xi$, we run 100 simulations. Columns 2-4 report how many times the irrelevant hypothesis is rejected with 1 percent, 5 percent, and 10 percent significance levels, respectively.}}\end{table}

\begin{table}[H]
\begin{centering}
\caption{Test CSVAR against CKSVAR, with simulated data}%
\label{tb:excl_sr_tab2}%
\begin{tabular}
[c]{c|rrr}\hline\hline
 &  (1) &  (2)  & (3) \\
$\xi$ & $p \leq 0.01$ & $p \leq 0.05$  & $p \leq 0.1$ \\\hline
0.7 & 100 & 100 & 100 \\
0.75 & 100 & 100 & 100 \\
0.8 & 99 & 100 & 100 \\
0.85 & 91 & 97 & 97 \\
0.9 & 58 & 81 & 85 \\
0.95 & 13 & 20 & 34 \\
0.99 & 3 & 6 & 14 \\\hline\hline
\end{tabular}
\\
\end{centering}
\fnote{\footnotesize {Note: This table reports results for a CKSVAR(1) with inflation, output gap, long rate, and policy rate. Estimation sample is data simulated by the calibrated DSGE model for 237 quarters, which equals the length of the U.S. sample in section \ref{s: IHtest}. For each value of $\xi$, we run 100 simulations. Columns 2-4 report how many times the irrelevant hypothesis is rejected with 1 percent, 5 percent, and 10 percent significance levels, respectively.}}\end{table}

\paragraph{Testing no attenuation effect.}

We repeat our test of no attenuation in the response of long rates to monetary policy shocks for different sample periods for the U.S. Table \ref{tb:no_atten_vs} shows that
no attenuation hypothesis is not rejected if the same sample period of 1990q1--2012q4 is adopted as in \cite{SwansonWilliams2014}. If the sample period is extended backwards (starting from 1960q1), the null is rejected at a 5 percent significance level. These results suggest that the responses of the long rate to a monetary policy shock may differ between non-ELB and ELB regimes, depending on the sample period.

\begin{table}[H]
\begin{centering}
\caption{Test for no attenuation, various sample periods}
\label{tb:no_atten_vs}%
\begin{tabular}
[c]{lcrc}\hline\hline
sample & p & LR & $p$-val \\\hline
1990q1--2012q4 & 3 & 0.03 & 0.872\\
1960q1--2012q4 & 3 & 4.08 & 0.043\\
\hline\hline
\end{tabular}
\\
\fnote{\footnotesize Note: The estimated model is a CKSVAR(p) for the U.S. with inflation, output gap, long-term rate, and policy rate. The long rate is the 10-year government bond yields. The hypothesis is tested with different sample periods, with the first one being consistent with Swanson and Williams. LR is the value of the likelihood ratio test statistic and asymptotic $p$-values are reported. }
\end{centering}
\end{table}

\section{Choleski identification\label{app_Chole}}

In our benchmark analysis we use the combination of the ELB identification developed by \cite{Mavroeidis2019} and the sign restrictions similar to those employed by \cite{DebortoliGaliGambetti2019} to estimate the UMP parameter $\xi$. This appendix shows the results from using the standard Choleski identification. Figures \ref{fig: chol_irf_us} and \ref{fig: chol_irf_jp} reports results for the U.S. and Japan, respectively. They show that the Choleski identification generates several puzzling responses such as the instantaneous decreases in output and inflation in reaction to a negative monetary policy shock. These responses are consistent with the findings in \cite{GertlerKaradi2015} for the U.S. and \cite{kubota2022macro} for Japan, who also show similar responses when using the Choleski identification. Thus, our analysis corroborates the results on the empirically-incongruous responses from the Choleski identification, while showing that the identification based on the combination of the ELB identification and sign restrictions provides plausible responses to monetary policy shocks for the U.S. and Japan when the economy is at the ELB. See \cite{gortz2021vintage} for a discussion of the issue and some additional corroborative evidence on U.K data.

\begin{figure}[H]
\caption{Choleski identification: Impulse responses to a monetary policy shock in the U.S.}%
\label{fig: chol_irf_us}
\centering
\includegraphics[
height=3.8245in,
width=6.2289in
]{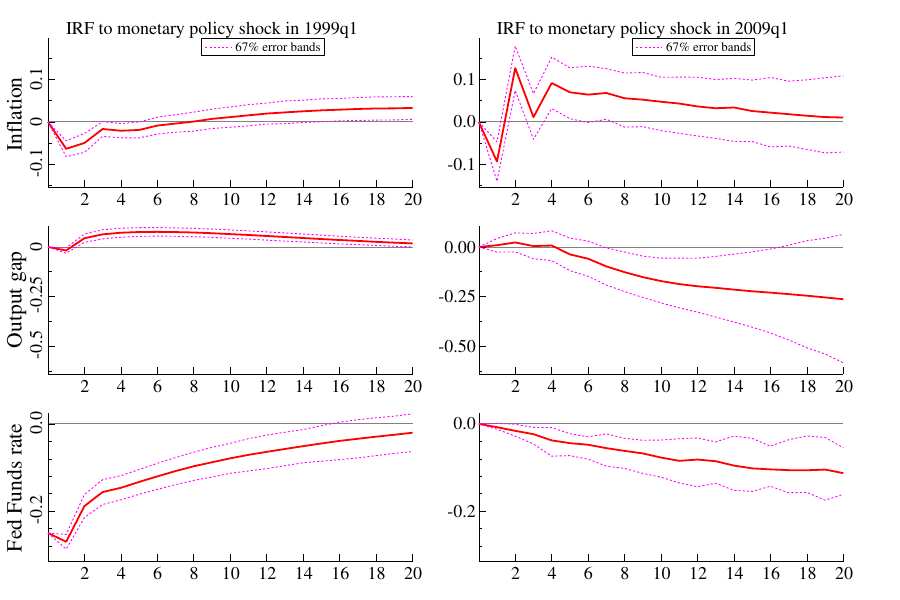}
\fnote{\footnotesize{Note: Identified sets of IRFs in 1999q1 and 2009q1 to a -25bps monetary policy shock estimated from CKSVAR(3) model in inflation, output gap, and the Federal Funds Rate for the U.S. over the period 1960q1-2019q1, identified by the Choleski restrictions that the monetary policy shock has no contemporaneous effects on inflation and output. Dotted lines show the 67 percent asymptotic error bands.}}
\end{figure}

\begin{figure}[H]
\caption{Choleski identification: Impulse responses to a monetary policy shock in Japan}%
\label{fig: chol_irf_jp}
\centering
\includegraphics[
height=3.8245in,
width=6.2289in
]{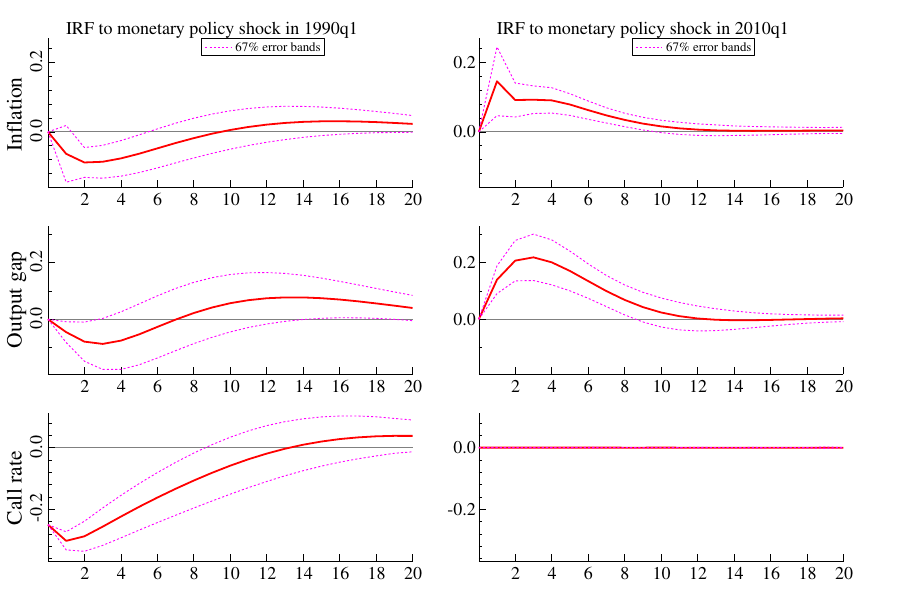}
\fnote{\footnotesize{Note: Identified sets of IRFs in 1990q1 and 2010q1 to a -25bps point monetary policy shock estimated using a CKSVAR(2) model in inflation, output gap, and the call rate for Japan over the period 1985q3-2019q1, identified by the Choleski restrictions that the monetary policy shock has no contemporaneous effects on inflation and output. Dotted lines show the 67 percent asymptotic error bands.}}
\end{figure}

\section{Shadow rates\label{app:shadow_rates}}
\begin{figure}[t]
\centering
\caption{Shadow policy rate for the U.S.}\label{fig: shadow US}
\includegraphics[width=11cm
]{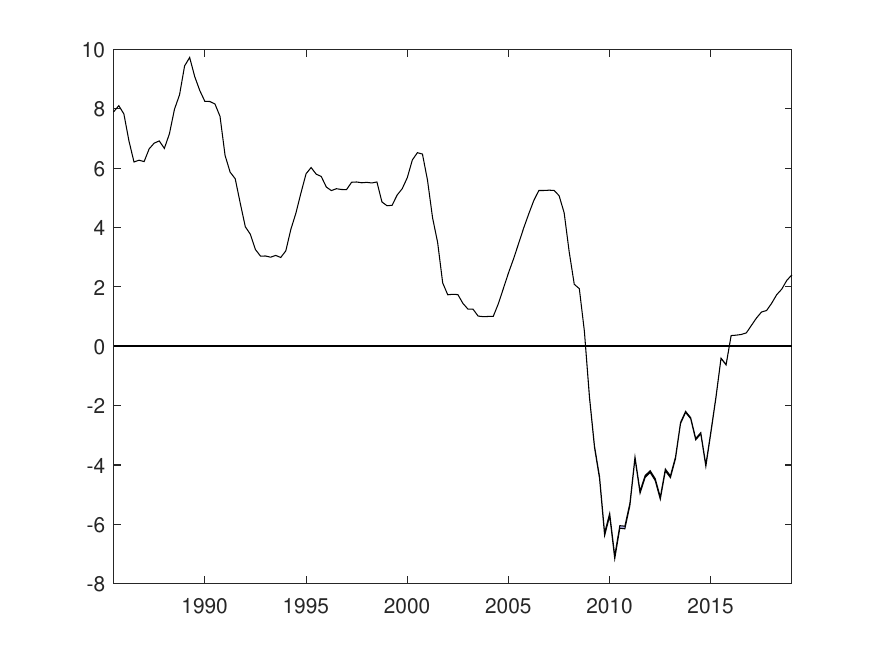}
\fnote{\footnotesize{Note: Estimated using a CKSVAR(3) model in inflation, output gap, and the Federal Funds Rate for the U.S. over the period 1960q1-2019q1 (plotted over the sub-sample 1985q3-2019q1), identified by the sign restrictions that a -25bp monetary policy shock has nonnegative effects on inflation and output and nonpositive effects on the short rate up to four quarters.}}
\end{figure}
Our analysis defines the shadow rate as the short-term interest rate that the central bank would set if there were no ELB. Thus defined, the shadow rate can be interpreted as an indicator of the desired monetary policy stance and we provide estimates of it for Japan and the U.S. Our estimates of the shadow rate do not impose the assumption that the model used to obtain them is constant across regimes, and therefore they explicitly account for the empirical relevance of the ELB over the estimation periods.

The important caveat is that the shadow rates are not identified under our present assumptions. As explained in \cite{Mavroeidis2019}, identifying the shadow rate $i_{t}^*$ in the empirical model (\ref{eq: Y2})-(\ref{eq: Y1}) in the main text requires knowledge of the parameter $\alpha$, which scales the reaction function coefficients and policy shocks during the ELB regimes and is not identified without additional information. This parameter is needed \emph{in addition} to the parameter $\xi$ that measures the overall impact effect of UMP.  In other words, to properly identify the shadow rate and interpret it as a measure of desired policy stance, we need to be able to isolate the effect of FG encapsulated by $\alpha$. This exercise is beyond the scope of the present paper.

\begin{figure}[H]
\centering
\caption{Shadow policy rate for Japan}\label{fig: shadow JP}
\includegraphics[width=11cm
]{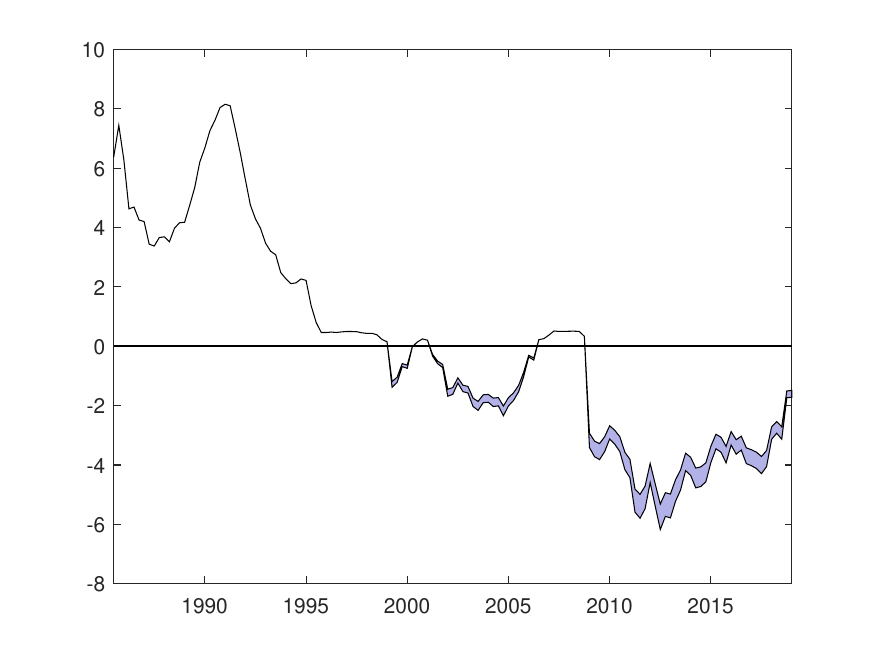}
\fnote{\footnotesize{Note: Estimated using a CKSVAR(2) model in inflation, output gap, and the call rate for Japan over the period 1985q3-2019q1, identified by the sign restrictions that a -25bp monetary policy shock has nonnegative effects on inflation and output and nonpositive effects on the short rate up to four quarters.}}
\end{figure}

With the above caveat in mind, we report identified shadow rates under the assumption of $\alpha=0$. The shadow rates are given in Figures \ref{fig: shadow US} and \ref{fig: shadow JP}  for the U.S. and Japan, respectively. Different values of $\alpha$ would scale those estimates by a factor $1+\alpha$.\footnote{Results are available on request.} Note that, even with $\alpha=0$, the shadow rate is only partially identified because it also depends on the parameter $\xi$ that is partially identified. This uncertainty due to $\xi$ is reflected in the shaded areas below the ELB in the figures.\footnote{The shadow rate is equal to the observed policy rate above the ELB, see equation (\ref{eq: Y2}) in the main text. Below the ELB, it is given by the equation $Y_{2t}^*\allowbreak = \kappa \overline{Y}_{2t} + (1-\kappa) b_t $, where $\kappa = (1+\alpha)(1-\gamma\beta)/(1-\xi\gamma\beta)$ and $\overline{Y}_{2t}$ is a ``reduced-form'' shadow rate that can be filtered from the data using the likelihood, see \cite{Mavroeidis2019}.} In the case of the U.S., the shadow rate dropped sharply soon after the onset of the global financial crisis of 2007-2008. It reached its smallest value at the beginning of 2010 and gradually recovered until the exit from the ELB in 2016.
In Japan, the behaviour of the shadow rate is different during the three ELB episodes. During the first episode, the shadow rate fell modestly. In the second episode, it exhibited a persistent decline until the beginning of 2005, followed by a quick reversal. In the third episode, which coincided with the ELB in the U.S., the decline was sharp, and followed by a second wave of declines that lasted until mid-2012. From that point on, the shadow rate exhibited a steady rise, but stayed far from zero even at the end of the sample, and remained near its trough in the second episode.

\end{document}